\documentclass[runningheads]{llncs}
\usepackage{amsmath}%
\usepackage{amssymb}%
\usepackage{stmaryrd}%
\usepackage{array}%
\usepackage{epic}%
\usepackage[fixamsmath]{mathtools}
\usepackage{cancel}
\usepackage{url}
\usepackage{bigstrut}
\usepackage{pifont}%
\usepackage{wrapfig}


\DeclareMathAlphabet\mathbb{U}{msb}{m}{n}
\DeclareMathAlphabet{\mathrm}    {OT1}{cmr}{m}{n}
\DeclareMathAlphabet{\mathrmbf}  {OT1}{cmr}{bx}{n}
\DeclareMathAlphabet{\mathrmit}  {OT1}{cmr}{m}{it}
\DeclareMathAlphabet{\mathrmbfit}{OT1}{cmr}{bx}{it}
\DeclareMathAlphabet{\mathsf}    {OT1}{cmss}{m}{n}
\DeclareMathAlphabet{\mathsfbf}  {OT1}{cmss}{bx}{n}
\DeclareMathAlphabet{\mathsfit}  {OT1}{cmss}{m}{sl}
\DeclareMathAlphabet{\mathtt}    {OT1}{cmtt}{m}{n}
\DeclareMathAlphabet{\mathttbf}  {OT1}{cmtt}{bx}{n}
\DeclareMathAlphabet{\mathttit}  {OT1}{cmtt}{m}{it}
\DeclareMathAlphabet{\mathpzc}   {OT1}{pzc}{m}{it}
\newcommand{\keywords}[1]{\par\addvspace\baselineskip\noindent\enspace\ignorespaces{\bfseries Keywords:\,}#1}
\newcommand{\comment}[1]{}
\setcounter{secnumdepth}{3}
\setcounter{tocdepth}{3}


%
%
\includeonly{intro,overview,era-data-model,fole-components,connections,conclusion}
\begin{document}

\pagestyle{headings}
\title{The {\ttfamily ERA} of {\ttfamily FOLE}: Foundation} 
\titlerunning{{\ttfamily FOLE-ERA} Foundation}  
\author{Robert E. Kent}
\institute{Ontologos}
\maketitle

\begin{abstract}
This paper discusses the representation of ontologies
in the first-order logical environment {\ttfamily FOLE}.
An ontology defines the primitives 
with which to model the knowledge resources for a community of discourse.
These primitives consist of classes, relationships and properties.
An ontology uses formal axioms to constrain the interpretation of these primitives. 
In short, an ontology specifies a logical theory.
%
%
This paper continues the discussion 
of the representation 
and interpretation
of ontologies
in the first-order logical environment {\ttfamily FOLE}.
%
%
%
The formalism and semantics of (many-sorted) first-order logic 
can be developed in both
a \emph{classification form}
and
an 
\emph{interpretation form}.
%
%
Two papers,
the current paper,
defining the concept of a structure,
and 
``The {\ttfamily ERA} of {\ttfamily FOLE}: Superstructure'',
defining the concept of a sound logic,
represent 
the 
\emph{classification form},
corresponding to ideas discussed in the ``Information Flow Framework''.
Two papers, 
``The {\ttfamily FOLE} Table'',
defining the concept of a relational table,
and 
``The {\ttfamily FOLE} Database'',
defining the concept of a relational database,
represent 
the \emph{interpretation form},
expanding on material found in the paper 
``Database Semantics''.
%
%
%
%
Although 
the classification form 
follows the entity-relationship-attribute data model of Chen,
the interpretation form incorporates the relational data model of Codd.
A fifth paper
``{\ttfamily FOLE} Equivalence''
proves that
the classification form 
is equivalent to
the interpretation form.
%
In general,
the {\ttfamily FOLE} representation uses a conceptual structures approach,
that is completely compatible with 
the theory of institutions, 
formal concept analysis 
and information flow.
%
\keywords{entity, attribute, relationship, schema, universe, structure.}
%
\end{abstract}

\tableofcontents


\newpage
\section{Introduction}\label{intro}

\subsection{Philosophy}\label{subsec:philosophy}



A conceptual model for a community 
represents the information needed by the community
---
the content, relationships and constraints 
necessary to describe the community.
The content consists of 
things of significance to the community (entities), and
characteristics of those things (attributes).
The relationships are associations between those things.
The entities are the core concepts 
that are used for representing the semantics of the community.
Entities are described by attributes, 
which are the various 
properties, characteristics, modifiers, aspects or features of entities. 
%
Hence,
the entity-relationship-attribute ({\ttfamily ERA}) formalism
is a ternary representation for knowledge,
since it uses three kinds for representation: entities, attributes and relations.
In contrast,
the first-order logical environment {\ttfamily FOLE} (Kent~\cite{kent:iccs2013})
followed the knowledge representation approach of
traditional many-sorted first-order logic ({\ttfamily MSFOL}).
Both the original 
{\ttfamily FOLE} formalism and the {\ttfamily MSFOL} formalism
are binary representations for knowledge,
since they use two kinds for representation: entities and relations.

%

However,
the first-order logical environment {\ttfamily FOLE}
can very naturally represent the {\ttfamily ERA} data model.
The idea is that the original {\ttfamily FOLE} relation represented a nexus of roles,
where the roles were played by the original {\ttfamily FOLE} entities.
In order to represent the {\ttfamily ERA} data model,
we think of the original {\ttfamily FOLE} relations as the new {\ttfamily FOLE} entities
described by a nexus of features or aspects,
where the aspects are represented by the new {\ttfamily FOLE} attributes,
which replace the original {\ttfamily FOLE} entities.
In the {\ttfamily FOLE} representation of the {\ttfamily ERA} data model,
entities and their attributes are primary notions,
whereas relationships are secondary notions that are subsumed by other constructs.
%
Some relations
(foreign keys, subtypes, sums, ...) 
have a special representation in {\ttfamily FOLE};
whereas, other relations can be resolved into concepts (entities) with a nexus of roles.

Tbl.~\ref{tbl:fole:era} shows the terminological correspondence between 
the basic components of (old/new) {\ttfamily FOLE} and {\ttfamily ERA}.
For example,
the original {\ttfamily FOLE} entity type
is renamed the new {\ttfamily FOLE} attribute type (sort),
and this corresponds to the {\ttfamily ERA} attribute type (data type);
and
the original {\ttfamily FOLE} relation instance
is renamed the new {\ttfamily FOLE} entity instance (key),
and this corresponds to the {\ttfamily ERA} entity.
\begin{table}
\begin{center}
\begin{tabular}{|r@{\hspace{5pt}$\simeq$\hspace{5pt}}l@{\hspace{5pt}$\sim$\hspace{5pt}}l|}
\hline
\multicolumn{1}{|l}{{\ttfamily FOLE} (old)}
&
\multicolumn{1}{l}{{\ttfamily FOLE} (new)}
&
\multicolumn{1}{l|}{{\ttfamily ERA}}
\\\hline\hline
relation
&
entity
&
entity
\\
entity
&
attribute
&
attribute
\\\hline
\end{tabular}
\end{center}
\caption{{\ttfamily FOLE}-{\ttfamily ERA} Correspondence}
\label{tbl:fole:era}
\end{table}
%

%
%

As commonly observed,
an entity is a thing capable of an independent existence that can be uniquely identified.
In natural language, an entity corresponds to a noun.
A relationship links entities,
and corresponds to a verb in natural language. 
Entities and relationships can both have attributes.  
In natural language, a relational attribute corresponds to a role or case.
Inclusion and subtype relationships are special kinds of relationships.
A data model can be visualized in terms of entities, relationships and attributes.
But in general,
relationships can be conceptualized by being converted to entities.

Hence,
a data model is more 
simply conceptualized in terms of entities and attributes.
When doing so,
there is an implied boundary around the conceptualization,
which converts an entity's collection of attributes
into a list
(possibly infinite in size or arity); 
a signature is the list of attribute types (sorts) associated with an entity type,
whereas
a tuple is the list of attribute instances (values) associated with an entity instance (key).
Entity types can be mapped to the associated signature, and
entity instances (primary keys) identify and can be mapped to the associated tuple
(horizontal dimension of Fig.~\ref{fig:fole:data:model}).
In general,
types classify instances.
Hence,
entity types classify keys 
and
sorts 
classify values 
(vertical dimension of Fig.~\ref{fig:fole:data:model}).
Implicit from the 
{\ttfamily ERA} data model
is an entity type system and multi-sorted logic,
which uses boolean operators and quantification,
and is defined in terms of signature-based fibers of formulas (queries) in Kent~\cite{kent:fole:era:supstruc}.

\begin{figure}
\begin{center}
\begin{tabular}{c}
\setlength{\unitlength}{0.55pt}\begin{picture}(120,120)(0,-20)
\put(0,98){\makebox(0,0){\footnotesize{$\mathrmbf{ent}$}}}
\put(120,98){\makebox(0,0){\footnotesize{$\mathrmbf{attr}$}}}
\put(-10,80){\makebox(0,0)[r]{\footnotesize{$\mathrmbf{typ}$}}}
\put(-10,0){\makebox(0,0)[r]{\footnotesize{$\mathrmbf{inst}$}}}
\put(0,0){\circle*{4.5}}
\put(120,0){\circle*{4.5}}
\put(0,80){\circle*{4.5}}
\put(120,80){\circle*{4.5}}
\put(5,10){\makebox(0,0)[l]{\scriptsize{$\mathrmit{key}$}}}
\put(115,70){\makebox(0,0)[r]{\scriptsize{$\mathrmit{sort}$}}}
\put(115,10){\makebox(0,0)[r]{\scriptsize{$\mathrmit{value}$}}}
\put(195,40){\makebox(0,0){\footnotesize{$\left.\rule{0pt}{27pt}\right\}\mathrmbfit{classification}$}}}
\put(60,-25){\makebox(0,0){\footnotesize{$\overset{\underbrace{\rule{70pt}{0pt}}}{\mathrmbfit{hypergraph}}$}}}
\thicklines
\put(5,80){\line(1,0){110}}
\put(5,0){\line(1,0){110}}
\put(0,5){\line(0,1){70}}
\put(120,5){\line(0,1){70}}
\put(69,80){\vector(1,0){0}}
\put(69,0){\vector(1,0){0}}
\end{picture}
\end{tabular}
\end{center}
\caption{{\ttfamily ERA} Data Model in {\ttfamily FOLE}}
\label{fig:fole:data:model}
\end{figure}
%

%
%

In review,
the simplest way to handle {\emph{things}} is
{\bfseries{first}} to distinguish {\emph{types}} from {\emph{instances}}
along the {\ttfamily FOLE} classification dimension in Fig.~\ref{fig:fole:data:model},
and
{\bfseries{second}} to view things (either types or instances) as participating 
in Whitehead's fundamental {\underline{prehension}} relationship
(Sowa~\cite{sowa:kr})
along the {\ttfamily FOLE} hypergraph dimension in Fig.~\ref{fig:fole:data:model},
which links 
a prehending thing called an {\emph{entity}} 
to
a prehended thing called an {\emph{attribute}}:
``an entity has an attribute''.
%
The {\ttfamily ERA} data model of {\ttfamily FOLE} uses an inclusive prehension for things;
hence, 
it is a mixed data model;
some entities are not attributes,
some attributes are not entities,
and some are both.
For any type in the overlap $\mathcal{E}\cap\mathcal{A}$,
any instance of that type is also in the overlap.
Foreign keys
are examples of
things that are both entities and attributes,
things in the overlap.


%
\begin{center}
{{\begin{tabular}{@{\hspace{-10pt}}c@{\hspace{15pt}}c@{\hspace{10pt}}c@{\hspace{10pt}}c}
\begin{tabular}{r@{\hspace{5pt}:\hspace{5pt}}l}
{\footnotesize{$\mathcal{E}$}} & entities
\\
{\footnotesize{$\mathcal{A}$}} & attributes
\end{tabular}
&
\begin{tabular}{|c|}\hline
\setlength{\unitlength}{0.6pt}
\begin{picture}(140,80)(0,0)
\put(30,40){\circle{60}}
\put(100,40){\circle{60}}
\put(30,0){\makebox(0,0){\footnotesize{$\mathcal{E}$}}}
\put(100,0){\makebox(0,0){\footnotesize{$\mathcal{A}$}}}
\put(65,-17){\makebox(0,0){\scriptsize{$\mathcal{E}{\,\cap\,}\mathcal{A}=\emptyset$}}}
\end{picture}
\\\\\hline
\end{tabular}
&
\begin{tabular}{|c|}\hline
\setlength{\unitlength}{0.6pt}
\begin{picture}(100,80)(0,0)
\put(30,40){\circle{60}}
\put(60,40){\circle{60}}
\put(30,0){\makebox(0,0){\footnotesize{$\mathcal{E}$}}}
\put(60,0){\makebox(0,0){\footnotesize{$\mathcal{A}$}}}
\put(45,-17){\makebox(0,0){\scriptsize{$\mathcal{E}{\,\cap\,}\mathcal{A}\not=\emptyset$}}}
\end{picture}
\\\\\hline
\end{tabular}
&
\begin{tabular}{|c|}\hline
\setlength{\unitlength}{0.6pt}
\begin{picture}(80,80)(0,0)
\put(40,40){\circle{60}}
\put(25,0){\makebox(0,0){\footnotesize{$\mathcal{E}$}}}
\put(40,0){\makebox(0,0){\footnotesize{\bfseries{=}}}}
\put(55,0){\makebox(0,0){\footnotesize{$\mathcal{A}$}}}
\end{picture}
\\\\\hline
\end{tabular}
\\ &&& \\
&{\itshape{Disjoint Model}}&{\itshape{Mixed Model}}&{\itshape{Unified Model}}
\end{tabular}}}
\end{center}
%


The author's ``Systems Consequence'' paper (Kent~\cite{kent:iccs2009}) 
is a very general theory and methodology for specification and inter-operation of systems of information resources. 
The generality comes from the fact that it is independent of the logical/semantic system (institution) being used. 
This is a wide-ranging theory, 
based upon ideas from 
information flow (Barwise and Seligman~\cite{barwise:seligman:97}), 
formal concept analysis (Ganter and Wille et al~\cite{ganter:wille:99}), 
the theory of institutions (Goguen et al~\cite{goguen:burstall:92}), and 
the lattice of theories notion (Sowa~\cite{sowa:kr}), 
for the integration of both formal and semantic systems independent of logical environment. 
In order to better understand the motivations of that paper and to be able more readily to apply its concepts, 
in the future it will be important to study system consequence in various particular logical/semantic systems. 
This paper aims to do just that for the logical/semantic system of relational databases. 
The paper, 
which was inspired by and which extends a recent set of papers on the theory of relational database systems 
(Spivak~\cite{spivak:sd},\cite{spivak:fdm}), 
is linked with work on the Information Flow Framework (IFF~\cite{iff}) connected with the ontology standards effort (SUO), 
since relational databases naturally embed into first order logic. 
We offer both an intuitive and a technical discussion. 
Corresponding to the notions of primary and foreign keys, 
relational database semantics takes two forms: 
a distinguished form where entities are distinguished from relations, and 
a unified form where relations and entities coincide. 
The distinguished form corresponds to the theory presented 
in the paper (Spivak~\cite{spivak:sd}). 
We extend Spivak's treatment of tables 
from the static case of a single entity classification (type specification) 
to the dynamic case of classifications varying along infomorphisms. 
Our treatment of relational databases as diagrams of tables differs from 
Spivak's sheaf theory of databases. 
The unified form, 
a special case of the distinguished form, 
corresponds to the theory presented in the paper (Spivak~\cite{spivak:fdm})). 
The unified form has a graphical presentation, 
which corresponds to the sketch theory of databases (Johnson and Rosebrugh~\cite{johnson:rosebrugh:07}) 
and the resource description framework (RDF). 
This paper, 
which is the first step to connect relational databases with system consequence, 
is concerned with the semantics of relational databases. 
Later papers will discuss various formalisms of relational databases, 
such as 
first order logic
(Kent~\cite{kent:fole:era:supstruc})
 and 
relational algebra
(Kent~\cite{kent:fole:rel:ops}
.

%
\newpage
\subsection{Knowledge Representation}

Many-sorted (multi-sorted) first-order predicate logic 
represents a community's ``universe of discourse'' as 
a heterogeneous collection of objects
by conceptually scaling
the universe according to types.
The \emph{relational model} (Codd~\cite{codd:90}) 
is an approach for the information management of a ``community of discourse''
\footnote{Examples include:
an academic discipline;
a commercial enterprise;
library science;
the legal profession;
etc.}
using the semantics and formalism of (many-sorted) first-order predicate logic. 
%
The relational model was initially discussed in two papers:
``A Relational Model of Data for Large Shared Data Banks''
by Codd \cite{codd:70} 
and
``The Entity-Relationship Model -- Toward a Unified View of Data'' 
by Chen \cite{chen:76}. 
The relational model follows many-sorted logic
by representing data in terms of many-sorted relations, 
subsets of the Cartesian product of multiple domains. 
All data is represented horizontally in terms of tuples, 
which are grouped vertically into relations. 
A database organized in terms of the relational model 
is a called relational database.
The relational model provides a method 
for modeling the data stored in a relational database 
and for defining queries upon it. 
%

\subsection{First Order Logical Environment}\label{sub:sec:fole}

\paragraph{Basics.}

The \emph{first-order logical environment} \texttt{FOLE}
is a category-theoretic representation for 
many-sorted (multi-sorted) first-order predicate logic. 
%
\footnote{Following the original discussion of {\ttfamily FOLE} (Kent~\cite{kent:iccs2013}), 
we use 
the term \emph{mathematical context} for the concept of a category,
the term \emph{passage} for the concept of a functor, and
the term \emph{bridge} for the concept of a natural transformation.
A context represents some ``species of mathematical structure''. 
A passage is a ``natural construction on structures of one species, 
yielding structures of another species'' 
(Goguen \cite{goguen:cm91}).}
%
The relational model can naturally be represented in \texttt{FOLE}.
The {\texttt{FOLE}} approach to logic, 
and hence to databases, 
relies upon two mathematical concepts:
(1) lists and (2) classifications.
Lists represent database signatures and tuples;
classifications represent data-types and logical predicates.
{\texttt{FOLE}} 
represents the header of a database table as a list of sorts, and
represents the body of a database table as a set of tuples 
classified by the header.
The notion of a list is common in category theory.
The notion of a classification is described in two books:
``Information Flow: The Logic of Distributed Systems''
by Barwise and Seligman \cite{barwise:seligman:97} and
''Formal Concept Analysis: Mathematical Foundations''
by Ganter and Wille \cite{ganter:wille:99}.

\paragraph{Architecture.}

%


A series of papers provides a rigorous mathematical basis for {\ttfamily FOLE} 
by defining 
an architectural semantics for the relational data model,
thus providing the foundation for
the formalism and semantics of first-order logical/relational database systems.
This architecture 
consists of two hierarchies of two nodes each:
the classification hierarchy
and
the interpretation hierarchy.

%

%

\begin{itemize}
%
\item
Two papers provide a precise mathematical basis for \texttt{FOLE} classification.
The current paper 
``The {\ttfamily ERA} of {\ttfamily FOLE}: Foundation'',
develops the notion of a \texttt{FOLE} \underline{\emph{structure}},
following the entity-relationship model of Chen~\cite{chen:76}.
This provides a basis for
the paper 
``The {\ttfamily ERA} of {\ttfamily FOLE}: Superstructure''
\cite{kent:fole:era:supstruc},
which develops the notion of a \texttt{FOLE} \underline{\emph{sound logic}}.
\newline
\item
Two papers provide a precise mathematical basis for \texttt{FOLE} interpretation.
Both of these papers expand on material found in the paper 
``Database Semantics''
\cite{kent:db:sem}.
The paper 
``The {\ttfamily FOLE} Table''
\cite{kent:fole:era:tbl},
develops the notion of a \texttt{FOLE} \underline{\emph{table}}
following the relational model of Codd~\cite{codd:90}.
This provided a basis for 
the paper 
``The {\ttfamily FOLE} Database''
\cite{kent:fole:era:tbl},
which develops the notion of a \texttt{FOLE} relational \underline{\emph{database}}.
\end{itemize}
%


%
\begin{flushleft}
{{\setlength{\extrarowheight}{1.6pt}
{{
{\begin{tabular}[t]{l@{\hspace{20pt}}l}
{{{\begin{minipage}{230pt}
The architecture of
{\ttfamily FOLE}
is pictured briefly
on the right
and more completely in
Fig.\,1
of the preface of
the paper
\cite{kent:fole:equiv}.
This consists of two hierarchies of two nodes each.
\comment{
\\
\vspace{-16pt}
\begin{description}
\item[{The classification hierarchy}]
on the left
defines
{\ttfamily FOLE} Structures
\cite{kent:fole:era:found}
at the bottom
and {\ttfamily FOLE} Sound Logics
\cite{kent:fole:era:supstruc}
at the top.
%
\item[\emph{The interpretation hierarchy}]
on the right
defines
{\ttfamily FOLE} Tables \cite{kent:fole:era:tbl}
at the bottom
and {\ttfamily FOLE} Databases 
(this paper)
at the top.
%
\end{description}
\vspace{-5pt}
}
The paper
``{\ttfamily FOLE} Equivalence''
\cite{kent:fole:equiv}
proves that
{\ttfamily FOLE} sound logics
are equivalent to
{\ttfamily FOLE} databases.
%
\end{minipage}}}}
&
{{\begin{tabular}{c@{\hspace{5pt}}}
\setlength{\unitlength}{0.36pt}
\begin{picture}(180,120)(-50,-20)
\put(60.5,80.5){\makebox(0,0){\tiny{$\equiv$}}}
%
\put(-60,93){\makebox(0,0){\tiny{\tt{Relational}}}}
\put(-60,73){\makebox(0,0){\tiny{\tt{Calculus}}}}
\put(180,10){\makebox(0,0){\tiny{\sf{Relational}}}}
\put(180,-10){\makebox(0,0){\tiny{\sf{Algebra}}}}
\qbezier(100,87)(60,97)(20,87)
\put(20,87){\vector(-4,-1){0}}
\qbezier(20,75)(60,65)(100,75)
\put(100,75){\vector(4,1){0}}
\put(0.3,80){\makebox(0,0){\huge{$\circ$}}}
\put(120.3,80){\makebox(0,0){\huge{$\circ$}}}
\put(1,10){\makebox(0,0){\huge{$\bullet$}}}
\put(122,10){\makebox(0,0){\huge{$\circ$}}}
%
\put(0,68){\line(0,-1){40}}
\put(120,68){\line(0,-1){40}}
\put(46,-15){\scriptsize{\ttfamily FOLE}}
\put(22,-35){\scriptsize{\textsf{architecture}}}
\end{picture}
\end{tabular}}}
\end{tabular}}
}}}}
\end{flushleft}
In the relational model there are two approaches for database management:
the relational algebra,
which defines an imperative language,
and
the relational calculus, which defines a declarative language.
The paper 
``Relational Operations in \texttt{FOLE}''
\cite{kent:fole:rel:ops}
represents relational algebra
by
expressing the relational operations of database theory in a clear and implementable representation.
The relational calculus
will be represented in \texttt{FOLE} in a future paper.

\subsection{Overview}\label{subsec:overview}

The first-order logical environment {\ttfamily FOLE} (Kent~\cite{kent:iccs2013}) is a framework for 
defining the semantics and formalism of logic and databases in an integrated and coherent fashion.
Institutions in general, and logical environments in particular, 
give equivalent heterogeneous and homogeneous representations for logical systems.
{\ttfamily FOLE} is an institution, 
since ``satisfaction is invariant under change of notation".
{\ttfamily FOLE} is a logical environment, 
since ``satisfaction respects structure linkage''.
As an institution,
the architecture of {\ttfamily FOLE} consists of
languages as indexing components,
structures to represent semantic content,
specifications to represent formal content, and
logics to combine formalism with semantics.
{\ttfamily FOLE} structures are interpreted as relational/logical databases.
%

%
%

In \S~\ref{sub:sec:era:data:model} and \S~\ref{sub:sec:fole:comps}
we show how the {\ttfamily ERA} data model is represented in {\ttfamily FOLE}
by connecting elements of the {\ttfamily ERA} data model to components of the {\ttfamily FOLE} structure concept.
\S~\ref{sub:sec:era:data:model} discusses the direct lower-level connection 
between the {\ttfamily ERA} elements (attributes, entities, relations) 
and the {\ttfamily FOLE} components (type domains and entity classifications).
%
\footnote{The theory of classifications and infomorphisms 
is discussed in the book 
{\itshape Information Flow} by Barwise and Seligman~\cite{barwise:seligman:97}.}
%
\S~\ref{sub:sec:fole:comps} discusses the abstract higher-level representation 
of the {\ttfamily ERA} data model within the {\ttfamily FOLE} architecture. 
%
\footnote{In a direct fashion,
we show how the {\ttfamily ERA} entity notion is represented by the {\ttfamily FOLE} entity classification,
and how the {\ttfamily ERA} attribute notion is represented by the {\ttfamily FOLE} type domain (attribute classification).
In an indirect fashion,
we show how the {\ttfamily ERA} relation notion is represented 
in general by the {\ttfamily FOLE} designation plus mixed data model, 
and in particular by the sequents in {\ttfamily FOLE} specifications 
(discussed further in Kent~\cite{kent:fole:era:supstruc}).}
In addition,
we give a rudimentary description of the interpretation of {\ttfamily FOLE} structures
in \S~\ref{sub:sub:sec:data:model:interp}.
In \S~\ref{sub:sec:connections}
we connect {\ttfamily FOLE} to Sowa's knowledge representation hierarchy (Sowa~\cite{sowa:kr})
and through linearization to the {\ttfamily Olog} data model (Spivak and Kent~\cite{spivak:kent:olog}).
%

This paper,
which is concerned with the {\ttfamily FOLE} foundation,
is illustrated in Fig.~\ref{fig:cxt:struc}
of
\S\,\ref{sub:sub:sec:data:model:struc}
and is centered on the mathematical context of structures.
%
\footnote{Following the original discussion of {\ttfamily FOLE} (Kent~\cite{kent:iccs2013}), 
we use 
``mathematical context'' 
for the mathematical term ``category'',
``passage'' for the term ``functor'', and
``bridge'' for the term ``natural transformation''.}
%
{\ttfamily FOLE} structures sit at the bottom of the
classification form
of the {\ttfamily FOLE} architecture.
%
The {\ttfamily FOLE} Superstructure
(Kent~\cite{kent:fole:era:supstruc}),
which is concern with the formalism and semantics of first-order logic, and
the {\ttfamily FOLE} interpretation,
which is concerned with database interpretation,
are presented in the two papers that follow this one.
Two further papers are pending on the integration of federated systems of knowledge:
one discusses integration over a fixed type domain and
the other discusses integration over a fixed universe.


\begin{table}
\begin{center}
{\scriptsize{\setlength{\extrarowheight}{1.6pt}
{\begin{tabular}{l@{\hspace{10pt}}l}
{\fbox{\begin{tabular}[t]{
|l@{\hspace{5pt}}l@{\hspace{2pt}:\hspace{8pt}}l|
}
\hline
\S\,\ref{sub:sec:fole}
&
Fig.~\ref{fig:fole:data:model}
&
{\ttfamily ERA} Data Model in {\ttfamily FOLE}
\\
\hline\hline
\S\,\ref{sub:sub:sec:data:model:rel}
&
Fig.~\ref{fig:eg}
&
Example
\\
\hline\hline
\S\,\ref{sub:sub:sec:data:model:struc}
&
Fig.~\ref{fig:fole:struc}
&
Structure
\\
&
Fig.~\ref{fig:interp:struc:obj}
&
Interpreted Structure
\\
&
Fig.~\ref{fig:fole:struc:mor}
&
Structure Morphism
\\
&
Fig.~\ref{fig:interp:struc:mor}
&
Interpreted Structure Morphism
\\
&
Fig.~\ref{fig:cxt:struc}
&
{\ttfamily FOLE} Foundation
\\
\hline\hline
\S\,\ref{sub:sub:sec:analogy}
&
Fig.~\ref{analogy}
&
Analogy
\\\hline
\end{tabular}}}
&
{\fbox{\begin{tabular}[t]{|l@{\hspace{5pt}}l@{\hspace{2pt}:\hspace{8pt}}l|}
\hline
\S\,\ref{sub:sec:fole}
&
Tbl.~\ref{tbl:fole:era}
&
{\ttfamily FOLE}-{\ttfamily ERA} Correspondence
\\\cline{1-2}
\S\ref{subsec:overview}
&
Tbl.\,\ref{tbl:figs:tbls}
&
Figures and Tables
\\\hline\hline
\S\,\ref{sub:sub:sec:analogy}
&
Tbl.~\ref{matrix}
&
Matrix of six central categories
\\\hline
\end{tabular}}}
\end{tabular}}
}}
\end{center}
\caption{Figures and Tables}
\label{tbl:figs:tbls}
\end{table}

\comment{
{\scriptsize{\begin{description}
\item[To Do:] \mbox{}
\begin{itemize}
\item 
Change the list map notation 
from
$\mathrmbf{List}(Y_{2})\xleftarrow[{\scriptscriptstyle\sum}_{g}]{\mathrmbf{List}(g)}\mathrmbf{List}(Y_{1})$
to
$\mathrmbf{List}(Y_{2})\xleftarrow[{\scriptstyle\exists}_{g}]{\mathrmbf{List}(g)}\mathrmbf{List}(Y_{1})$,
since this is in the semantic aspect not the formal aspect.
\item 
Define the relation passage
$\mathrmbf{Rel}(\mathcal{A}_{2})\xleftarrow{\mathrmbfit{rel}_{{\langle{f,g}\rangle}}}\mathrmbf{Rel}(\mathcal{A}_{2})$
for a typed domain morphism
$\mathcal{A}_{2}\xrightleftharpoons{{\langle{f,g}\rangle}}\mathcal{A}_{1}$.
\end{itemize}
\end{description}}}
}

\comment{
\section{Introduction}

\begin{description}
\item[Data Model 2,3:] 
we develop the entity-relationship-attribute data model
\item[Architecture 4--6:] 
we develop the {\ttfamily FOLE} logical environment from the ERA data model;
{\ttfamily FOLE} gives a single cohesive metatheory for the notions of logic, database, ontology;
we give a brief explanation of {\ttfamily FOLE},
including the database interpretation;
\item[Information Systems 7,8:] 
we define information systems,
which includes logical database systems;
by using a general notion of system fusion,
we define general morphisms of information systems 
\item[Formal Channels 9--12:] 
we define conservative extension and system modularization:
\begin{itemize}
\item 
we define conservative extensions along information channels,
which implies system consequence;
this serves as a definition of modular ontology:
if an ontology is formalized as a logical theory, 
a subtheory is a module 
when 
the whole ontology is a conservative extension of the subtheory
\item 
we define conservative extensions along system morphisms;
this involves
conservative extensions along the information channels within a system morphism;
\end{itemize}
\item[Semantic Channels 12-14:] 
we define fusion and system consequence along multi-universe semantic systems via portals
\end{description}
}

\comment{
\rule{1pt}{10pt}\rule[9pt]{4pt}{1pt}
Following the theory of general systems, 
an information system consists of a collection of interconnected parts called information resources and 
a collection of part-part relationships between pairs of information resources called constraints.
Formal information systems have specifications as their information resources.
Semantic information systems have logics as their information resources.
A formal information system has an underlying distributed system with languages as component parts 
(formalism flows along language links).
A semantic information system has an underlying distributed system with structures as component parts
(formalism flows along structure links).
Hence,
semantic information systems allow information flow over a semantic multiverse.

The paper ``System Consequence'' 
gave a general and abstract solution,
at the level of logical environments,
to the interoperation of information systems via the channel theory of information flow.
Since {\ttfamily FOLE} is a logical environment,
we can apply this approach to interoperability
for information systems based on first-order logic and relational databases.  
In this paper we show that 
formal {\ttfamily FOLE} systems interoperate in a general sense 
(since the context of {\ttfamily FOLE} languages has all sums), whereas
semantic {\ttfamily FOLE} systems interoperate in a restricted sense 
(since the context of {\ttfamily FOLE} structures has sums over fixed universes).
However,
we show that distributed databases in a semantic multiverse are interoperable 
when each defines a portal into a common universe.

The ideas of conservative extensions and modular information systems
can be formulated in terms of channels and system morphisms
at the general and abstract level of logical environments.
By illustrating these ideas
in the {\ttfamily FOLE} logical environment,
we capture the idea of modular federated databases.
\rule[0pt]{4pt}{1pt}\rule{1pt}{10pt}
}

\comment{
\subsection{Synopsis}\label{sub:sec:synopsis}

\begin{flushleft}
{\scriptsize{\setlength{\extrarowheight}{1.6pt}
{{
{\begin{tabular}[t]{l@{\hspace{20pt}}l}
{{\footnotesize{\begin{minipage}{230pt}
The architecture of
{\ttfamily FOLE}
is pictured briefly
on the right
and more completely in
Fig.\,1 
of the preface of 
the paper
\cite{kent:fole:equiv}.
This consists of two hierarchies of two nodes each.
\\
\vspace{-16pt}
\begin{description}
\item[\emph{The classification hierarchy}] 
on the left
defines
{\ttfamily FOLE} Structures 
\cite{kent:fole:era:found}
(this paper)
at the bottom
and {\ttfamily FOLE} Sound Logics
\cite{kent:fole:era:supstruc}
at the top.
%
\item[\emph{The interpretation hierarchy}]
on the right
defines 
{\ttfamily FOLE} Tables \cite{kent:fole:era:tbl}
at the bottom
and {\ttfamily FOLE} Databases \cite{kent:fole:era:db}
at the top.
%
\end{description}
\vspace{-5pt}
{\ttfamily FOLE} Equivalence
\cite{kent:fole:equiv}
proves that 
{\ttfamily FOLE} Sound Logics
and 
are equivalent to
{\ttfamily FOLE} Databases.
\end{minipage}}}}
&
{{\begin{tabular}{c@{\hspace{5pt}}}
\setlength{\unitlength}{0.36pt}
\begin{picture}(180,120)(-50,-10)
\put(60.5,80.5){\makebox(0,0){\tiny{$\equiv$}}}
%
\put(-60,93){\makebox(0,0){\tiny{\tt{Relational}}}}
\put(-60,73){\makebox(0,0){\tiny{\tt{Calculus}}}}
\put(180,10){\makebox(0,0){\tiny{\sf{Relational}}}}
\put(180,-10){\makebox(0,0){\tiny{\sf{Operations}}}}
\qbezier(100,87)(60,97)(20,87)
\put(20,87){\vector(-4,-1){0}}
\qbezier(20,75)(60,65)(100,75)
\put(100,75){\vector(4,1){0}}
\put(0.3,80){\makebox(0,0){\huge{$\circ$}}}
\put(120.3,80){\makebox(0,0){\huge{$\circ$}}}
\put(2,10){\makebox(0,0){\huge{$\bullet$}}}
\put(122,10){\makebox(0,0){\huge{$\circ$}}}
%
\put(0,68){\line(0,-1){40}}
\put(120,68){\line(0,-1){40}}
\put(46,-15){\scriptsize{\ttfamily FOLE}}
\put(22,-35){\scriptsize{\textsf{architecture}}}
\end{picture}
\end{tabular}}}
\end{tabular}}
}}}}
\end{flushleft}
}


\section{{\ttfamily ERA} Data Model}\label{sub:sec:era:data:model}

\subsection{Attributes.}\label{sub:sub:sec:data:model:attr}

%
In the {\ttfamily ERA} data model,
attributes are represented by a typed domain
consisting of a collection of data types.
In {\ttfamily FOLE},
a typed domain is represented by an attribute classification 
$\mathcal{A} = {\langle{X,Y,\models_{\mathcal{A}}}\rangle}$
consisting of a set of attribute types (sorts) $X$,
a set of attribute instances (data values) $Y$ and
an attribute classification relation $\models_{\mathcal{A}}{\,\subseteq\,}X{\times}Y$. 
For each sort (attribute type) $x \in X$,
the data domain of that type is the $\mathcal{A}$-extent
$\mathcal{A}_{x}=\mathrmbfit{ext}_{\mathcal{A}}(x) = \{ y \in Y \mid y \models_{\mathcal{A}} x \}$.
The passage
$X \xrightarrow{\mathrmbfit{ext}_{\mathcal{A}}} {\wp}Y$
maps a sort $x{\,\in\,}X$ to its data domain ($\mathcal{A}$-extent) $\mathcal{A}_{x}{\;\subseteq\;}Y$.

An $X$-signature (header) is a sort list ${\langle{I,s}\rangle}$, 
where $I\xrightarrow{\,s\,}X$ is a map from an indexing set (arity) 
$I$ to the set of 
sorts $X$.
A more visual representation
for this signature is 
$({\cdots\,}s_{i}{\,\cdots}{\,\mid\,}i{\,\in\,}I)$.
The mathematical context of $X$-signatures is
$\mathrmbf{List}(X)$.
\footnote{$\mathrmbf{List}(X)$ is the comma context
$\mathrmbf{List}(X) = (\mathrmbf{Set}{\,\downarrow\,}X)$
of $X$-signatures,
where an object ${\langle{I,s}\rangle}$ is an $X$-signature 
and a morphism ${\langle{I',s'}\rangle} \xrightarrow{h} {\langle{I,s}\rangle}$
is 
an arity function $I'\xrightarrow{h}I$
that preserves signatures $h{\,\cdot\,}s = s'$;
visually,
$({\cdots\,}s'_{i'}{\,\cdots}{\,\mid\,}i'{\,\in\,}I')=
({\cdots\,}s_{h(i')}{\,\cdots}{\,\mid\,}i'{\,\in\,}I')$.}
\footnote{The header for a database table is a signature 
(list of sorts)
${\langle{I,s}\rangle}{\;\in\;}\mathrmbf{List}(X)$. 
Pairs $(i : s_{i})$ from a signature ${\langle{I,s}\rangle}$ are called attributes
(see \S~\ref{sub:sub:sec:data:model:sch}).
Examples of attributes are `{\ttfamily (name : Str)}', `{\ttfamily (age : Natno)}'.} 
A $Y$-tuple (row) is an list of data values ${\langle{J,t}\rangle}$, 
where $J\xrightarrow{\,t\,}Y$ is a map 
from an indexing set (arity) $J$ 
to the set of data values 
$Y$.
A more visual representation
for this tuple is 
$({\cdots\,}t_{j}{\,\cdots}{\,\mid\,}j{\,\in\,}J)$.
The mathematical context of $Y$-tuples is
$\mathrmbf{List}(Y)$.
%
The attribute list classification 
$\mathrmbf{List}(\mathcal{A}) = {\langle{\mathrmbf{List}(X),\mathrmbf{List}(Y),\models_{\mathrmbf{List}(\mathcal{A})}}\rangle}$
has $X$-signatures as types and
$Y$-tuples as instances,
with classification by common arity and universal $\mathcal{A}$-classification:
a $Y$-tuple ${\langle{J,t}\rangle}$ 
is classified by 
an $X$-signature ${\langle{I,s}\rangle}$ 
when
$J = I$ and
$t_{k} \models_{\mathcal{A}} s_{k}$
for all $k \in J = I$.

\subsection{Entities.}\label{sub:sub:sec:data:model:ent}

We distinguish between an entity instance and an entity type. 
An entity type is a category of existence;
entity types classify entity instances.
There might be many instances of an entity type,
and an entity instance can be classified by many types.
An entity instance (entity, for short) is also called an object.
Every entity is uniquely identified by a key.
In {\ttfamily FOLE},
entities and their types are collected together locally in an entity classification 
$\mathcal{E} = {\langle{R,K,\models_{\mathcal{E}}}\rangle}$
consisting of a set of entity types $R$,
a set of entity instances (keys) $K$ and
an entity classification relation $\models_{\mathcal{E}}{\subseteq\,}R{\times}K$. 
In the database interpretation
in \S\ref{sub:sub:sec:data:model:interp},
each entity type $r \in R$ is regarded to be the name for a relation (or table) in the database:
for each entity type (relation name) $r \in R$,
the set of primary keys for that type is the $\mathcal{E}$-extent
$\mathcal{E}_{r}=\mathrmbfit{ext}_{\mathcal{E}}(r) = \{ k \in K \mid k \models_{\mathcal{E}} r \}$.

\subsection{Relations.}\label{sub:sub:sec:data:model:rel}

Here we discuss how the relational aspect of the {\ttfamily ERA} data model is handled in {\ttfamily FOLE}.
Some relations are special.
One example is subtyping,
which specifies that one category of existence is more general than another.
This arises when representing the taxonomic aspect of ontologies.
Subtyping is handled by the binary sequents
\footnote{A sequent $\varphi{\;\vdash\;}\psi$ expresses 
interpretation widening
between formulas.}
in {\ttfamily FOLE} specifications
(discussed further in the 
 (Kent~\cite{kent:fole:era:supstruc})). 
Some many-to-one relationships can be represented as attributes.
But in general,
many-to-many relationships are represented in {\ttfamily FOLE} as entities,
whose attributes,
each of which plays a thematic role for the relationship,
may be other entities.
\footnote{As an example,
the ``marriage'' binary relation
can be represented as a {\itshape{Marriage}} entity
with {\itshape{wife}} and {\itshape{husband}} attributes that are themselves {\itshape{Person}} entities.}
%


Consider the example (Fig.~\ref{fig:eg}) of a simple entity-relationship-attribute diagram.
Here we have 
three entities (represented by rectangles), 
two relationships (represented by diamonds) and 
numerous attributes (represented by ovals).
The {\itshape{works\_on}} relationship is many-to-many,
and so we can represent this in {\ttfamily FOLE} as an entity
type {\itshape{Activity}}
with four attributes:
{\itshape{entry\_date}} of sort {\itshape{Date}},
{\itshape{job\_descr}} of sort {\itshape{String}},
{\itshape{employee}} of sort {\itshape{Employee}}, and
{\itshape{project}} of sort {\itshape{Project}}.
Note that attributes {\itshape{employee}} and {\itshape{project}} 
are foreign keys of the {\itshape{Activity}} entity.
\footnote{The {\itshape{Employee}} type,
which plays the {\itshape{employee}} thematic role for the {\itshape{works\_on}} relationship,
is both an entity type and an attribute type (sort);
any value in the {\itshape{Employee}} data domain
is a key of 
the {\itshape{Employee}} entity
and a foreign key
of the {\itshape{Activity}} entity.}
Since the {\itshape{works\_for}} relationship is many-to-one
without any attributes of its own,
we can represent this as an attribute called {\itshape{dept}}
of sort {\itshape{Department}}.
This is a foreign key
of the {\itshape{Employee}} entity.
%
\begin{figure}
\begin{center}
{{\begin{tabular}{c}
\setlength{\unitlength}{0.7pt}
\begin{picture}(360,250)(-45,15)
\put(160,250){\begin{picture}(50,18)(25,9)
\put(25,9){\makebox(0,0){\scriptsize{$\text{\itshape{name}}$}}}
\put(25,9){\oval(50,18)}
\put(25,0){\line(1,-1){26}}
\end{picture}}
\put(240,250){\begin{picture}(50,18)(25,9)
\put(25,9){\makebox(0,0){\scriptsize{$\text{\itshape{id\_num}}$}}}
\put(25,9){\oval(50,18)}
\put(25,0){\line(-1,-1){26}}
\end{picture}}
\put(200,200){\begin{picture}(60,30)(30,15)
\put(30,15){\makebox(0,0){\scriptsize{$\text{\itshape{Employee}}$}}}
\put(0,30){\line(1,0){60}}
\put(0,0){\line(1,0){60}}
\put(0,0){\line(0,1){30}}
\put(60,0){\line(0,1){30}}
\end{picture}}
\put(40,200){\begin{picture}(60,40)(30,20)
\put(30,21){\makebox(0,0){\scriptsize{$\text{\itshape{works\_for}}$}}}
\put(0,20){\line(3,2){30}}
\put(0,20){\line(3,-2){30}}
\put(60,20){\line(-3,2){30}}
\put(60,20){\line(-3,-2){30}}
\put(60,20){\line(1,0){100}}
\put(30,0){\line(0,-1){25}}
\put(154,25){\makebox(0,0){\scriptsize{$\text{\itshape{n}}$}}}
\put(35,-19){\makebox(0,0){\scriptsize{$\text{\itshape{1}}$}}}
\end{picture}}
\put(40,140){\begin{picture}(60,30)(30,15)
\put(30,15){\makebox(0,0){\scriptsize{$\text{\itshape{Department}}$}}}
\put(-3,30){\line(1,0){66}}
\put(-3,0){\line(1,0){66}}
\put(-3,0){\line(0,1){30}}
\put(63,0){\line(0,1){30}}
\put(-30,-35){\begin{picture}(50,18)(25,9)
\put(25,9){\makebox(0,0){\scriptsize{$\text{\itshape{name}}$}}}
\put(25,9){\oval(50,18)}
\put(25,18){\line(2,1){51}}
\end{picture}}
\put(30,-35){\begin{picture}(50,18)(25,9)
\put(25,9){\makebox(0,0){\scriptsize{$\text{\itshape{id\_num}}$}}}
\put(25,9){\oval(50,18)}
\put(25,18){\line(0,1){26}}
\end{picture}}
\put(90,-35){\begin{picture}(50,18)(25,9)
\put(25,9){\makebox(0,0){\scriptsize{$\text{\itshape{location}}$}}}
\put(25,9){\oval(50,18)}
\put(25,18){\line(-2,1){51}}
\end{picture}}
\end{picture}}
\put(200,140){\begin{picture}(60,40)(30,20)
\put(30,21){\makebox(0,0){\scriptsize{$\text{\itshape{works\_on}}$}}}
\put(0,20){\line(3,2){30}}
\put(0,20){\line(3,-2){30}}
\put(60,20){\line(-3,2){30}}
\put(60,20){\line(-3,-2){30}}
\put(30,40){\line(0,1){25}}
\put(30,0){\line(0,-1){25}}
\put(23,60){\makebox(0,0){\scriptsize{$\text{\itshape{m}}$}}}
\put(23,-20){\makebox(0,0){\scriptsize{$\text{\itshape{n}}$}}}
\put(100,40){\begin{picture}(60,18)(30,9)
\put(30,9){\makebox(0,0){\scriptsize{$\text{\itshape{entry\_date}}$}}}
\put(30,9){\oval(60,18)}
\put(0,9){\line(-1,-1){16}}
\end{picture}}
\put(100,0){\begin{picture}(60,18)(30,9)
\put(30,9){\makebox(0,0){\scriptsize{$\text{\itshape{job\_descr}}$}}}
\put(30,9){\oval(60,18)}
\put(0,9){\line(-1,1){16}}
\end{picture}}
\end{picture}}
\put(200,80){\begin{picture}(60,30)(30,15)
\put(30,15){\makebox(0,0){\scriptsize{$\text{\itshape{Project}}$}}}
\put(0,30){\line(1,0){60}}
\put(0,0){\line(1,0){60}}
\put(0,0){\line(0,1){30}}
\put(60,0){\line(0,1){30}}
\put(-30,-35){\begin{picture}(50,18)(25,9)
\put(25,9){\makebox(0,0){\scriptsize{$\text{\itshape{name}}$}}}
\put(25,9){\oval(50,18)}
\put(25,18){\line(2,1){51}}
\end{picture}}
\put(30,-35){\begin{picture}(50,18)(25,9)
\put(25,9){\makebox(0,0){\scriptsize{$\text{\itshape{id\_num}}$}}}
\put(25,9){\oval(50,18)}
\put(25,18){\line(0,1){26}}
\end{picture}}
\put(90,-35){\begin{picture}(50,18)(25,9)
\put(25,9){\makebox(0,0){\scriptsize{$\text{\itshape{budget}}$}}}
\put(25,9){\oval(50,18)}
\put(25,18){\line(-2,1){51}}
\end{picture}}
\end{picture}}
\end{picture}
\end{tabular}}}
\end{center}
\caption{Example}
\label{fig:eg}
\end{figure}
%

\newpage
\section{{\ttfamily FOLE} Components}\label{sub:sec:fole:comps}

\subsection{Schema.}\label{sub:sub:sec:data:model:sch}

The type aspect of the {\ttfamily ERA} data model is gathered together into a schema.
A \emph{schema} $\mathcal{S}={\langle{R,\sigma,X}\rangle}$
consists of
a set of sorts (attribute types) $X$, 
a set of entity types $R$ and
a signature map $R \xrightarrow{\sigma} \mathrmbf{List}(X)$.
Within the schema $\mathcal{S}$,
we think of each $r \in R$ as being 
an entity type 
that is locally described by the associated $X$-signature 
$\sigma(r) = {\langle{I,s}\rangle} \in \mathrmbf{List}(X)$.
\footnote{There is an associated arity function
$R\xrightarrow[\sigma{\,\circ\,}\mathrmbfit{set}]{\alpha}\mathrmbf{Set}
:r\xmapsto{\sigma}{\langle{I,s}\rangle}\xmapsto{\mathrmbfit{set}}I$.}
A more visual representation
for this signature mapping is 
$r\xmapsto{\sigma}({\cdots\,}s_{i}{\,\cdots}{\,\mid\,}i{\,\in\,}I)$.
An {\ttfamily ERA}-style visualization might be
\begin{tabular}{c}
\begin{picture}(60,10)(0,2)
\thicklines
\put(0,0){\framebox(18,10){\scriptsize{$r$}}}
\put(30,8){\makebox(0,0){\footnotesize{$\xrightarrow{\;\;i\;\;}$}}}
\put(50,5){\oval(20,10)}\put(50,5){\makebox(0,0){\scriptsize{$s_{i}$}}}
\end{picture}
\end{tabular},
where 
the box encloses the entity type $r{\,\in\,}R$, 
the oval encloses the attribute type $s_{i}{\,\in\,}X$,
and the arrow is labeled with the index $i{\,\in\,}I$.
For example,
\begin{tabular}{c}
\begin{picture}(85,10)(0,2)
\thicklines
\put(0,0){\framebox(25,10){\scriptsize{$\mathtt{Person}$}}}
\put(40,7){\makebox(0,0){\footnotesize{$\xrightarrow{\;\mathtt{name}\;}$}}}
\put(70,5){\oval(28,10)}\put(70,5){\makebox(0,0){\scriptsize{$\mathtt{String}$}}}
\end{picture}
\end{tabular} or
\begin{tabular}{c}
\begin{picture}(105,10)(0,2)
\thicklines
\put(0,0){\framebox(35,10){\scriptsize{$\mathtt{Employee}$}}}
\put(50,7.2){\makebox(0,0){\footnotesize{$\xrightarrow{\;\mathtt{dept}\;}$}}}
\put(84,5.2){\oval(41,10)}\put(84,5){\makebox(0,0){\scriptsize{$\mathtt{Department}$}}}
\end{picture}
\end{tabular}.

The entity type $r$ in the {\ttfamily ERA} data model 
corresponds to
the 
relation symbol $r$ in {\ttfamily FOLE}/{\ttfamily MSFOL}.
Either representation is a kind of nexus. 
A schema corresponds to a multi-sorted first-order logical language in
the {\ttfamily FOLE}/{\ttfamily MSFOL} approach to knowledge representation.
%
\footnote{Formulas based on relation symbols can be inductively defined, 
thus forming extended schemas 
(Kent~\cite{kent:db:sem}).
Terms composed of function symbols can be added as constraints between formulas.}
In the database interpretation of {\ttfamily FOLE}
(Kent~\cite{kent:fole:era:db}),
we think of $r$ as being a relation name with associated header $\sigma(r) = {\langle{I,s}\rangle}$.
%
\begin{sloppypar}
We formally link schemas with morphisms.
A \emph{schema morphism}  
$
\mathcal{S}_{2} = {\langle{R_{2},\sigma_{2},X_{2}}\rangle} \stackrel{{\langle{r,f}\rangle}}{\Longrightarrow}
{\langle{R_{1},\sigma_{1},X_{1}}\rangle} = \mathcal{S}_{1}
$
from schema $\mathcal{S}_{2}$ to schema $\mathcal{S}_{1}$
consists of 
an sort 
function $f : X_{2} \rightarrow X_{1}$
and an entity type function $r : R_{2} \rightarrow R_{1}$, 
which preserve signatures 
by satisfying the condition
$r \cdot \sigma_{1} = \sigma_{2} \cdot {\scriptstyle\sum}_{f}$.
\end{sloppypar}
\begin{center}
\begin{tabular}{c}
\setlength{\unitlength}{0.5pt}
\begin{picture}(290,110)(10,-20)
\put(0,80){\makebox(0,0){\footnotesize{$r_{2} \in R_{2}$}}}
\put(60,0){\makebox(0,0)[r]{\footnotesize{${\langle{I,s}\rangle}=({\cdots\,}s_{i}{\,\cdots}{\,\mid\;}i{\,\in\,}I)$}}}
\put(0,-22){\makebox(0,0){\footnotesize{$s_{i}{\,\in\,}X_{2}$}}}
\put(260,80){\makebox(0,0){\footnotesize{$r(r_{2}) \in R_{1}$}}}
\put(188,0){\makebox(0,0)[l]{\footnotesize{
$({\cdots\,}f(s_{i}){\,\cdots}{\,\mid\;}i{\,\in\,}I)={\sum}_{f}(I,s)$}}}
\put(260,-22){\makebox(0,0){\footnotesize{$f(s_{i}){\,\in\,}X_{1}$}}}
\put(0.2,49){\makebox(0,0){\scriptsize{$-$}}}
\put(0,40){\makebox(0,0){\normalsize{$\downarrow$}}}
\put(-10,40){\makebox(0,0)[r]{\scriptsize{$\sigma_{2}$}}}
\put(260.2,49){\makebox(0,0){\scriptsize{$-$}}}
\put(260,40){\makebox(0,0){\normalsize{$\downarrow$}}}
\put(270,40){\makebox(0,0)[l]{\scriptsize{$\sigma_{1}$}}}
\put(130,80){\makebox(0,0){\normalsize{$\xmapsto{\;\;r\;\;}$}}}
\put(130,10){\makebox(0,0){\normalsize{$\xmapsto{{\scriptstyle\sum}_{f}}$}}}
\put(-140,40){\makebox(0,0)[r]{\footnotesize{$\mathcal{S}_{2}\left\{\rule{0pt}{33pt}\right.$}}}
\put(445,40){\makebox(0,0)[l]{\footnotesize{$\left.\rule{0pt}{33pt}\right\}\mathcal{S}_{2}$}}}
\end{picture}
\end{tabular}
\end{center}
Let $\mathrmbf{Sch}$ denote the mathematical context of schemas and their morphisms.

\newpage
\subsection{Universe.}\label{sub:sub:sec:data:model:univ}

The instance aspect of the {\ttfamily ERA} data model is gathered together into a universe.
A \emph{universe} $\mathcal{U}={\langle{K,\tau,Y}\rangle}$
consists of
a set of values (attribute instances) $Y$, 
a set of keys (entity instances) $K$ and
a tuple map $K \xrightarrow{\tau} \mathrmbf{List}(Y)$.
Within the universe $\mathcal{U}$,
we think of each key $k \in K$ as being 
an identifier or name for an object
that is locally described by the associated tuple of values 
$\tau(k) = {\langle{J,t}\rangle}\in\mathrmbf{List}(Y)$.
A more visual representation
for this tuple mapping is 
$k\xmapsto{\tau}({\cdots\,}t_{j}{\,\cdots}{\,\mid\,}j{\,\in\,}J)$.
Note that, no typing has been mentioned here and no typing restrictions are required.
In a universe by itself,
we do not require the data values $t_{j}$ to be members of any special data-types.

An element of a universe $\mathcal{U}={\langle{K,\tau,Y}\rangle}\in\mathrmbf{Univ}$
is a key $\bigl(k\xmapsto{\tau}{\langle{J,t}\rangle}\bigr)$ 
with associated list.
We can think of such universe elements
as object descriptions without attached typing 
\underline{or}
as tuples untethered from a database table.
They develop meaning by being classified by schema elements 
$\bigl(r\xmapsto{\sigma}{\langle{I,s}\rangle}\bigr)$ 
in a structure
(\S~\ref{sub:sub:sec:data:model:struc}).
\footnote{They are somewhat like genes (bits of DNA) without the genomic structure that provides interpretation.}
Hence,
a {\ttfamily FOLE} universe
is like the key$-$value${\cdot}$list store at the heart of Google's Spanner database (Google~\cite{google:spanner}):
{\itshape\footnotesize{\begin{quotation}
\noindent
``Spanner's data model is not purely relational, 
in that rows must have names. 
More precisely, every table is required to have an ordered set of one or more primary-key columns. 
This requirement is where Spanner still looks like a key-value store: 
the primary keys form the name for a row, and 
each table defines a mapping from the primary-key columns to the non-primary-key columns.''
\footnote{When the universe $\mathcal{U}$ is the instance aspect of a {\ttfamily FOLE} structure $\mathcal{M}$
with typed domain $\mathcal{A}$,
in the database interpretation of that structure
(\S~\ref{sub:sub:sec:data:model:interp}
and
Kent~\cite{kent:fole:era:tbl}),
we think of the entity instance $k$ as being a primary key
that indexes a row $\tau(k) = {\langle{I,t}\rangle}$ in the table associated 
with the relation symbol $r \in R$ with associated header $\sigma(r) = {\langle{I,s}\rangle}$.
A more visual representation
for this tuple mapping is 
$k\xmapsto{\tau}({\cdots\,}t_{i}{\,\cdots}{\,\mid\,}i{\,\in\,}I,t_{i}{\,\in\,}\mathcal{A}_{s_{i}})$,
where $\mathcal{A}_{s_{i}}$ is the data-type for sort $s_{i}{\,\in\,}X$.
Here,
we do require the data values $t_{i}$ to be members of the special data-types $\mathcal{A}_{s_{i}}$.}
\end{quotation}}}

We semantically link universes with morphisms.
A \emph{universe morphism}  
$\mathcal{U}_{2} = {\langle{K_{2},\tau_{2},Y_{2}}\rangle} 
\xLeftarrow{{\langle{k,g}\rangle}}
{\langle{K_{1},\tau_{1},Y_{1}}\rangle} = \mathcal{U}_{1}$
consists of 
a value (attribute instance) function $Y_{2} \xleftarrow{g} Y_{1}$
and a key (entity instance) function $K_{2} \xleftarrow{k} K_{1}$, 
which preserve tuples (instance lists) by satisfying the condition
$k \cdot \tau_{2} = \tau_{1} \cdot {\scriptstyle\sum}_{g}$.
\begin{center}
\begin{tabular}{c}
\setlength{\unitlength}{0.5pt}
\begin{picture}(290,125)(-45,-30)
\put(0,80){\makebox(0,0){\footnotesize{$k(k_{1}) \in K_{2}$}}}
\put(60,0){\makebox(0,0)[r]{\footnotesize{${\sum}_{g}(J,t)=({\cdots\,}g(t_{j}){\,\cdots}{\,\mid\;}j{\,\in\,}J)$}}}
\put(0,-22){\makebox(0,0){\footnotesize{$g(t_{j}){\,\in\,}Y_{2}$}}}
\put(260,80){\makebox(0,0){\footnotesize{$k_{1} \in K_{1}$}}}
\put(188,0){\makebox(0,0)[l]{\footnotesize{
$({\cdots\,}t_{j}{\,\cdots}{\,\mid\;}j{\,\in\,}J)={\langle{J,t}\rangle}$}}}
\put(260,-22){\makebox(0,0){\footnotesize{$t_{j}{\,\in\,}Y_{1}$}}}
\put(0.2,49){\makebox(0,0){\scriptsize{$-$}}}
\put(0,40){\makebox(0,0){\normalsize{$\downarrow$}}}
\put(-10,40){\makebox(0,0)[r]{\scriptsize{$\tau_{2}$}}}
\put(260.2,49){\makebox(0,0){\scriptsize{$-$}}}
\put(260,40){\makebox(0,0){\normalsize{$\downarrow$}}}
\put(270,40){\makebox(0,0)[l]{\scriptsize{$\tau_{1}$}}}
\put(130,80){\makebox(0,0){\normalsize{$\stackrel{k}{\longmapsfrom}$}}}
\put(130,10){\makebox(0,0){\normalsize{$\stackrel{{\scriptstyle\sum}_{g}}{\longmapsfrom}$}}}
\put(-188,40){\makebox(0,0)[r]{\footnotesize{$\mathcal{U}_{2}\left\{\rule{0pt}{33pt}\right.$}}}
\put(396,40){\makebox(0,0)[l]{\footnotesize{$\left.\rule{0pt}{33pt}\right\}\mathcal{U}_{1}$}}}
\end{picture}
\end{tabular}
\end{center}
Let $\mathrmbf{Univ}$ denote the mathematical context of universes and their morphisms.

%

%
\comment{
%
\paragraph{Foreign Keys.}
%
Foreign keys are most easily represented by assuming a unified data model,
where entities coincide with types $\mathcal{E}=\mathcal{A}$.
The map
$\mathsf{entity}\xrightarrow{\rho_{\mathsf{index}}}\mathsf{attribute}$,
for either types or instances,
corresponds to the notion of thematic roles or case relations 
used in knowledge representation and natural language processing.
Consider the type aspect of things
with unified entity and attribute types $R=X$.
Let ${\langle{X,\sigma,X}\rangle}$ be a schema
with signature map $X\xrightarrow{\sigma}\mathrmbf{List}(X)$.
Let $r\in{X}$ be an entity type
with signature
$\sigma(r)={\langle{I,s}\rangle}\in\mathrmbf{List}(X)$,
where
$I\xrightarrow{s}X$
is a list of sorts (attribute types).
To distinguish a primary key from a foreign key
we need to distinguish one index $i\in{I}$
such that $s(i)=r$.
The subset map
$1\xrightarrow{i}X$
---
$X$-signature morphism ${\langle{1,r}\rangle}\xrightarrow{i}{\langle{I,s}\rangle}$
---
defines a projection function
$\mathrmbf{List}(X)\xrightarrow{\pi_{i}}X$
such that
$X\xrightarrow{\sigma}\mathrmbf{List}(X)\xrightarrow{\pi_{i}}X$
is the identity.

Extend to an
$X$-signature morphism ${\langle{I',s'}\rangle}\xrightarrow{h}{\langle{I,s}\rangle}$.
}

\newpage
\subsection{Structure.}\label{sub:sub:sec:data:model:struc}

The complete {\ttfamily ERA} data model is incorporated into the notion of
a (model-theoretic) structure in the {\ttfamily FOLE} representation of knowledge. 

%
\paragraph{Structures.}
%
A {\ttfamily FOLE} \emph{structure}
$\mathcal{M} = {\langle{\mathcal{E},\sigma,\tau,\mathcal{A}}\rangle}$
is a hypergraph of classifications 
(Fig.~\ref{fig:fole:struc})
--- a two-dimensional construct with the following components:
\begin{center}
{\footnotesize{$\begin{array}{r@{\hspace{8pt}}r@{\hspace{8pt}}r@{\hspace{8pt}=\hspace{8pt}}l@{\hspace{8pt}=\hspace{8pt}}l}
\text{attribute classification:}
&\text{\emph{typed domain}}
& \mathcal{A}
& \mathrmbfit{attr}(\mathcal{M}) 
& {\langle{X,Y,\models_{\mathcal{A}}}\rangle}
\\
\text{entity classification:}
&
& \mathcal{E}
& \mathrmbfit{ent}(\mathcal{M}) 
& {\langle{R,K,\models_{\mathcal{E}}}\rangle}
\\
\text{type hypergraph:}
& \text{\emph{schema}}
& \mathcal{S}
& \mathrmbfit{sch}(\mathcal{M})
& {\langle{R,\sigma,X}\rangle}
\\
\text{instance hypergraph:}
&\text{\emph{universe}}
& \mathcal{U}
& \mathrmbfit{univ}(\mathcal{M}) 
& {\langle{K,\tau,Y}\rangle}
\end{array}$}}
\end{center}
and 
a list designation ${\langle{\sigma,\tau}\rangle} : \mathcal{E} \rightrightarrows \mathrmbf{List}(\mathcal{A})$
with signature map $R \xrightarrow{\sigma} \mathrmbf{List}(X)$
and tuple map $K \xrightarrow{\tau} \mathrmbf{List}(Y)$,
whose defining condition states that:
if entity $k{\,\in\,}K$ is of type $r{\,\in\,}R$,
then the description tuple $\tau(k)={\langle{J,t}\rangle}$ 
is the same ``size'' ($J=I$) as
the signature $\sigma(r)={\langle{I,s}\rangle}$ 
and each data value $t_{n}$ is of sort $s_{n}$;
or interpretively
(\S~\ref{sub:sub:sec:data:model:interp}
and
Kent~\cite{kent:fole:era:tbl}),
in a database table 
all rows are classified by the table header.
\begin{figure}
\begin{center}
{{\begin{tabular}{@{\hspace{80pt}}c@{\hspace{85pt}}c}
{{\begin{tabular}{c}
\begin{tabular}{c}
\setlength{\unitlength}{0.5pt}
\begin{picture}(140,120)(-10,-10)
\put(0,80){\makebox(0,0){\footnotesize{$R$}}}
\put(0,0){\makebox(0,0){\footnotesize{$K$}}}
\put(87,80){\makebox(0,0)[l]{\footnotesize{$\mathrmbf{List}(X)$}}}
\put(87,0){\makebox(0,0)[l]{\footnotesize{$\mathrmbf{List}(Y)$}}}
\put(8,40){\makebox(0,0)[l]{\scriptsize{$\models_{\mathcal{E}}$}}}
\put(128,40){\makebox(0,0)[l]{\scriptsize{$\models_{\mathrmbf{List}(\mathcal{A})}$}}}
\put(50,90){\makebox(0,0){\scriptsize{$\sigma$}}}
\put(50,10){\makebox(0,0){\scriptsize{$\tau$}}}
\put(20,80){\vector(1,0){60}}
\put(20,0){\vector(1,0){60}}
\put(0,65){\line(0,-1){50}}
\put(120,65){\line(0,-1){50}}
\put(66,45){\makebox(0,0){\normalsize{$\mathcal{M}$}}}
\put(66,25){\makebox(0,0){{\scriptsize{\itshape{structure}}}}}
\put(75,120){\makebox(0,0){\footnotesize{$\overset{\textstyle{
\stackrel{\text{\emph{schema}}}{\mathcal{S}\rule{0pt}{1pt}}}}{\overbrace{\rule{80pt}{0pt}}}$}}}
\put(75,-40){\makebox(0,0){\footnotesize{$\underset{\textstyle{
\stackrel{\textstyle{\mathcal{U}}}{\scriptstyle{\text{\emph{universe}}}}}}{\underbrace{\rule{80pt}{0pt}}}$}}}
\put(-30,40){\makebox(0,0)[r]{\footnotesize{$\mathcal{E}\left\{\rule{0pt}{28pt}\right.$}}}
\put(185,40){\makebox(0,0)[l]{\footnotesize{$\left.\rule{0pt}{28pt}\right\}$}}}
\put(201,31){\makebox(0,0)[l]{\footnotesize{$
\stackrel{\textstyle{\mathrmbf{List}(\mathcal{A})}}{\scriptstyle{\text{\emph{typed domain}}}}$}}}
\end{picture}
\end{tabular}
\\\\
\end{tabular}}}
&
{{\begin{tabular}{c}
\setlength{\unitlength}{0.7pt}
\begin{picture}(80,180)(0,-10)
\put(40,122){\makebox(0,0){\footnotesize{$\mathrmit{R}$}}}
\put(40,38){\makebox(0,0){\footnotesize{$\mathrmbf{Tbl}(\mathcal{A})$}}}
\put(34,82){\makebox(0,0)[r]{\scriptsize{$\mathrmbfit{T}_{\mathcal{M}}$}}}
\put(40,108){\vector(0,-1){56}}
\put(48,144){\makebox(0,0){\scriptsize{$r\xmapsto{\sigma}{\langle{I,s}\rangle}$}}}
\put(60,2){\makebox(0,0){\scriptsize{$\underset{K(r)}{\underbrace{\mathrmbfit{ext}_{\mathcal{E}}(r)}}\xrightarrow{\tau_{r}}
\underset{\mathrmbfit{tup}_{\mathcal{A}}(\sigma(r))}{\underbrace{\mathrmbfit{ext}_{\mathrmbf{List}(\mathcal{A})}(I,s)}}$}}}
\put(55,74){\shortstack{{\scriptsize{\itshape{relational}}}\\{\scriptsize{\itshape{database}}}\\{\scriptsize{\itshape{interpretation}}}}}
\end{picture}
\end{tabular}}}
%
%
\end{tabular}}}
\end{center}
\caption{Structure}
\label{fig:fole:struc}
\end{figure}
A {\ttfamily FOLE} structure embodies the idea of an {\ttfamily ERA} data model
(compare Fig.~\ref{fig:fole:struc} with Fig.~\ref{fig:fole:data:model}).
Each community of discourse 
that incorporates the {\ttfamily ERA} data model
will have its own local {\ttfamily FOLE} structure.
\footnote{In anticipation of the discussion in \S~\ref{sub:sub:sec:data:model:interp},
we illustrate the associated tabular interpretation (\ref{eqn:tbl:ent:typ}) on the right side
of Fig.~\ref{fig:fole:struc}.}

The entity and attribute-list classifications $\mathcal{E}$ and $\mathrmbf{List}(\mathcal{A})$
are equivalent 
\footnote{Any classification $\mathcal{A}={\langle{X,Y,\models_{\mathcal{A}}}\rangle}$
is equivalent to its extent map $X\xrightarrow{\mathrmbfit{ext}_{\mathcal{A}}}{\wp}Y$.}
to their extent diagrams
$R{\;\xrightarrow{\mathrmbfit{ext}_{\mathcal{E}}}\;}\mathrmbf{Set}$ and
$\mathrmbf{List}(X){\;\xrightarrow[\mathrmbfit{tup}_{\mathcal{A}}]{\mathrmbfit{ext}_{\mathrmbf{List}(\mathcal{A})}}\;}\mathrmbf{Set}$,
and the list designation 
is equivalent to its extent diagram morphism 
$\mathrmbfit{ext}_{{\langle{\sigma,\tau}\rangle}} = {\langle{\sigma,\tau_{{\scriptscriptstyle{\llcorner}}}}\rangle}
: {\langle{R,\mathrmbfit{ext}_{\mathcal{E}}}\rangle}
\rightarrow
{\langle{\mathrmbf{List}(X),\mathrmbfit{ext}_{\mathrmbf{List}(\mathcal{A})}}\rangle}$
consisting of
the signature map $R\xrightarrow{\sigma}\mathrmbf{List}(X)$
and the bridge
$\mathrmbfit{ext}_{\mathcal{E}}
\xRightarrow{\tau_{{\scriptscriptstyle{\llcorner}}}}\sigma{\;\circ\;}\mathrmbfit{ext}_{\mathrmbf{List}(\mathcal{A})}$,
whose $r^{\text{th}}$-component
is the tuple function
$K(r)=\mathrmbfit{ext}_{\mathcal{E}}(r)
\xrightarrow{\tau_{r}}\mathrmbfit{ext}_{\mathrmbf{List}(\mathcal{A})}(\sigma(r))=\mathrmbfit{tup}_{\mathcal{A}}(\sigma(r))$
(see the tabular interpretation (\ref{eqn:tbl:ent:typ}) in \S~\ref{sub:sub:sec:data:model:interp}).
Hence,
any structure $\mathcal{M}$ has the interpretive presentation in Fig.~\ref{fig:interp:struc:obj}
(see the discussion on linearization \S~\ref{sub:sub:sec:misc:linear}).
\begin{figure}
\begin{center}
{{\begin{tabular}{@{\hspace{-30pt}}c@{\hspace{30pt}}c}
{{\scriptsize{$\begin{array}[b]{r@{\hspace{8pt}}l}
\text{type domain extent:}
& 
\mathrmbf{List}(X)\xrightarrow{\mathrmbfit{ext}_{\mathrmbf{List}(\mathcal{A})}}{\wp}\mathrmbf{List}(Y)
\\
\text{entity extent:}
& 
R\xrightarrow{\mathrmbfit{ext}_{\mathcal{E}}}{\wp}K
\\
\text{signature map:}
& 
R\xrightarrow{\sigma}\mathrmbf{List}(X)
\\
\text{tuple map:}
& 
K\xrightarrow{\tau}\mathrmbf{List}(Y)
\\
\text{list designation:}
& 
\mathrmbfit{ext}_{\mathcal{E}}{\,\cdot\,}{\wp}\tau{\;\subseteq\;}\sigma{\,\cdot\,}\mathrmbfit{ext}_{\mathrmbf{List}(\mathcal{A})}
\end{array}$}}}
&
{{\begin{tabular}[b]{c}
\setlength{\unitlength}{0.45pt}
\begin{picture}(180,120)(-30,-40)
\put(0,80){\makebox(0,0){\footnotesize{$R$}}}
\put(128,80){\makebox(0,0){\footnotesize{$\mathrmbf{List}(X)$}}}
\put(0,0){\makebox(0,0){\footnotesize{${\wp}K$}}}
\put(128,0){\makebox(0,0){\footnotesize{${\wp}\mathrmbf{List}(Y)$}}}
\put(60,-60){\makebox(0,0){\footnotesize{$\mathrmbf{Set}$}}}
\put(-1,40){\makebox(0,0)[r]{\scriptsize{$\mathrmbfit{ext}_{\mathcal{E}}$}}}
\put(133,26){\makebox(0,0)[l]{\scriptsize{$
\underset{\mathrmbfit{tup}_{\mathcal{A}}}{\underbrace{\mathrmbfit{ext}_{\mathrmbf{List}(\mathcal{A})}}}$}}}
\put(-54,-10){\makebox(0,0)[r]{\scriptsize{$\mathrmbfit{ext}_{\mathcal{E}}$}}}
\put(179,-10){\makebox(0,0)[l]{\scriptsize{$\mathrmbfit{ext}_{\mathrmbf{List}(\mathcal{A})}$}}}
\put(50,90){\makebox(0,0){\scriptsize{$\sigma$}}}
\put(50,10){\makebox(0,0){\scriptsize{${\wp}\tau$}}}
\put(50,43){\makebox(0,0){{$\subseteq$}}}
\put(60,-19){\makebox(0,0){\large{$\xRightarrow{\tau_{{\scriptscriptstyle{\llcorner}}}\;}$}}}
\put(10,-25){\makebox(0,0){\scriptsize{$\mathrmbfit{inc}$}}}
\put(110,-25){\makebox(0,0){\scriptsize{$\mathrmbfit{inc}$}}}
\put(20,80){\vector(1,0){60}}
\put(20,0){\vector(1,0){60}}
\put(5,-10){\vector(1,-1){40}}
\put(115,-10){\vector(-1,-1){40}}
\put(0,65){\vector(0,-1){50}}
\put(120,65){\vector(0,-1){50}}
\qbezier(-17,67)(-120,-10)(35,-57)\put(35,-57){\vector(4,-1){0}}
\qbezier(137,67)(240,-10)(85,-57)\put(85,-57){\vector(-4,-1){0}}
\end{picture}
\end{tabular}}}
\end{tabular}}}
\end{center}
\caption{Interpreted Structure}
\label{fig:interp:struc:obj}
\end{figure}
\mbox{}\newline
In the concept of a {\ttfamily FOLE} structure 
we have abstracted the (primary) keys from the tuples that they described.
The key-embedding construction replaces keys into their tuples.
\begin{definition}(key embedding)\label{def:key:embed}
Any {\ttfamily FOLE} structure
$\mathcal{M} = {\langle{\mathcal{E},\sigma,\tau,\mathcal{A}}\rangle}$
with signature map 
$R \xrightarrow{\sigma} \mathrmbf{List}(X) : r \mapsto {\langle{I,s}\rangle}$ 
and tuple map 
$K \xrightarrow{\tau} \mathrmbf{List}(Y) : k \mapsto {\langle{I,t}\rangle}$
has a companion \emph{key embedding} structure 
$\dot{\mathcal{M}} = {\langle{\mathcal{E},\dot{\sigma},\dot{\tau},\mathcal{E}{+}\mathcal{A}}\rangle}$
consisting of
entity classification $\mathcal{E}$,
parallel sum typed domain $\mathcal{E}{+}\mathcal{A} = {\langle{R{+}X,K{+}\,Y,\models_{\mathcal{E}{+}\mathcal{A}}}\rangle}$,
\footnote{We can think of the entity classification
$\mathcal{E} = {\langle{R,K,\models_{\mathcal{E}}}\rangle}$
as a type domain.
For each sort (attribute type) $r \in R$,
the data domain of that type is the $\mathcal{E}$-extent
$\mathcal{E}_{r}=\mathrmbfit{ext}_{\mathcal{E}}(r) = \{ k \in K \mid k \models_{\mathcal{E}} r \}$.
The passage
$R \xrightarrow{\mathrmbfit{ext}_{\mathcal{E}}} {\wp}K$
maps a sort $r{\,\in\,}R$ to its data domain ($\mathcal{E}$-extent) $\mathcal{E}_{r}{\;\subseteq\;}K$.}
schema ${\langle{R,\dot{\sigma},R{+}X}\rangle}$ with
signature map
$R\xrightarrow{\dot{\sigma}}\mathrmbf{List}(R{+}X) : r\mapsto{\langle{1,r}\rangle}{+}{\langle{I,s}\rangle}$,
and 
universe ${\langle{K,\dot{\tau},K{+}\,Y}\rangle}$ with
tuple map
$K\xrightarrow{\dot{\tau}}\mathrmbf{List}(K{+}\,Y) : k\mapsto{\langle{1,k}\rangle}{+}{\langle{I,t}\rangle}$.
The signature and tuple maps are injective.
\end{definition}
%

%
\newpage
\paragraph{Structure Morphisms.}
%

In order to allow communities of discourse to interoperate,
we define the notion of a morphism between two structures 
that respects the {\ttfamily ERA} data model.
A structure morphism
$\mathcal{M}_{2}\xrightleftharpoons{{\langle{r,k,f,g}\rangle}}\mathcal{M}_{1}$
(Fig.~\ref{fig:fole:struc:mor})
from source structure 
$\mathcal{M}_{2}={\langle{\mathcal{E}_{2},{\langle{\sigma_{2},\tau_{2}}\rangle},\mathcal{A}_{2}}\rangle}$
to target structure
$\mathcal{M}_{1}={\langle{\mathcal{E}_{1},{\langle{\sigma_{1},\tau_{1}}\rangle},\mathcal{A}_{1}}\rangle}$
is defined in terms of 
the hypergraph and classification morphisms between the source and target structure components (projections):
\begin{center}
{\footnotesize{$\begin{array}{
r@{\hspace{8pt}}
r@{\hspace{4pt}=\hspace{4pt}}
r@{\hspace{4pt}}
l@{\hspace{4pt}}
l@{\hspace{4pt}=\hspace{4pt}}
l}
\text{\emph{typed domain morphism}}
& \mathcal{A}_{2}
& \mathrmbfit{attr}(\mathcal{M}_{2})
& \xrightleftharpoons{{\langle{f,g}\rangle}}
& \mathrmbfit{attr}(\mathcal{M}_{1})
& \mathcal{A}_{1}
\\
\text{\emph{entity infomorphism}}
& \mathcal{E}_{2}
& \mathrmbfit{ent}(\mathcal{M}_{2})
& \xrightleftharpoons{{\langle{r,k}\rangle}}
& \mathrmbfit{ent}(\mathcal{M}_{1})
& \mathcal{E}_{1}\;
\\
\text{\emph{schema morphism}}
& \mathcal{S}_{2}
& \mathrmbfit{sch}(\mathcal{M}_{2})
& \xRightarrow{{\langle{r,f}\rangle}}
& \mathrmbfit{sch}(\mathcal{M}_{1})
& \mathcal{S}_{1}
\\
\text{\emph{universe morphism}}
& \mathcal{U}_{2}
& \mathrmbfit{univ}(\mathcal{M}_{2})
& \xLeftarrow{{\langle{k,g}\rangle}}
& \mathrmbfit{univ}(\mathcal{M}_{1})
& \mathcal{U}_{1}
\end{array}$}}
\end{center}
satisfying the conditions
\begin{flushleft}
{\footnotesize{$\setlength{\extrarowheight}{4pt}{
\begin{array}{@{\hspace{38pt}}r@{\hspace{20pt}}c}
\text{\emph{typed domain morphism}}
& 
g(y_{1}){\;\models_{\mathcal{A}_{2}}\;}x_{2}
{\;\;\;\text{\underline{iff}}\;\;\;}
y_{1}{\;\models_{\mathcal{A}_{1}}\;}f(x_{2})
\\
\text{\emph{entity infomorphism}}
&
k(k_{1}){\;\models_{\mathcal{E}_{2}}\;}r_{2}
{\;\;\;\text{\underline{iff}}\;\;\;}
k_{1}{\;\models_{\mathcal{E}_{1}}\;}r(r_{2})\;.
\\
\text{\emph{schema morphism}}
&
r{\;\cdot\;}\sigma_{1}
{\;\;=\;\;}
\sigma_{2}{\;\cdot\;}{\scriptstyle\sum}_{f}
\\
\text{\emph{universe morphism}}
&
k{\;\cdot\;}\tau_{2}
{\;\;=\;\;}
\tau_{1}{\;\cdot\;}{\scriptstyle\sum}_{g}
\end{array}}$}}
\end{flushleft}
\begin{figure}
\begin{center}
{{\begin{tabular}{c}
\setlength{\unitlength}{0.6pt}
\begin{picture}(320,250)(-115,-50)
\put(-30,60){\begin{picture}(0,0)(0,0)
\put(0,80){\makebox(0,0){\scriptsize{$R_{2}$}}}
\put(120,80){\makebox(0,0){\scriptsize{$R_{1}$}}}
\put(0,0){\makebox(0,0){\scriptsize{$K_{2}$}}}
\put(120,0){\makebox(0,0){\scriptsize{$K_{1}$}}}
\put(60,88){\makebox(0,0){\scriptsize{$r$}}}
\put(60,8){\makebox(0,0){\scriptsize{$k$}}}
\put(-2,40){\makebox(0,0)[r]{\scriptsize{$\scriptscriptstyle\models_{\mathcal{E}_{2}}$}}}
\put(118,45){\makebox(0,0)[r]{\scriptsize{$\scriptscriptstyle\models_{\mathcal{E}_{1}}$}}}
\put(20,80){\vector(1,0){80}}
\put(100,0){\vector(-1,0){80}}
\put(0,65){\line(0,-1){50}}
\put(120,65){\line(0,-1){50}}
\end{picture}}
\put(0,0){\begin{picture}(0,0)(0,0)
\put(0,80){\makebox(0,0){\scriptsize{$\mathrmbf{List}(X_{2})$}}}
\put(120,80){\makebox(0,0){\scriptsize{$\mathrmbf{List}(X_{1})$}}}
\put(0,0){\makebox(0,0){\scriptsize{$\mathrmbf{List}(Y_{2})$}}}
\put(120,0){\makebox(0,0){\scriptsize{$\mathrmbf{List}(Y_{1})$}}}
\put(60,88){\makebox(0,0){\scriptsize{${\scriptstyle\sum}_{f}$}}}
\put(60,8){\makebox(0,0){\scriptsize{${\scriptstyle\sum}_{g}$}}}
\put(5,36){\makebox(0,0)[l]{\scriptsize{$\models_{\mathrmbf{List}(\mathcal{A}_{2})}$}}}
\put(125,36){\makebox(0,0)[l]{\scriptsize{$\models_{\mathrmbf{List}(\mathcal{A}_{1})}$}}}
\put(30,80){\vector(1,0){57}}
\put(90,0){\vector(-1,0){60}}
\put(0,65){\line(0,-1){50}}
\put(120,65){\line(0,-1){50}}
\end{picture}}
\put(-5,110){\makebox(0,0)[l]{\scriptsize{$\sigma_{2}$}}}
\put(115,110){\makebox(0,0)[l]{\scriptsize{$\sigma_{1}$}}}
\put(-20,30){\makebox(0,0)[r]{\scriptsize{$\tau_{2}$}}}
\put(100,30){\makebox(0,0)[r]{\scriptsize{$\tau_{1}$}}}
\put(-23,130){\vector(1,-2){18}}
\put(97,130){\vector(1,-2){18}}
\put(-23,50){\vector(1,-2){18}}
\put(97,50){\vector(1,-2){18}}
\put(30,190){\makebox(0,0){\scriptsize{$\overset{\textstyle{\stackrel{\text{\emph{schema morphism}}}{
\mathrmbfit{sch}(\mathcal{M}_{2})\xRightarrow{\;{\langle{r,f}\rangle}\;}\mathrmbfit{sch}(\mathcal{M}_{1})}}}{\overbrace{\rule{100pt}{0pt}}}$}}}
\put(-80,100){\makebox(0,0)[r]{\scriptsize{$\underset{\scriptstyle{\text{\emph{entity infomorphism}}}}
{\mathcal{E}_{2}\xrightleftharpoons{{\langle{r,k}\rangle}}\mathcal{E}_{1}}\left\{\rule{0pt}{30pt}\right.$}}}
\put(180,40){\makebox(0,0)[l]{\scriptsize{
$\left.\rule{0pt}{30pt}\right\}\underset{\scriptstyle{\text{\emph{typed domain morphism}}}}
{\mathrmbf{List}\bigl(\mathcal{A}_{2}\xrightleftharpoons{{\langle{f,g}\rangle}}\mathcal{A}_{1}\bigr)}$}}}
\put(60,-45){\makebox(0,0){\scriptsize{$\underset{\textstyle{\stackrel{\textstyle{
\mathrmbfit{univ}(\mathcal{M}_{2})\xLeftarrow{\;{\langle{k,g}\rangle}\;}\mathrmbfit{univ}(\mathcal{M}_{1})
}}{\scriptstyle{\text{\emph{universe morphism}}}}}}{\underbrace{\rule{100pt}{0pt}}}$}}}
\end{picture}
\\ \\
\end{tabular}}}
\end{center}
\caption{Structure Morphism}
\label{fig:fole:struc:mor}
\end{figure}
The designation defining condition states that
for any $k_{1}{\,\in\,}K_{1}$ and $r_{2}{\,\in\,}R_{2}$,
\vspace{-10pt}
\begin{center}
{\footnotesize{$\begin{array}{c}
\Bigl(
k(k_{1}){\;\models_{\mathcal{E}_{2}}\;}r_{2}
\;\text{\underline{iff}}\;
k_{1}{\;\models_{\mathcal{E}_{1}}\;}r(r_{2})
\Bigr)
\;\text{\underline{implies}}\;
\\
\Bigl(
\tau_{2}(k(k_{1}))={\scriptstyle\sum}_{g}(\tau_{1}(k_{1})){\;\models_{\mathrmbf{List}(\mathcal{A}_{2})}\;}\sigma_{2}(r_{2})
\;\text{\underline{iff}}\;
\tau_{1}(k_{1}){\;\models_{\mathrmbf{List}(\mathcal{A}_{1})}\;}{\scriptstyle\sum}_{f}(\sigma_{2}(r_{2}))=\sigma_{1}(r(r_{2}))
\Bigr).
\end{array}$}}
\end{center}
Structure morphisms compose component-wise.
Let $\mathrmbf{Struc}$ denote the context of structures and structure morphisms.
In the appendix \S~\ref{sub:sec:struc:fbr:pass},
we develop $\mathrmbf{Struc}$ as a \emph{fibered} mathematical context 
in two orientations: 
either as the Grothedieck construction of 
the schema indexed mathematical context of structures 
$\mathrmbf{Sch}^{\mathrm{op}}{\!\xrightarrow{\mathrmbfit{struc}}\;}\mathrmbf{Cxt}$
or as the Grothedieck construction of 
the universe indexed mathematical context of structures 
$\mathrmbf{Univ}^{\mathrm{op}}{\!\xrightarrow{\mathrmbfit{struc}}\;}\mathrmbf{Cxt}$.
The schema indexed mathematical context of structures is used 
to establish the institutional aspect of {\ttfamily FOLE}.

Any structure morphism
$\mathcal{M}_{2}\xrightleftharpoons{{\langle{r,k,f,1_{Y}}\rangle}}\mathcal{M}_{1}$
with identity value map $Y\xrightarrow{1_{Y}}Y$
has the interpretive presentation in Fig.~\ref{fig:interp:struc:mor}.
\footnote{Any infomorphism $\mathcal{A}_{2}\xrightleftharpoons{{\langle{f,g}\rangle}}\mathcal{A}_{1}$
has the equivalent condition
$f{\,\cdot\,}\mathrmbfit{ext}_{\mathcal{A}_{1}}=\mathrmbfit{ext}_{\mathcal{A}_{2}}{\,\cdot\,}g^{-1}$.}
\begin{figure}
\begin{center}
{{\begin{tabular}{@{\hspace{-55pt}}c@{\hspace{25pt}}c}
{{\scriptsize{$\begin{array}[b]{r@{\hspace{8pt}}l}
\text{entity type map:}
& 
R_{2}\xrightarrow{r}R_{1}
\\
\text{sort map:}
& 
X_{2}\xleftarrow{f}X_{1}
\\
\text{key map:}
& 
K_{2}\xleftarrow{k}K_{1}
\\
\rule[5pt]{0pt}{6pt}
\text{entity infomorphism:}
& 
r{\,\cdot\,}\mathrmbfit{ext}_{\mathcal{E}_{1}}=\mathrmbfit{ext}_{\mathcal{E}_{2}}{\,\cdot\,}k^{-1}
\\
\text{type domain morphism:}
& 
{\scriptstyle\sum}_{f}{\,\cdot\,}\mathrmbfit{ext}_{\mathrmbf{List}(\mathcal{A}_{1})}=\mathrmbfit{ext}_{\mathrmbf{List}(\mathcal{A}_{2})}
\\
\text{schema morphism}
&
r{\;\cdot\;}\sigma_{1}{\;\;=\;\;}\sigma_{2}{\;\cdot\;}{\scriptstyle\sum}_{f}
\\
\text{universe morphism:}
& 
{\wp}\tau_{2}{\;\supseteq\;}k^{-1}{\,\cdot\,}{\wp}\tau_{1}
\end{array}$}}}
&
{{\begin{tabular}{c}
\setlength{\unitlength}{0.6pt}
\begin{picture}(120,20)(-20,20)
\put(-30,60){\begin{picture}(0,0)(0,0)
\put(0,80){\makebox(0,0){\scriptsize{$R_{2}$}}}
\put(120,80){\makebox(0,0){\scriptsize{$R_{1}$}}}
\put(0,0){\makebox(0,0){\scriptsize{${\wp}K_{2}$}}}
\put(120,0){\makebox(0,0){\scriptsize{${\wp}K_{1}$}}}
\put(60,88){\makebox(0,0){\scriptsize{$r$}}}
\put(60,8){\makebox(0,0){\scriptsize{$k^{-1}$}}}
\put(-2,40){\makebox(0,0)[r]{\scriptsize{$\mathrmbfit{ext}_{\mathcal{E}_{2}}$}}}
\put(118,45){\makebox(0,0)[r]{\scriptsize{$\mathrmbfit{ext}_{\mathcal{E}_{1}}$}}}
\put(20,80){\vector(1,0){80}}
\put(20,0){\vector(1,0){80}}
\put(0,65){\vector(0,-1){50}}
\put(120,65){\vector(0,-1){50}}
\end{picture}}
\put(0,0){\begin{picture}(0,0)(0,0)
\put(0,80){\makebox(0,0){\scriptsize{$\mathrmbf{List}(X_{2})$}}}
\put(120,80){\makebox(0,0){\scriptsize{$\mathrmbf{List}(X_{1})$}}}
\put(0,0){\makebox(0,0){\scriptsize{${\wp}\mathrmbf{List}(Y)$}}}
\put(120,0){\makebox(0,0){\scriptsize{${\wp}\mathrmbf{List}(Y)$}}}
\put(60,88){\makebox(0,0){\scriptsize{${\scriptstyle\sum}_{f}$}}}
\put(60,0){\makebox(0,0){\normalsize{$=$}}}
\put(5,46){\makebox(0,0)[l]{\scriptsize{$\mathrmbfit{ext}_{\mathrmbf{List}(\mathcal{A}_{2})}$}}}
\put(125,46){\makebox(0,0)[l]{\scriptsize{$\mathrmbfit{ext}_{\mathrmbf{List}(\mathcal{A}_{1})}$}}}
\put(30,80){\vector(1,0){60}}
\put(0,65){\vector(0,-1){50}}
\put(120,65){\vector(0,-1){50}}
\end{picture}}
\put(-5,110){\makebox(0,0)[l]{\scriptsize{$\sigma_{2}$}}}
\put(115,110){\makebox(0,0)[l]{\scriptsize{$\sigma_{1}$}}}
\put(-20,30){\makebox(0,0)[r]{\scriptsize{${\wp}\tau_{2}$}}}
\put(100,30){\makebox(0,0)[r]{\scriptsize{${\wp}\tau_{1}$}}}
\put(-15,70){\makebox(0,0){\scriptsize{$\subseteq$}}}
\put(107,70){\makebox(0,0){\scriptsize{$\subseteq$}}}
\put(48,28){\makebox(0,0){\footnotesize{$\supseteq$}}}
\put(-23,130){\vector(1,-2){18}}
\put(97,130){\vector(1,-2){18}}
\put(-23,50){\vector(1,-2){18}}
\put(97,50){\vector(1,-2){18}}
\end{picture}
\\ \\
\end{tabular}}}
\end{tabular}}}
\end{center}
\caption{Interpreted Structure Morphism}
\label{fig:interp:struc:mor}
\end{figure}

\newpage

\begin{definition}(key embedding)\label{def:key:embed:mor}
Any structure morphism
$\mathcal{M}_{2}\xrightleftharpoons{{\langle{r,k,f,g}\rangle}}\mathcal{M}_{1}$
has a companion key embedding structure morphism
%
\[\mbox
{\footnotesize{$
\dot{\mathcal{M}}_{2} = {\langle{\mathcal{E}_{2},\dot{\sigma}_{2},\dot{\tau}_{2},\mathcal{E}_{2}{+}\mathcal{A}_{2}}\rangle}
\xrightleftharpoons{{\langle{r,k,f,f{+}g}\rangle}}
{\langle{\mathcal{E}_{1},\dot{\sigma}_{1},\dot{\tau}_{1},\mathcal{E}_{1}{+}\mathcal{A}_{1}}\rangle} = \dot{\mathcal{M}}_{1}
$.}\normalsize}
\]
with the following components:
\vspace{-10pt}
\begin{flushleft}
{\footnotesize{$\setlength{\extrarowheight}{4pt}{
\begin{array}{@{\hspace{20pt}}r@{\hspace{20pt}}c}
\text{\emph{typed domain morphism}}
& 
\mathcal{E}_{2}{+}\mathcal{A}_{2}\xrightleftharpoons{{\langle{r{+}f,k{+}g}\rangle}}\mathcal{E}_{1}{+}\mathcal{A}_{1}
\\
\text{\emph{entity infomorphism}}
&
\mathcal{E}_{2}\xrightleftharpoons{{\langle{r,k}\rangle}}\mathcal{E}_{1}
\\
\text{\emph{schema morphism}}
&
\dot{\mathcal{S}}_{2}={\langle{R_{2},\dot{\sigma}_{2},R_{2}{+}X_{2}}\rangle}
\xRightarrow{{\langle{r,r{+}f}\rangle}}
{\langle{R_{1},\dot{\sigma}_{1},R_{1}{+}X_{1}}\rangle}=\mathcal{S}_{1}
\\
\text{\emph{universe morphism}}
&
\mathcal{U}_{2}={\langle{K_{2},\dot{\tau}_{2},K_{2}{+}\,Y_{2}}\rangle}
\xLeftarrow{{\langle{k,k{+}g}\rangle}}
{\langle{K_{1},\dot{\tau}_{1},K_{1}{+}\,Y_{1}}\rangle}=\mathcal{U}_{1}
\end{array}}$}}
\end{flushleft}
\end{definition}
\begin{proof}
The following conditions must hold.
\begin{flushleft}
{\footnotesize{$\setlength{\extrarowheight}{4pt}{
\begin{array}{@{\hspace{25pt}}r@{\hspace{20pt}}c}
\text{\emph{typed domain morphism}}
& 
\overset{\textstyle{k(k_{1})}}{g(y_{1})}
{\;\models_{\mathcal{E}_{2}{+}\mathcal{A}_{2}}\;}
\overset{\textstyle{r_{2}}\rule[-3pt]{0pt}{3pt}}{x_{2}}
{\;\;\;\text{\underline{iff}}\;\;\;}
\overset{\textstyle{k_{1}}\rule[-3pt]{0pt}{3pt}}{y_{1}}
{\;\models_{\mathcal{E}_{1}{+}\mathcal{A}_{1}}\;}
\overset{\textstyle{r(r_{1})}}{f(x_{2})}\bigr)
\\
\text{\emph{entity infomorphism}}
&
k(k_{1}){\;\models_{\mathcal{E}_{2}}\;}r_{2}
{\;\;\;\text{\underline{iff}}\;\;\;}
k_{1}{\;\models_{\mathcal{E}_{1}}\;}r(r_{2})\;.
\\
\text{\emph{schema morphism}}
&
r{\;\cdot\;}\dot{\sigma}_{1}
{\;\;=\;\;}
\dot{\sigma}_{2}{\;\cdot\;}{\scriptstyle\sum}_{r{+}f}
\\
\text{\emph{universe morphism}}
&
k{\;\cdot\;}\dot{\tau}_{2}
{\;\;=\;\;}
\dot{\tau}_{1}{\;\cdot\;}{\scriptstyle\sum}_{k{+}g}
\end{array}}$}}
\end{flushleft}
%
We use the comparable conditions for the original structure morphism
$\mathcal{M}_{2}\xrightleftharpoons{{\langle{r,k,f,g}\rangle}}\mathcal{M}_{1}$.
The entity infomorphism condition is given.
The type domain morphism condition is straightforward.
We show the schema morphism condition.
The universe morphism condition is similar.
The schema morphism condition for the original structure morphism is
$r{\;\cdot\;}\sigma_{1} = \sigma_{2}{\;\cdot\;}{\scriptstyle\sum}_{f}$;
that is,
for any $r_{2}{\,\in\,}R_{2}$,
if $\sigma_{2}(r_{2}){\;=\;}{\langle{I_{2},s_{2}}\rangle}$
and $\sigma_{1}(r(r_{2})){\;=\;}{\langle{I_{1},s_{1}}\rangle}$,
then ${\langle{I_{1},s_{1}}\rangle}{\;=\;}{\scriptstyle\sum}_{f}(I_{2},s_{2})$.
Hence,
the schema morphism condition for the key-embedding structure morphism holds,
since
$\dot{\sigma}_{1}(r(r_{2}))
{\;\;=\;\;}
{\langle{1,r(r_{2})}\rangle}{+}{\langle{I_{1},s_{1}}\rangle}
{\;\;=\;\;}
{\scriptstyle\sum}_{r}(1,r_{2}){+}
{\scriptstyle\sum}_{f}(I_{2},s_{2})
\newline
{\;\;=\;\;}
{\scriptstyle\sum}_{r{+}f}\bigl({\langle{1,r_{2}}\rangle}{+}{\langle{I_{2},s_{2}}\rangle}\bigr)
{\;\;=\;\;}
{\scriptstyle\sum}_{r{+}f}(\dot{\sigma}_{2}(r_{2}))
$
\rule{5pt}{5pt}
\end{proof}
\comment{
\begin{center}
{{\begin{tabular}{c}
\setlength{\unitlength}{0.6pt}
\begin{picture}(140,120)(-20,20)
\put(-30,60){\begin{picture}(0,0)(0,0)
\put(0,80){\makebox(0,0){\scriptsize{$R_{2}$}}}
\put(120,80){\makebox(0,0){\scriptsize{$R_{1}$}}}
\put(0,0){\makebox(0,0){\scriptsize{$K_{2}$}}}
\put(120,0){\makebox(0,0){\scriptsize{$K_{1}$}}}
\put(60,88){\makebox(0,0){\scriptsize{$r$}}}
\put(60,8){\makebox(0,0){\scriptsize{$k$}}}
\put(-2,40){\makebox(0,0)[r]{\scriptsize{$\scriptscriptstyle\models_{\mathcal{E}_{2}}$}}}
\put(118,45){\makebox(0,0)[r]{\scriptsize{$\scriptscriptstyle\models_{\mathcal{E}_{1}}$}}}
\put(20,80){\vector(1,0){80}}
\put(100,0){\vector(-1,0){80}}
\put(0,65){\line(0,-1){50}}
\put(120,65){\line(0,-1){50}}
\end{picture}}
\put(0,0){\begin{picture}(0,0)(0,0)
\put(0,80){\makebox(0,0){\scriptsize{$\mathrmbf{List}(R_{2}{+}X_{2})$}}}
\put(120,80){\makebox(0,0){\scriptsize{$\mathrmbf{List}(R_{1}{+}X_{1})$}}}
\put(0,0){\makebox(0,0){\scriptsize{$\mathrmbf{List}(K_{2}{+}Y_{2})$}}}
\put(120,0){\makebox(0,0){\scriptsize{$\mathrmbf{List}(K_{1}{+}Y_{1})$}}}
\put(60,88){\makebox(0,0){\scriptsize{${\scriptstyle\sum}_{r{+}f}$}}}
\put(60,8){\makebox(0,0){\scriptsize{${\scriptstyle\sum}_{k{+}g}$}}}
\put(5,36){\makebox(0,0)[l]{\scriptsize{$\models_{\mathrmbf{List}(\mathcal{E}_{2}{+}\mathcal{A}_{2})}$}}}
\put(125,36){\makebox(0,0)[l]{\scriptsize{$\models_{\mathrmbf{List}(\mathcal{E}_{1}{+}\mathcal{A}_{1})}$}}}
\put(45,80){\vector(1,0){30}}
\put(75,0){\vector(-1,0){30}}
\put(0,65){\line(0,-1){50}}
\put(120,65){\line(0,-1){50}}
\end{picture}}
\put(-5,110){\makebox(0,0)[l]{\scriptsize{$\dot{\sigma}_{2}$}}}
\put(115,110){\makebox(0,0)[l]{\scriptsize{$\dot{\sigma}_{1}$}}}
\put(-17,30){\makebox(0,0)[r]{\scriptsize{$\dot{\tau}_{2}$}}}
\put(103,30){\makebox(0,0)[r]{\scriptsize{$\dot{\tau}_{1}$}}}
\put(-23,130){\vector(1,-2){18}}
\put(97,130){\vector(1,-2){18}}
\put(-23,50){\vector(1,-2){18}}
\put(97,50){\vector(1,-2){18}}
\end{picture}
\\ \\
\end{tabular}}}
\end{center}
}

%
\paragraph{Integrity Constraints.}
%

Integrity constraints help preserve the validity and consistency of data.
Here we briefly explain how various integrity constraints 
are represented in the {\ttfamily ERA} data model of {\ttfamily FOLE}.
\begin{description}
\item[Entity:] (primary key rule)
Entity integrity states that 
every table must have a primary key and 
that the column or columns chosen to be the primary key should be unique and not null.
In the {\ttfamily ERA} data model of {\ttfamily FOLE},
entity integrity 
asserts that
the universe $\mathcal{U} = {\langle{K,\tau,Y}\rangle}$
of a structure $\mathcal{M}$ is well-defined.
%
\item[Domain:] 
Domain integrity specifies that all columns in a relational database must be declared upon a defined domain. 
In the {\ttfamily ERA} data model of {\ttfamily FOLE},
domain integrity 
asserts that
the schema $\mathcal{S} = {\langle{R,\sigma,X}\rangle}$ and
the list designation ${\langle{\sigma,\tau}\rangle} : \mathcal{E} \rightrightarrows \mathrmbf{List}(\mathcal{A})$
of a structure $\mathcal{M}$ 
are well-defined.
%
\item[Referential:] (foreign key rule) 
Referential integrity states that 
the foreign-key value of a source table refers to a primary key value of a target table.
In the {\ttfamily ERA} data model of {\ttfamily FOLE},
referential integrity asserts that
the {\ttfamily ERA} data model of {\ttfamily FOLE}
is a mixed data model.
%
\end{description}
%

\paragraph{Algebra.}
%

For simplicity of presentation, 
this paper and the paper on {\ttfamily FOLE} superstructure (Kent~\cite{kent:fole:era:supstruc})
use a simplified form of {\ttfamily FOLE}, 
in contrast to the full form presented in Kent~\cite{kent:iccs2013}.
In this paper and Kent \cite{kent:fole:era:supstruc}, 
schemas are used in place of (many-sorted) first-order logical languages. 
Schemas are simplified logical languages without function symbols.
The main practical result is that 
signature morphisms ${\langle{I',s'}\rangle}\xrightarrow{h}{\langle{I,s}\rangle}$
are replaced by
term vectors  ${\langle{I',s'}\rangle} \xrightharpoondown{t} {\langle{I,s}\rangle}$
in the full version of {\ttfamily FOLE}.
Signature morphisms are simplified term vectors without function symbols.
In the full version of {\ttfamily FOLE}, 
equations can be defined between parallel pairs of term vectors 
${\langle{I',s'}\rangle} \xrightharpoondown{\,t,t'\!} {\langle{I,s}\rangle}$,
thus allowing the use of equational presentations and their congruences.
\footnote{The tuple relational calculus is a query language for relational databases.
In order to use the tuple calculus in the {\ttfamily FOLE},
we need to enrich with many-sorted constant declarations and equational presentations.
Constant declarations are first-order logical languages 
with sorted nullary function symbols.
A constant $c$ of sort $x$ is an $x$-sorted nullary function symbol
$x \xrightharpoondown{c} {\langle{\emptyset,0_{X}}\rangle}$.}
Also in the full version of {\ttfamily FOLE}, 
the tuple map along signature morphisms becomes the algebraic operation along term vectors;
hence,
formula flow (substitution/quantification) 
in Kent \cite{kent:fole:era:supstruc}
is lifted from being along signature morphisms to being along term vectors.
\footnote{Let {\ttfamily FOLE-ARCH} denote Fig.~1.~in Kent~\cite{kent:iccs2013}.
{\ttfamily FOLE-ARCH} is the 3-dimensional visualization of the fibered architecture of {\ttfamily FOLE}.
The upper right quadrant of Fig.~\ref{fig:cxt:struc}
corresponds to the the 2-D prism below $\mathrmbf{Rel}$ in {\ttfamily FOLE-ARCH}.
As indicated in {\ttfamily FOLE-ARCH},
to move 
from the simple version of the {\ttfamily FOLE} foundation used here (Fig.~\ref{fig:cxt:struc})
to the full version in {\ttfamily FOLE-ARCH},
we lift from sort sets to algebraic languages and from typed domains to many-sorted algebras.}
\begin{figure}
\begin{center}
{{\begin{tabular}{c}
\setlength{\unitlength}{0.7pt}
\begin{picture}(200,220)(0,-20)
\put(0,160){\makebox(0,0){\normalsize{$\mathrmbf{Set}$}}}
\put(100,160){\makebox(0,0){\normalsize{$\mathrmbf{Sch}$}}}
\put(200,160){\makebox(0,0){\normalsize{$\mathrmbf{Set}$}}}
\put(0,80){\makebox(0,0){\normalsize{$\mathrmbf{Cls}$}}}
\put(100,80){\makebox(0,0){\normalsize{$\mathrmbf{Struc}$}}}
\put(200,80){\makebox(0,0){\normalsize{$\mathrmbf{Cls}$}}}
\put(0,0){\makebox(0,0){\normalsize{$\mathrmbf{Set}$}}}
\put(100,0){\makebox(0,0){\normalsize{$\mathrmbf{Univ}$}}}
\put(200,0){\makebox(0,0){\normalsize{$\mathrmbf{Set}$}}}
\put(50,170){\makebox(0,0){\footnotesize{$\mathrmbfit{ind}$}}}
\put(150,170){\makebox(0,0){\footnotesize{$\mathrmbfit{set}$}}}
\put(50,-10){\makebox(0,0){\footnotesize{$\mathrmbfit{ind}$}}}
\put(150,-10){\makebox(0,0){\footnotesize{$\mathrmbfit{set}$}}}
\put(50,90){\makebox(0,0){\footnotesize{$\mathrmbfit{ent}$}}}
\put(150,90){\makebox(0,0){\footnotesize{$\mathrmbfit{attr}$}}}
\put(-3,120){\makebox(0,0)[r]{\footnotesize{$\mathrmbfit{typ}$}}}
\put(100,120){\makebox(0,0){\footnotesize{$\mathrmbfit{sch}$}}}
\put(205,120){\makebox(0,0)[l]{\footnotesize{$\mathrmbfit{typ}$}}}
\put(-3,40){\makebox(0,0)[r]{\footnotesize{$\mathrmbfit{inst}$}}}
\put(100,40){\makebox(0,0){\footnotesize{$\mathrmbfit{univ}$}}}
\put(205,40){\makebox(0,0)[l]{\footnotesize{$\mathrmbfit{inst}$}}}
\put(50,120){\makebox(0,0){\scriptsize{$\mathrmbfit{rel}$}}}
\put(150,120){\makebox(0,0){\scriptsize{$\mathrmbfit{sort}$}}}
\put(50,40){\makebox(0,0){\scriptsize{$\mathrmbfit{key}$}}}
\put(160,40){\makebox(0,0){\scriptsize{$\mathrmbfit{val}$}}}
\put(70,160){\vector(-1,0){48}}
\put(130,160){\vector(1,0){48}}
\put(70,80){\vector(-1,0){48}}
\put(130,80){\vector(1,0){48}}
\put(70,0){\vector(-1,0){48}}
\put(130,0){\vector(1,0){48}}
\put(0,94){\vector(0,1){50}}
\put(100,94){\vector(0,1){50}}
\put(200,94){\vector(0,1){50}}
\put(0,64){\vector(0,-1){50}}
\put(100,64){\vector(0,-1){50}}
\put(200,64){\vector(0,-1){50}}
\qbezier[50](75,95)(45,120)(15,145)\put(15,145){\vector(-4,3){0}}
\qbezier[50](125,95)(155,120)(185,145)\put(185,145){\vector(4,3){0}}
\qbezier[50](75,65)(45,40)(15,15)\put(15,15){\vector(-4,-3){0}}
\qbezier[50](125,65)(155,40)(185,15)\put(185,15){\vector(4,-3){0}}
\put(-20,180){\makebox(0,0){\scriptsize{$\mathrmit{R}$}}}
\put(100,195){\makebox(0,0){\scriptsize{$\mathcal{S}$}}}
\put(105,180){\makebox(0,0){\tiny{$R\xrightarrow{\sigma}\!\mathrmbf{List}(X)$}}}
\put(220,180){\makebox(0,0){\scriptsize{$\mathrmit{X}$}}}
\put(-40,80){\makebox(0,0){\scriptsize{$\mathcal{E}$}}}
\put(100,100){\makebox(0,0){\footnotesize{$\mathcal{M}$}}}
\put(98,62){\makebox(0,0){\scriptsize{${\langle{\sigma,\tau}\rangle}:\mathcal{E}\rightrightarrows\mathrmbf{List}(\mathcal{A})$}}}
\put(240,80){\makebox(0,0){\scriptsize{$\mathcal{A}$}}}
\put(-20,-25){\makebox(0,0){\scriptsize{$\mathrmit{K}$}}}
\put(105,-20){\makebox(0,0){\tiny{$K\xrightarrow{\tau}\!\mathrmbf{List}(Y)$}}}
\put(100,-35){\makebox(0,0){\scriptsize{$\mathcal{U}$}}}
\put(220,-25){\makebox(0,0){\scriptsize{$\mathrmit{Y}$}}}
\end{picture}
\\\\
\end{tabular}}}
\end{center}
\caption{{\ttfamily FOLE} Foundation}
\label{fig:cxt:struc}
\end{figure}
%

\newpage
\subsection{Interpretation.}\label{sub:sub:sec:data:model:interp}

In the model theory for traditional many-sorted first-order logic,
a (possible world, model) structure corresponds to an interpretation 
of relation symbols (entity types) in terms of relations in a typed domain.
The {\ttfamily FOLE} approach to logic 
replaces $n$-tuples with lists, 
defines quantification/substitution along signature morphisms 
(Kent~\cite{kent:fole:era:supstruc})
(or term vectors in the full version~\cite{kent:iccs2013}),
and following databases, 
incorporates identifiers (keys) for data value lists (tuples)
(here and in 
the paper on {\ttfamily FOLE} tables 
Kent~\cite{kent:fole:era:tbl}).
The {\ttfamily FOLE} approach modifies the idea of model-theoretic interpretation as follows.
%
\footnote{Hence,
the notions of
(1) a many-sorted first-order logic interpretation,
(2) a {\ttfamily FOLE} structure, and 
(3) an {\ttfamily ERA} data model are all equivalent;
and each can be implemented as a relational database with associated logic
.}
%

We assume that the traditional many-sorted first-order logic language is represented by the schema
$\mathcal{S} = {\langle{R,\sigma,X}\rangle}$
and that the typed domain is represented by the attribute classification 
$\mathcal{A} = {\langle{X,Y,\models_{\mathcal{A}}}\rangle}$.
%
We further assume that these are components of a structure 
$\mathcal{M} = {\langle{\mathcal{E},{\langle{\sigma,\tau}\rangle},\mathcal{A}}\rangle}$.
\begin{description}
\item[Traditional:] 
In the traditional approach,
an entity type $r \in R$ 
is interpreted as the set of descriptors of entities in the extent of $r$.
For $X$-signature $\sigma(r) = {\langle{I,s}\rangle} \in \mathrmbf{List}(X)$,
this is the subset of $Y$-tuples
$\mathrmbfit{I}_{\mathcal{M}}(r) = {\wp}\tau(\mathrmbfit{ext}_{\mathcal{E}}(r))
\in 
{\wp}\mathrmbfit{tup}_{\mathcal{A}}(I,s) 
= {\wp}\mathrmbfit{ext}_{\mathrmbf{List}(\mathcal{A})}(I,s)$,
an element of the fiber relational order
${\langle{\mathrmbf{Rel}_{\mathcal{A}}(I,s),\subseteq}\rangle}$.
\footnote{The fibered context $\mathrmbf{Rel}(\mathcal{A})$ is defined in 
the paper on {\ttfamily FOLE} tables 
(Kent~\cite{kent:fole:era:tbl}).
An object of $\mathrmbf{Rel}(\mathcal{A})$,
called an $\mathcal{A}$-relation,
is a pair ${\langle{I,s,R}\rangle}$
consisting of an indexing $X$-signature ${\langle{I,s}\rangle}$
and a subset of $\mathcal{A}$-tuples 
$R \in {\wp}\mathrmbfit{tup}_{\mathcal{A}}(I,s) 
= {\wp}\mathrmbfit{ext}_{\mathrmbf{List}(\mathcal{A})}(I,s)$.}
%
This defines the traditional interpretation function
\begin{equation}\label{eqn:rel:ent:typ}
R\xrightarrow{\;\mathrmbfit{I}_{\mathcal{M}}\;}\mathrmbf{Rel}(\mathcal{A})\,.
\end{equation}
For all ${r}\in{R}$,
we have the relationships
\begin{equation}\label{ext:interp:equiv:ent:typ}
{\setlength{\extrarowheight}{2pt}\begin{array}{c@{\hspace{20pt}}c}
{\wp}\tau(\mathrmbfit{ext}_{\mathcal{E}}(r)){\;=\;}\mathrmbfit{I}_{\mathcal{M}}(r)
& 
\mathrmbfit{ext}_{\mathcal{E}}(r){\;\subseteq\;}\tau^{-1}(\mathrmbfit{I}_{\mathcal{M}}(r))\,.
\end{array}}
\end{equation}
%
%
\end{description}
\begin{definition}\label{assump:extens:struc}
The inequality
$\mathrmbfit{ext}_{\mathcal{E}}(r){\;\subseteq\;}\tau^{-1}(\mathrmbfit{I}_{\mathcal{M}}(r))$
says that
$\mathrmbfit{ext}_{\mathcal{E}}(r)$ is not the morphic closure of itself
w.r.t.\ the tuple map
$K\xrightarrow{\tau}\mathrmbf{List}(Y)$.
A structure $\mathcal{M}$ is called \emph{extensive}
when the right hand expression in (\ref{ext:interp:equiv:ent:typ})
is an equality:
$\mathrmbfit{ext}_{\mathcal{E}}(r){\;=\;}\tau^{-1}(\mathrmbfit{I}_{\mathcal{M}}(r))$
for any entity type ${r} \in {R}$.
\footnote{\underline{Philosophical note}:\label{ftnt:phil:note}
In the knowledge resources for a community, 
the entities are of first importance.
The tuples in $\mathrmbf{List}(Y)$
are descriptors,
which may or may not have an identity.
An entity consists of an identifier $k\in{K}$ and its descriptor $\tau(k)\in{\wp}\tau(K)$.
Tuples with identity are those in
${\wp}\tau(K){\;\subseteq\;}\mathrmbf{List}(Y)$.
Two entities that have the same descriptor are said to be ``descriptor-equivalent''.}
\end{definition}
Any structure $\mathcal{M}$ with an injective tuple map $K\xrightarrow{\tau}\mathrmbf{List}(Y)$
has an associated extensive structure.
An example is the key-embedding structure $\dot{\mathcal{M}}$.
%
%
\begin{description}
\comment{
\item[Relational:] 
The relational interpretation function
\begin{equation}\label{eqn:inc:rel:ent:typ}
R\xrightarrow[\mathrmbfit{T}_{\mathring{\mathcal{M}}}]{\;\mathrmbfit{R}_{\mathcal{M}}\;}|\mathrmbf{Tbl}(\mathcal{A})|
\end{equation}
maps an entity type $r \in {R}$
with signature $\sigma(r) = {\langle{I,s}\rangle}$
to the ${\langle{I,s}\rangle}$-indexed $\mathcal{A}$-table
$\mathrmbfit{R}_{\mathcal{M}}(r) = {\langle{\mathrmbfit{I}_{\mathcal{M}}(r),inc}\rangle}$
consisting of 
the traditional interpretation (key set) $\mathrmbfit{I}_{\mathcal{M}}(r)$ and 
the inclusion tuple function $\mathrmbfit{I}_{\mathcal{M}}(r)\xhookrightarrow{inc}\mathrmbfit{tup}_{\mathcal{A}}(I,s)$.
%
\footnote{Two tables are informationally equivalent
when they contain the same information; that is,
when their image relations are equivalent in 
$\mathrmbf{Rel}_{\mathcal{A}}(I,s)={\wp}\mathrmbfit{tup}_{\mathcal{A}}(I,s)$.
In particular,
the tables $\mathrmbfit{T}_{\mathcal{M}}(r)$ and $\mathrmbfit{R}_{\mathcal{M}}(r)$
are informationally equivalent
for any entity type $r \in {R}$.}
\newline
}

\newpage

\item[Tabular:]
In the database approach,
an entity type $r \in R$ 
is interpreted as a table
with the entities (both the keys and their descriptors) being explicit.
For $X$-signature $\sigma(r) = {\langle{I,s}\rangle} \in \mathrmbf{List}(X)$,
this is the
${\langle{I,s}\rangle}$-indexed $\mathcal{A}$-table
\begin{center}
\begin{tabular}{c@{\hspace{25pt}}c}
{\setlength{\extrarowheight}{2pt}{\scriptsize{$\begin{array}{c@{\hspace{20pt}}c}
K(r)\xrightarrow{\tau_{r}}\mathrmbfit{tup}_{\mathcal{A}}(I,s)
\\
\tau_{r}(k) = {\langle{I,t}\rangle},\;I\xrightarrow{t}Y
\end{array}$}}}
&
{{\begin{tabular}{c}
\setlength{\unitlength}{0.5pt}
\begin{picture}(180,135)(-60,0)
\put(-10,90){\makebox(0,0){\scriptsize{$r$}}}
\put(60,120){\makebox(0,0){\shortstack{\scriptsize{${\langle{I,s}\rangle}$}\\$\overbrace{\rule{50pt}{0pt}}$}}}
\put(-55,40){\makebox(0,0){\scriptsize{$K(r)\left\{\rule{0pt}{20pt}\right.$}}}
\put(25,90){\makebox(0,0){\scriptsize{$\cdots$}}}
\put(60,90){\makebox(0,0){\scriptsize{$i\!:\!s_{i}$}}}
\put(97,90){\makebox(0,0){\scriptsize{$\cdots$}}}
\put(-10,40){\makebox(0,0){\scriptsize{$k$}}}
\put(25,40){\makebox(0,0){\scriptsize{$\cdots$}}}
\put(60,40){\makebox(0,0){\scriptsize{$t_{i}$}}}
\put(97,40){\makebox(0,0){\scriptsize{$\cdots$}}}
\put(-20,78){\line(1,0){140}}
\put(2,0){\line(0,1){100}}
\end{picture}
\end{tabular}}}
\end{tabular}
\end{center}
$\mathrmbfit{T}_{\mathcal{M}}(r) = {\langle{K(r),\tau_{r}}\rangle}$
(visualized above)
consisting of 
the key set $K(r) = \mathrmbfit{ext}_{\mathcal{E}}(r) 
\in {\wp}K$
and 
the (descriptor) tuple function $K(r)\xrightarrow{\tau_{r}}\mathrmbfit{tup}_{\mathcal{A}}(I,s)$,
a restriction 
of the tuple function $K\xrightarrow{\tau}\mathrmbf{List}(Y)$
for the universe $\mathcal{U} = {\langle{K,\tau,Y}\rangle}$.
The tuple function
factors through the traditional interpretation
$\tau_{r} : K(r)\rightarrow\mathrmbfit{I}_{\mathcal{M}}(r)\hookrightarrow\mathrmbfit{tup}_{\mathcal{A}}(I,s)$.
This defines the tabular interpretation function
\begin{equation}\label{eqn:tbl:ent:typ}
R\xrightarrow{\;\mathrmbfit{T}_{\mathcal{M}}\;}\mathrmbf{Tbl}(\mathcal{A})\,.
~\footnote{The fibered context $\mathrmbf{Tbl}(\mathcal{A})$ is defined in 
the paper on {\ttfamily FOLE} tables 
(Kent~\cite{kent:fole:era:tbl}).
An object of $\mathrmbf{Tbl}(\mathcal{A})$,
which is called an $\mathcal{A}$-table,
is a triple ${\langle{I,s,K,t}\rangle}$
consisting of an $X$-signature ${\langle{I,s}\rangle}$,
a set of entities (keys) $K$,
and a tuple function $K\xrightarrow{t}\mathrmbfit{tup}_{\mathcal{A}}(I,s)$
mapping each entity in $K$ to its descriptor $\mathcal{A}$-tuple.}
\end{equation}
The companion key-embedding structure $\dot{\mathcal{M}}$ (Def.~\ref{def:key:embed}) 
is most easily understood from its interpretation tables (visualized below).
\begin{center}
\begin{tabular}{@{\hspace{-30pt}}c@{\hspace{8pt}}c}
{\setlength{\extrarowheight}{2pt}{\scriptsize{$\begin{array}{c@{\hspace{20pt}}c}
K(r)\xrightarrow{\dot{\tau}_{r}}\mathrmbfit{tup}_{\mathcal{E}{+}\mathcal{A}}(1{+}I,r{+}s)
\\
\dot{\tau}_{r}(k) = {\langle{1{+}\,I,k{+}t}\rangle},\;1\xrightarrow{k}K,\;I\xrightarrow{t}Y
\end{array}$}}}
&
{{\begin{tabular}{c}
\setlength{\unitlength}{0.5pt}
\begin{picture}(180,135)(-60,0)
\put(76,120){\makebox(0,0){\shortstack{\scriptsize{${\langle{1{+}I,r{+}s}\rangle}$}\\$\overbrace{\rule{64pt}{0pt}}$}}}
\put(-55,40){\makebox(0,0){\scriptsize{$K(r)\left\{\rule{0pt}{20pt}\right.$}}}
\put(-10,90){\makebox(0,0){\scriptsize{$r$}}}
\put(-10,40){\makebox(0,0){\scriptsize{$k$}}}
\put(25,90){\makebox(0,0){\scriptsize{${\cdot}\!:\!r$}}}
\put(60,90){\makebox(0,0){\scriptsize{$\cdots$}}}
\put(95,90){\makebox(0,0){\scriptsize{$i\!:\!s_{i}$}}}
\put(132,90){\makebox(0,0){\scriptsize{$\cdots$}}}
\put(25,40){\makebox(0,0){\scriptsize{$k$}}}
\put(60,40){\makebox(0,0){\scriptsize{$\cdots$}}}
\put(95,40){\makebox(0,0){\scriptsize{$t_{i}$}}}
\put(132,40){\makebox(0,0){\scriptsize{$\cdots$}}}
\put(-20,78){\line(1,0){140}}
\put(2,0){\line(0,1){100}}
\end{picture}
\end{tabular}}}
\end{tabular}
\end{center}
\end{description}
%

\newpage
\section{Connections}\label{sub:sec:connections}

\subsection{Analogy.}\label{sub:sub:sec:analogy}

In the original paper (Kent~\cite{kent:iccs2013}) 
explaining the first-order logical environment {\ttfamily FOLE},
there was an analogy between 
the top-level ontological categories discussed in (Sowa~\cite{sowa:kr}) and 
the components 
of the first-order logical environment {\ttfamily FOLE}.
We recast that here in terms of the {\ttfamily FOLE} approach to the {\ttfamily ERA} data model.
%
The analogy between 
the Sowa hierarchy and {\ttfamily FOLE}
is illustrated graphically in Fig.~\ref{analogy}.
The twelve categories displayed in the hierarchy 
(we do not concern ourselves here with 
the temporal dimension given by the {\itshape continuent}-{\itshape occurent} distinction)
and the primitives from which they are generated are arranged in the matrix of Tbl.~\ref{matrix}
with the corresponding {\ttfamily FOLE} terminology in parentheses.

The {\itshape physical}-{\itshape abstract} distinction, 
which corresponds to the Heraclitus distinction {\itshape physis}-{\itshape logos},
is represented by
the {\ttfamily FOLE} classification dimension, 
a connection between {\ttfamily FOLE} schemas and {\ttfamily FOLE} universes.
The {\itshape abstract} category corresponds to {\ttfamily FOLE} types,
either entities or attributes.
The {\itshape physical} category corresponds to {\ttfamily FOLE} instances,
either entities or attributes.
The {\itshape independent}-{\itshape relative}-{\itshape mediating} triad 
is represented by
the {\ttfamily FOLE} hypergraph dimension, 
a connection between {\ttfamily FOLE} entities and {\ttfamily FOLE} attributes.
The {\itshape independent} category (firstness) corresponds to {\ttfamily FOLE} entities that are not attributes.
The {\itshape relative} category (secondness) corresponds to {\ttfamily FOLE} attributes that are not entities.
The {\itshape mediating} category (thirdness) corresponds to {\ttfamily FOLE}  
list maps
between entities and attributes.
The triads {\itshape actuality}-{\itshape prehension}-{\itshape nexus} 
and {\itshape form}-{\itshape proposition}-{\itshape intention}
correspond to Whitehead's categories of existence.
%

%
\begin{figure}
\begin{center}
\begin{tabular}{c@{\hspace{60pt}}c}
\begin{tabular}{c}
\setlength{\unitlength}{0.9pt}
\begin{picture}(150,120)(0,-10)
\put(75,90){\makebox(0,0){\normalsize{$\top$}}}
\put(15,60){\makebox(0,0){\tiny{independent}}}
\put(75,60){\makebox(0,0){\tiny{mediating}}}
\put(135,60){\makebox(0,0){\tiny{relative}}}
\put(30,45){\makebox(0,0){\tiny{abstract}}}
\put(120,50){\makebox(0,0){\tiny{physical}}}

\put(0,0){\makebox(0,0){\tiny{form}}}
\put(30,5){\makebox(0,0){\tiny{actuality}}}
\put(60,0){\makebox(0,0){\tiny{intention}}}
\put(90,5){\makebox(0,0){\tiny{nexus}}}
\put(120,0){\makebox(0,0){\tiny{proposition}}}
\put(150,5){\makebox(0,0){\tiny{prehension}}}
\put(0,0){\line(2,3){30}}
\put(60,0){\line(-2,3){30}}
\put(120,0){\line(-2,1){90}}
\put(30,5){\line(2,1){90}}
\put(90,5){\line(2,3){30}}
\put(150,5){\line(-2,3){30}}
\multiput(15,60)(2,1){30}{\tiny{$\cdot$}}
\multiput(30,45)(1.5,1.5){30}{\tiny{$\cdot$}}
\multiput(75,60)(0,2){15}{\tiny{$\cdot$}}
\multiput(120,50)(-1.5,1.3){30}{\tiny{$\cdot$}}
\multiput(135,60)(-2,1){30}{\tiny{$\cdot$}}
\qbezier[35](0,0)(7.5,30)(15,60)
\qbezier[35](30,5)(22.5,30)(15,60)
\qbezier[35](60,0)(67.5,30)(75,60)
\qbezier[35](90,5)(82.5,27.5)(75,60)
\qbezier[35](120,0)(127.5,37.5)(135,60)
\qbezier[35](150,5)(142.5,35)(135,60)
\end{picture}
\end{tabular}
&
{{\begin{tabular}{c}
\setlength{\unitlength}{0.9pt}
\begin{picture}(160,120)(3,-15)
\put(75,90){\makebox(0,0){\normalsize{$\mathcal{M}$}}}
\put(15,68){\makebox(0,0){\shortstack{\tiny{entity}\\\tiny{classification}\\\scriptsize{$\mathcal{E}$}}}}
\put(75,68){\makebox(0,0){\shortstack{\tiny{designation}\\
\footnotesize{$\stackrel{{\langle{\sigma,\tau}\rangle}}{\rightrightarrows}$}}}}
\put(135,67.5){\makebox(0,0){\shortstack{\tiny{attribute}
\mbox{ }\tiny{list}\\\tiny{classification}\\\scriptsize{$\mathrmbf{List}(\mathcal{A})$}}}}
\put(30,45){\makebox(0,0){\shortstack{\scriptsize{$\mathcal{S}$}\\\tiny{schema}\mbox{ }\tiny{hypergraph}}}}
\put(120,50){\makebox(0,0){\shortstack{\scriptsize{$\mathcal{U}$}\\\tiny{universe}\mbox{ }\tiny{hypergraph}}}}
\put(0,-7){\makebox(0,0){\shortstack{\tiny{entity}\mbox{ }\tiny{type}\\\tiny{set}\\\scriptsize{$R$}}}}
\put(60,-5.3){\makebox(0,0){\shortstack{\tiny{signature}\\\tiny{map}\\\scriptsize{$\xrightarrow{\;\sigma\;}$}}}}
\put(120,-9){\makebox(0,0){\shortstack{\tiny{sort}\mbox{ }\tiny{list}\\\tiny{set}\\\scriptsize{$\mathrmbf{List}(X)$}}}}
\put(30,2.5){\makebox(0,0){\shortstack{\tiny{key}\\\tiny{set}\\\scriptsize{$K$}}}}
\put(90,4.3){\makebox(0,0){\shortstack{\tiny{tuple}\\\tiny{map}\\\scriptsize{$\xrightarrow{\;\tau\;}$}}}}
\put(151,2){\makebox(0,0){\shortstack{\tiny{value}\mbox{ }\tiny{list}\\\tiny{set}\\\scriptsize{$\mathrmbf{List}(Y)$}}}}
\dottedline{2}(15,60)(135,60)
\dottedline{2}(0,-15)(150,-15)
\dottedline{2}(30,-5)(150,-5)
\put(0,0){\line(2,3){30}}
\put(60,0){\line(-2,3){30}}
\put(120,0){\line(-2,1){90}}
\put(30,5){\line(2,1){90}}
\put(90,5){\line(2,3){30}}
\put(150,5){\line(-2,3){30}}
\multiput(15,60)(2,1){30}{\tiny{$\cdot$}}
\multiput(30,45)(1.5,1.5){30}{\tiny{$\cdot$}}
\multiput(75,60)(0,2){15}{\tiny{$\cdot$}}
\multiput(120,50)(-1.5,1.3){30}{\tiny{$\cdot$}}
\multiput(135,60)(-2,1){30}{\tiny{$\cdot$}}
\qbezier[35](0,0)(7.5,30)(15,60)
\qbezier[35](30,5)(22.5,30)(15,60)
\qbezier[35](60,0)(67.5,30)(75,60)
\qbezier[35](90,5)(82.5,27.5)(75,60)
\qbezier[35](120,0)(127.5,37.5)(135,60)
\qbezier[35](150,5)(142.5,35)(135,60)
\end{picture}
\end{tabular}}}
\\ & \\ 
{\footnotesize{hierarchy of top-level categories}}
&
{\footnotesize{{\ttfamily FOLE} structure components}}
\\ & \\ 
\end{tabular}
\end{center}
\caption{Analogy}
\label{analogy}
\end{figure}
\begin{table}
\begin{center}
{\fbox{\scriptsize\itshape{\setlength{\extrarowheight}{4pt}{\begin{tabular}{|@{\hspace{5pt}}l@{\hspace{5pt}}||@{\hspace{5pt}}c@{\hspace{5pt}}|@{\hspace{5pt}}c@{\hspace{5pt}}|@{\hspace{5pt}}c@{\hspace{5pt}}|}
\hline
& independent & mediating & relative 
\\
& (entity classification)
& (list designation)
& (type domain)
\\\hline\hline
abstract (schema) & form (entity type) & intension (signature)
 & proposition (sort)
\\\hline
physical (universe) & actuality (entity) & nexus (tuple) & prehension (value)
\\\hline
\end{tabular}}}}}
\end{center}
\caption{Matrix of six central categories}
\label{matrix}
\end{table}

Here we explain this in more detail.
As discussed in \S~\ref{sub:sub:sec:data:model:struc},
a {\ttfamily FOLE} structure is a two-dimensional combination of a classification and a hypergraph. 
The entire hierarchy of
the top-level ontological categories
is represented by 
a {\ttfamily FOLE} (model-theoretic) structure
$\mathcal{M} = {\langle{\mathcal{E},{\langle{\sigma,\tau}\rangle},\mathcal{A}}\rangle}$.
This is appropriate,
since a (model-theoretic) structure represents the knowledge in the local world of a community of discourse.
The {\itshape form}-{\itshape proposition}-{\itshape intention} triad
is represented by 
a {\ttfamily FOLE} schema
$\mathcal{S} = {\langle{R,\sigma,X}\rangle}$
with signature function
$R\xrightarrow{\sigma}\mathrmbf{List}(X)$.
The {\itshape actuality}-{\itshape prehension}-{\itshape nexus} triad
is represented by 
a {\ttfamily FOLE} universe
$\mathcal{U} = {\langle{K,\tau,Y}\rangle}$
with tuple function
$K\xrightarrow{\tau}\mathrmbf{List}(Y)$.
The firstness subgraph of {\itshape independent}({\itshape actuality},{\itshape form})
is represented by 
a {\ttfamily FOLE} entity classification
$\mathcal{E} = {\langle{R,K,\models_{\mathcal{E}}}\rangle}$
between entity instances (keys) and entity types
(or a classification
between keys and logical formula, more generally; see Kent~\cite{kent:fole:era:supstruc}).
The secondness subgraph of {\itshape relative}({\itshape prehension},{\itshape proposition})
is represented by 
a {\ttfamily FOLE} attribute classification 
$\mathcal{A} = {\langle{X,Y,\models_{\mathcal{A}}}\rangle}$
between attribute instances (data values) and attribute types (sorts).
The thirdness subgraph of {\itshape mediating}({\itshape nexus},{\itshape intention})
is represented by
a {\ttfamily FOLE} list designation 
${\langle{\sigma,\tau}\rangle} : \mathcal{E} \rightrightarrows \mathrmbf{List}(\mathcal{A})$.
%
A {\ttfamily FOLE} entity type loosely corresponds to a 
$\text{\itshape form} = \text{\itshape independent}{\,\wedge\,}\text{\itshape abstract}$.
A {\ttfamily FOLE} entity (key) loosely corresponds to an 
$\text{\itshape actuality} = \text{\itshape independent}{\,\wedge\,}\text{\itshape physical}$.
A {\ttfamily FOLE} sort (attribute type),
which is not necessariy linked from a {\ttfamily FOLE} entity, 
loosely	corresponds to a 
$\text{\itshape proposition} = \text{\itshape relative}{\,\wedge\,}\text{\itshape abstract}$.
A {\ttfamily FOLE} data value (attribute instance) loosely corresponds to a 
$\text{\itshape prehension} = \text{\itshape relative}{\,\wedge\,}\text{\itshape physical}$.
%


\comment{
\subsection{Extended Data Model.}

Philosophy: 
\newline
each object is a nexus of its own attributes (we elevate objects); 
\newline
each relation, as a nexus, is an object in the world (we de-emphasize relations).
\newline
\newline
Two reasons why EER did not extend to the full formalism of FOL:
1. use of n-tuples, rather than lists, for signatures and tuples;
2. ignorance of the categorical rendering of quantification.
\newline
\newline
The fiber passage of structures along tuple morphisms,
and its resultant Grothendieck structure of structures, 
is key to the move from a static uni-verse to a dynamic multi-universe.
\newline
\newline
(add identifiers to the standard ER model)
\newline
\newline
1. subtype-supertype (entailment)
\newline
2. union type (disjunction)
\newline
---------------------------------
\newline
3. intersection type
\newline
4. difference type
\newline
5. complement type
\newline
6. xor type
\newline
---------------------------------
\newline
6. projection (existential) operator etc.
\newline
\newline
define joins, projections, selections, etc.
}

\subsection{Linearization.}\label{sub:sub:sec:misc:linear}

The ``ontology log'' {\ttfamily Olog} formalism 
(Spivak and Kent~\cite{spivak:kent:olog})
is a category-theoretic model for knowledge representation.
As we indicate below,
there is a sense in which
the {\ttfamily FOLE} representation subsumes the {\ttfamily Olog} representation
(and vice-versa).
\footnote{
This section is closely related to 
the discussion of the sketch and interpretation associated with a unified relational database in \S~4 of Kent~\cite{kent:db:sem}.
}
Let $\mathcal{M}$ be a {\ttfamily FOLE} structure with 
schema $\mathcal{S} = {\langle{R,\sigma,X}\rangle}$,
universe $\mathcal{U} = {\langle{K,\tau,Y}\rangle}$,
type domain $\mathcal{A} = {\langle{X,Y,\models_{\mathcal{A}}}\rangle}$,
entity classification $\mathcal{E} = {\langle{R,K,\models_{\mathcal{E}}}\rangle}$, and
list designation ${\langle{\sigma,\tau}\rangle} : \mathcal{E} \rightrightarrows \mathrmbf{List}(\mathcal{A})$.

\newpage

By using the extent operator
(\S~\ref{sub:sub:sec:data:model:struc}),
the following classifications and sets are informationally equivalent.
\begin{center}
{\scriptsize{\setlength{\extrarowheight}{2pt}$\begin{array}{|l@{\hspace{8pt}}|l@{\hspace{3pt}$=$\hspace{3pt}}l|}
\multicolumn{1}{l|}{\text{\itshape{classification}}\;\;}
& \multicolumn{2}{l}{\,\text{\itshape{set}}}
\\\hline
\,\mathcal{E} 
& \,\coprod\mathrmbfit{ext}_{\mathcal{E}}
& \{ (r,k) \mid r \in R, k \in K, k \models_{\mathcal{E}} r \}
\\
\,\mathcal{A} 
& \,\coprod\mathrmbfit{ext}_{\mathcal{A}}
& \{ (x,y) \mid x \in X, y \in Y, y \models_{\mathcal{A}} x \}
\\
\,\mathrmbf{List}(\mathcal{A}) 
& \,\coprod\mathrmbfit{ext}_{\mathrmbf{List}(\mathcal{A})}
& \{ (I,s,t) \mid (I,s) \in \mathrmbf{List}(X), (I,t) \in \mathrmbf{List}(Y), (I,t) \models_{\mathrmbf{List}(\mathcal{A})} (I,s) \}
\\\hline
\end{array}$}}
\end{center}
The designation property defines a function
$\coprod\mathrmbfit{ext}_{\mathcal{E}}
\xrightarrow{\coprod\mathrmbfit{ext}_{{\langle{\sigma,\tau}\rangle}}}
\coprod\mathrmbfit{ext}_{\mathrmbf{List}(\mathcal{A})}$,
which is the sum of the
extent diagram morphism 
$\mathrmbfit{ext}_{{\langle{\sigma,\tau}\rangle}}
= {\langle{\sigma,\tau_{{\scriptscriptstyle{\llcorner}}}}\rangle}
: {\langle{R,\mathrmbfit{ext}_{\mathcal{E}}}\rangle}
\rightarrow
{\langle{\mathrmbf{List}(X),\mathrmbfit{ext}_{\mathrmbf{List}(\mathcal{A})}}\rangle}$
(Fig.~\ref{fig:interp:struc:obj} of \S~\ref{sub:sub:sec:data:model:struc}).
By flattening the resulting lists,
for each $(r,k){\,\in\,}\coprod\mathrmbf{ext}_{\mathcal{E}}$,
we know that
$(s_{i},t_{i}){\,\in\,}\coprod\mathrmbfit{ext}_{\mathcal{A}}$ 
for each index $i\in\alpha(r)=I$ in the common arity.
%
%
Hence,
the following linearization set
\footnote{The linearization set $\mathrmbfit{lin}(\mathcal{M})$ loosely corresponds to the following:
(1) 
the table used in the entity-attribute-value ({\ttfamily EAV}) data model,
where data is recorded in three columns:
the entity component is a foreign key into an object information table,
the attribute component is a foreign key into an attribute definition table, and
the value component is the value of the attribute; and
(2)
the labelled directed graph used in the resource description framework ({\ttfamily RDF}) data model,
where each subject-predicate-object triple is regarded as an edge in the graph.}
is equivalent in information to the structure $\mathcal{M}$:
\[\mbox{\footnotesize{$\mathrmbfit{lin}(\mathcal{M})=
\Bigl\{ 
{\langle{\underset{\scriptstyle{\text{ent}}}{\underbrace{(r,k)}},\underset{\scriptstyle{\text{attr}}}{\underbrace{i,(s_{i},t_{i})}}}\rangle} 
\mid 
(r,k){\,\in\,}\coprod\mathrmbf{ext}_{\mathcal{E}},
\sigma(r) = {\langle{I,s}\rangle},
\tau(k) = {\langle{I,t}\rangle},
i{\,\in\,}I
\Bigr\}
$.}\normalsize}\]
The {\ttfamily FOLE} representation is binary,
since it has two kinds of type, sorts (attribute types) and entity types.
The {\ttfamily Olog} representation is unary, 
since it has only one kind of type, the abstract concept.
\footnote{See \S~4.4 of \cite{spivak:kent:olog}
for further discussion of binary/unary knowledge representations.}
However, 
the {\ttfamily FOLE} representation can be transformed to the {\ttfamily Olog} representation by the process of \emph{linearization}.
If we restrict {\ttfamily FOLE} to the unified model, 
identifying entities with attributes $\mathcal{E}{\;=\;}\mathcal{A}$,
by separating the type/instance information in 
the linearization set 
$\mathrmbfit{lin}(\mathcal{M})$
via the extent operator 
along the classification dimension (Fig.~\ref{fig:fole:data:model}),
we get the basis for the {\ttfamily Olog} data model,
consisting of three notions: types, aspects, and facts.
Types, which represent things, are depicted by nodes in {\ttfamily Olog};
aspects, which represent functional relationships between things, are depicted by edges in {\ttfamily Olog}; and
facts, which represent assertions, are depicted by path equations in {\ttfamily Olog}.

{\ttfamily Olog} types and aspects are covered by the linearization process discussed here.
{\ttfamily Olog} facts correspond to the formalism 
discussed (indirectly) in the {\ttfamily FOLE} superstructure (Kent~\cite{kent:fole:era:supstruc})
and (more directly) in the full form of {\ttfamily FOLE} (Kent~\cite{kent:iccs2013}).
\begin{center}
{{\begin{tabular}{c}
\setlength{\unitlength}{0.55pt}\begin{picture}(280,130)(-30,-30)
\put(20,80){\makebox(0,0){\footnotesize{$\underset{\scriptscriptstyle{\text{typ}}}{r}
\xrightarrow{\;(r,i)\;}
\underset{\scriptscriptstyle{\text{typ}}}{s_{i}}$}}}
\put(110,80){\makebox(0,0)[l]{\footnotesize{$\left.\rule{0pt}{10pt}\right\}$}}}
\put(150,80){\makebox(0,0){\footnotesize{$\mathrmbf{S}$}}}
\put(190,80){\makebox(0,0)[l]{\footnotesize{a schema context}}}
\thicklines
\put(20,55){\vector(0,-1){30}}
\put(40,40){\makebox(0,0){\scriptsize{$\mathrmbfit{ext}$}}}
\put(150,60){\vector(0,-1){40}}
\put(158,40){\makebox(0,0)[l]{\footnotesize{$\mathrmbfit{M}$}}}
\put(190,40){\makebox(0,0)[l]{\footnotesize{an interpretation passage}}}
\put(20,0){\makebox(0,0){\scriptsize{$
\underset{\mathrmbfit{ext}(r)}{\mathrmbfit{M}(r)}
\xrightarrow[\tau_{r}{\,\cdot\,}\mathrmbfit{tup}(\pi_{i})]{\;\;\mathrmbfit{M}(i)\;\;}
\underset{\mathrmbfit{ext}(s_{i})}{\mathrmbfit{M}(s_{i})}$}}}
\put(110,0){\makebox(0,0)[l]{\footnotesize{$\left.\rule{0pt}{10pt}\right\}$}}}
\put(150,0){\makebox(0,0){\footnotesize{$\mathrmbf{Set}$}}}
\put(190,0){\makebox(0,0)[l]{\footnotesize{the context of sets}}}
\put(20,-30){\makebox(0,0){\scriptsize{$k\xmapsto{\;\;\;\;}t_{i}$}}}
\end{picture}
\end{tabular}}}
\end{center}
%

\newpage

\begin{description}
\item[{\ttfamily Olog} Database Schema $\mathrmbf{S}$:] 
By projecting the type components out of the linearization set $\mathrmbfit{lin}(\mathcal{M})$,
define a graph $\mathcal{G}$
whose node set is $R=X$ and whose edge set is
$\coprod_{r\in{R=X}}\alpha(r)$.
Picture the graph $\mathcal{G}$ as follows: 
\[\mbox{\footnotesize{$
\mathcal{G} = 
\Bigl\{ r\xrightarrow{(r,i)}s_{i}
\;\Big|\; 
\sigma(r) = {\langle{I,s}\rangle},\;
i{\,\in\,}I,
\text{ with }
r,s_{i} \in R=X
\Bigr\}\;.
$}\normalsize}\]
The schema mathematical context
$\mathrmbf{S}=\mathcal{G}^{\ast}$ is the path context of graph $\mathcal{G}$.
The graph $\mathcal{G}$ corresponds to a many-sorted unary algebraic language $\mathcal{O}={\langle{X,\Omega}\rangle}$, 
where $\Omega$ is essentially the opposite of $\mathcal{G}$:
the collection of sets of function (operator) symbols
$\Omega = 
\{ \Omega_{s_{i},{\langle{1,r}\rangle}} 
\mid \text{sort}\;s_{i} \in X, \text{unary signature}\;{\langle{1,r}\rangle} \in \mathrmbf{List}(X) \}$
with each $(r,i) \in \Omega_{s_{i},{\langle{1,r}\rangle}}$
being an $s_{i}$-sorted $r={\langle{1,r}\rangle}$-ary function symbol
$s_{i} \xrightharpoondown{(r,i)} r={\langle{1,r}\rangle}$.
\footnote{Define ${\langle{X,\Omega}\rangle}$-term
${\langle{I,s}\rangle} \xrightharpoondown{r} {\langle{1,r}\rangle} = r$
as the cotupling of
$\{ s_{i} \xrightharpoondown{(r,i)} r={\langle{1,r}\rangle} \mid i{\,\in\,}I \}$,
so that
$s_{i} \xrightharpoondown{(r,i)} r={\langle{1,r}\rangle}$
is the composite
$s_{i} \xrightharpoondown{\pi_{i}} {\langle{I,s}\rangle} \xrightharpoondown{r} {\langle{1,r}\rangle} = r$.}
The opposite of the schema context is a subcontext (no cotupling) of the term context
$\mathrmbf{Term}_{\mathcal{O}}\;\xhookleftarrow{\mathrmbfit{inc}}\;\mathrmbf{S}^{\text{op}}$.
An {\ttfamily Olog} fact corresponds to 
a linear $\mathcal{O}$-equation
$(t{\,=\,}\hat{t}) : {\langle{1,s}\rangle}{\,\rightharpoondown\,}{\langle{1,s'}\rangle}$.
\newline
\item[{\ttfamily Olog} Database Instance $\mathrmbfit{M}$:] 
Using extent,
define a passage $\mathrmbf{S}=\mathcal{G}^{\ast}\xrightarrow{\mathrmbfit{M}}\mathrmbf{Set}$
which maps a node $r$ in $\mathrmbf{S}$ 
to the extent $\mathrmbfit{M}(r) = K(r) = \mathrmbfit{ext}(r)$
and maps an edge $r\xrightarrow{(r,i)}s_{i}$ in $\mathrmbf{S}$
to the function
$\mathrmbfit{M}(r)\xrightarrow[\tau_{r}{\,\cdot\,}\mathrmbfit{tup}(\pi_{i})]{\mathrmbfit{M}(r,i)}\mathrmbfit{M}(s_{i})$.
\footnote{A projection
{\footnotesize{$
\mathrmbfit{M}(r)=\mathrmbfit{ext}(r)
\xrightarrow{\tau_{r}}\mathrmbfit{tup}(I,s)
\xrightarrow{\mathrmbfit{tup}(\pi_{i})}\mathrmbfit{tup}(1,s_{i})=\mathrmbfit{ext}(s_{i})=\mathrmbfit{M}(s_{i})
$}\normalsize}
of the tabular interpretation in \S~\ref{sub:sub:sec:data:model:interp}.}
%
%
%
The functional language $\mathcal{O}={\langle{X,\Omega}\rangle}$ mentioned above has 
an $\mathcal{O}$-algebra ${\langle{A,\delta}\rangle}$
consisting of the $X$-sorted collection of extents
$\{ A_{x} = \mathrmbfit{ext}(x) \subseteq Y \mid x \in X \}$,
where $\delta$ assigns 
the ${\langle{1,r}\rangle}$-ary $s_{i}$-sorted function 
$\mathrmbfit{M}(r)=A_{r}=A^{{\langle{1,r}\rangle}}
\xrightarrow[\mathrmbfit{M}(r,i)]{\delta_{(r,i)}}
A_{s_{i}}=\mathrmbfit{M}(s_{i})$
to each function symbol $s_{i} \xrightharpoondown{(r,i)} r={\langle{1,r}\rangle}$.
\footnote{$\delta$ assigns 
the table
$\mathrmbfit{M}(r)=\mathrmbfit{ext}(r)
\xrightarrow[\tau_{r}]{\delta_{r}}
\mathrmbfit{tup}(I,s)=\prod_{i{\,\in\,}I}A_{s_{i}}=\prod_{i{\,\in\,}I}\mathrmbfit{M}(s_{i})$
(a tupling)
to the cotupling
${\langle{I,s}\rangle} \xrightharpoondown{r} r$.}
%
The interpretation passage 
$\mathrmbf{Term}_{\mathcal{O}}^{\mathrm{op}}\xrightarrow{\mathrmbfit{A}}\;\mathrmbf{Set}$
inductively define by this $\mathcal{O}$-algebra 
contains the database instance as a subfunctor
$\mathrmbfit{M}=\mathrmbfit{inc}^{\mathrm{op}}{\circ\;}\mathrmbfit{A}$.
Any equation satisfied by the {\ttfamily Olog} database interpretation $\mathrmbfit{M}$
is also satisfied by the interpretation passage $\mathrmbfit{A}$.
\footnote{The interpretation passage satisfies an $\mathcal{O}$-equation
$(t{\,=\,}\hat{t}) : {\langle{I,s}\rangle}{\,\rightharpoondown\,}{\langle{I',s'}\rangle}$
when the operations coincide
$\mathrmbfit{A}(t) = \mathrmbfit{A}(\hat{t}) :
\mathrmbfit{A}(I,s)\leftarrow\mathrmbfit{A}(I',s')$ in $\mathrmbf{Set}$ (see Kent~\cite{kent:iccs2013}).}
\end{description}

\comment{
\newpage
\subsubsection{Tuple Calculus.}\label{sub:sub:sec:misc:tup:calc}

For the tuple calculus,
we need to enrich with (nullary, or constant symbol) operator domains and equational presentations.
In the full form of {\ttfamily FOLE},
equations between parallel pairs of term vectors 
${\langle{I',s'}\rangle} \xrightharpoondown{t,t'} {\langle{I,s}\rangle}$
can be utilized.
See appendix \S~\ref{sub:sec:func:lang}.

\paragraph{Relational Database.}

The tuple relational calculus is a query language for relational databases.
Here we reiterate some important definitions from 
the {\ttfamily ERA} Data Model \S~\ref{sub:sec:era:data:model} and
{\ttfamily FOLE} Components \S~\ref{sub:sec:fole:comps}.

The basic relational building block is the type domain
$\mathcal{A} = {\langle{X,Y,\models_{\mathcal{A}}}\rangle}$
consisting of a set of sorts 
$X$,
a set of data values 
$Y$ and
a classification relation $\models_{\mathcal{A}}{\,\subseteq\,}X{\times}Y$. 
For each sort 
$x \in X$,
the data domain of that sort is the $\mathcal{A}$-extent
$\mathcal{A}_{x}=\mathrmbfit{ext}_{\mathcal{A}}(x) = \{ y \in Y \mid y \models_{\mathcal{A}} x \}$.
The passage
$X \xrightarrow{\mathrmbfit{ext}_{\mathcal{A}}} {\wp}Y$
maps a sort $x{\,\in\,}X$ to its data domain $\mathcal{A}_{x}{\;\subseteq\;}Y$.
A schema $\mathcal{S}={\langle{R,\sigma,X}\rangle}\in\mathrmbf{Sch}$
consists of
a set of sorts 
$X$, 
a set of relation names 
$R$ and
a signature map $R \xrightarrow{\sigma} \mathrmbf{List}(X)$
that associates a signature (header) 
$\sigma(r) = {\langle{I,s}\rangle} \in \mathrmbf{List}(X)$
with each relation name $r \in R$.
There is an associated arity function
$R\xrightarrow[\sigma{\,\circ\,}\mathrmbfit{set}]{\alpha}\mathrmbf{Set}
:r\xmapsto{\sigma}{\langle{I,s}\rangle}\xmapsto{\mathrmbfit{set}}I$.
Given 
type domain $\mathcal{A} = {\langle{X,Y,\models_{\mathcal{A}}}\rangle}$
and
schema $\mathcal{S}={\langle{R,\sigma,X}\rangle}$
that share sort-set $X$,
a relational database is an interpretation function
$R\xrightarrow{\;\mathrmbfit{I}\;}|\mathrmbf{Rel}(\mathcal{A})|$
%
(see (\ref{eqn:rel:ent:typ}))
that interprets a relation name $r \in {R}$ with signature $\sigma(r) = {\langle{I,s}\rangle}$
as a subset of tuples (relation)
$\mathrmbfit{I}(r) \in {\wp}\mathrmbfit{tup}_{\mathcal{A}}(I,s) = \mathrmbf{Rel}_{\mathcal{A}}(I,s)$
of the fiber relational order
${\langle{\mathrmbf{Rel}_{\mathcal{A}}(I,s),\subseteq}\rangle}$. 

A variable declaration
$\mathcal{V}={\langle{V,\psi,X}\rangle}$
consists of 
a set $V$ of tuple variables
with a declaration map $V \xrightarrow{\mathring{\sigma}} \mathrmbf{List}(X)$
that associates a signature (header) 
$\mathring{\sigma}(v) = {\langle{I,s}\rangle} \in \mathrmbf{List}(X)$
with each tuple variable $v \in V$.
There is an associated arity function
$V\xrightarrow[\mathring{\sigma}{\,\circ\,}\mathrmbfit{set}]{\alpha}\mathrmbf{Set}
:v\xmapsto{\mathring{\sigma}}{\langle{I,s}\rangle}\xmapsto{\mathrmbfit{set}}I$.
A constant declaration
$\mathcal{C}={\langle{C,\gamma,X}\rangle}$
consists of 
a set $C$ of constnat symbols
with a declaration map $C \xrightarrow{\gamma} X$
that associates a sort 
$\gamma(c) = x \in X$
with each constant symbol $c \in C$,
and hence partitions the set of constants
$C = \bigcup \{ C_{x} \mid x \in X \}$
with $C_{x} = \gamma^{-1}(x)$ the constant symbols of type $x$.


\comment{
\footnote{
An $X$-signature (header) ${\langle{I,s}\rangle}$ is a type list,
where $I \xrightarrow{s} X$ is a map 
from an indexing set (arity) (of column names) $I$ to the set of sorts $X$.
A more visual representation for this signature is 
$({\cdots\,}s_{i}{\,\cdots}{\,\mid\,}i{\,\in\,}I)$.
The mathematical context of $X$-signatures is $\mathrmbf{List}(X)$.
The signatures (headers) for a database are lists of sorts;
that is,
elements ${\langle{I,s}\rangle}$ in $\mathrmbf{List}(X)$. 
Pairs $(i : s_{i})$ from a signature ${\langle{I,s}\rangle}$ are called attributes.
A $Y$-tuple (row) ${\langle{J,t}\rangle}$ is an list of data values, 
where $J \xrightarrow{t} Y$ is a map 
from an indexing set (arity) $J$ 
to the set of data values 
$Y$.
%
%
The mathematical context of $Y$-tuples is
$\mathrmbf{List}(Y)$.}
~\footnote{The fibered context $\mathrmbf{Rel}(\mathcal{A})$
has indexed $\mathcal{A}$-relations ${\langle{I,s,R}\rangle}$ as objects
with $R\subseteq\mathrmbfit{tup}_{\mathcal{A}}(I,s)$ and 
morphisms ${\langle{I',s',R'}\rangle}\xrightarrow{h}{\langle{I,s,R}\rangle}$
consisting of an $X$-signature morphism ${\langle{I',s'}\rangle}\xrightarrow{h}{\langle{I,s}\rangle}$
satisfying the entailment condition
$R'{\;\supseteq\;}{\scriptstyle\sum}_{h}(R)$
\underline{or}
${h}^{\ast}(R'){\;\supseteq\;}{R}$
defined in terms of the direct/inverse operators along the tuple function
$\mathrmbfit{tup}_{\mathcal{A}}(I',s')\xleftarrow{\mathrmbfit{tup}_{\mathcal{A}}(h)}\mathrmbfit{tup}_{\mathcal{A}}(I,s)$.}
} 

\newpage
\paragraph{Atoms.}

To construct queries,
we assume 
a schema $\mathcal{S}={\langle{R,\sigma,X}\rangle}$, 
a constant declaration $\mathcal{C}=\{C_{x} \mid x \in X \}$, and
a variable declaration $\mathcal{V}={\langle{V,\mathring{\sigma},X}\rangle}$
with a shared sort set $X$.
The constant declaration $\mathcal{C}=\{C_{x} \mid x \in X \}$
is a rudimentary $X$-sorted operator domain $\Omega = \{ \Omega_{x,{\langle{\emptyset,0_{X}}\rangle}} \mid x \in X \}$,
which is a collection of sets of function (operator) symbols,
where $c \in {\Omega}_{x,{\langle{\emptyset,0_{X}}\rangle}}$ 
is a function symbol of sort $x$ and empty signature ${\langle{\emptyset,0_{X}}\rangle}$
symbolized by $x \xrightharpoondown{c} {\langle{\emptyset,0_{X}}\rangle}$.
Hence,
the pair ${\langle{\mathcal{S},\mathcal{C}}\rangle}$ is a rudimentary first order language.
%
We then define the set of atomic formulas $\mathrmbfit{atom}(\mathcal{S},\mathcal{C},\mathcal{V})$ with the following rules.
\begin{itemize}
\item 
If $v,w \in V$ are two tuple variables with common signature $\mathring{\sigma}(v)={\langle{I,s}\rangle}=\mathring{\sigma}(w)$
and $i \in I$ is an index in the common arity, 
then the formula ``$v.i = w.i$'' is in $\mathrmbfit{atom}(\mathcal{S},\mathcal{C},\mathcal{V})$.
\item 
If $v \in V$ with signature $\mathring{\sigma}(v)={\langle{I,s}\rangle}$, 
$i \in I$ is an index in the arity with sort $s_{i}$, and 
$c \in C_{s_{i}}$ is a constant symbol of type $s_{i}$,
then the formula ``$v.i = c$'' is in $\mathrmbfit{atom}(\mathcal{S},\mathcal{C},\mathcal{V})$.
\hfill
${\langle{I,s}\rangle} \xleftharpoondown{i} s_{i} \xrightharpoondown{c} {\langle{\emptyset,0_{X}}\rangle}$
\hfill\mbox{}
\item 
If a relation symbol $r \in R$ and a tuple variable $v \in V$ 
have common signature $\sigma(r)={\langle{I,s}\rangle}=\mathring{\sigma}(v)$,
then the formula ``$r(v)$'' is in $\mathrmbfit{atom}(\mathcal{S},\mathcal{C},\mathcal{V})$.
\end{itemize}
Given type domain $\mathcal{A} = {\langle{X,Y,\models_{\mathcal{A}}}\rangle}$
and schema $\mathcal{S}={\langle{R,\sigma,X}\rangle}$
that share sort-set $X$,
assume there is 
\begin{itemize}
\item 
a database 
$R\xrightarrow{\;\mathrmbfit{I}\;}\mathrmbf{Rel}(\mathcal{A})$;
\item 
a tuple variable binding $V \xrightarrow{\mathring{\tau}} \mathrmbf{List}(Y)$ 
that maps tuple variables to tuples
satisfying the condition that
$v{\;\in\;}V$
implies
$\mathring{\tau}(v){\;\models_{\mathrmbf{List}(\mathcal{A})}\;}\mathring{\sigma}(v)$;
or
$\mathring{\tau}(v){\,\in\,}\mathrmbfit{tup}_{\mathcal{A}}(\mathring{\sigma}(v))$; and
%
\item 
a constant valuation $\{ C_{x} \xrightarrow{\delta_{x}} \mathrmbfit{ext}_{\mathcal{A}}(x) \mid x \in X \}$ 
(\S~\ref{sub:sec:func:lang})
... .
\end{itemize}
This is defines a structure
$\mathring{\mathcal{M}}=
{\langle{1_{V},\mathring{\sigma},\mathring{\tau},\mathcal{A}}\rangle}$
with identity entity classification
$1_{V}={\langle{V,V,\models_{1_{V}}}\rangle}$.
The traditional interpretation 
maps a tuple variable $v{\;\in\;}V$ with signature $\mathring{\sigma}(v) = {\langle{I,s}\rangle}$
to the singleton set of tuples
$\mathrmbfit{I}_{\mathring{\mathcal{M}}}(v) = {\wp}\mathring{\tau}(\mathrmbfit{ext}_{1_{V}}(v))
= \{ \mathring{\tau}(v) \}
\in {\wp}\mathrmbfit{tup}_{\mathcal{A}}(I,s) = \mathrmbf{Rel}_{\mathcal{A}}(I,s)$.
\footnote{Recall that,
a list designation ${\langle{\sigma,\tau}\rangle} : \mathcal{E} \rightrightarrows \mathrmbf{List}(\mathcal{A})$,
with signature map $R \xrightarrow{\sigma} \mathrmbf{List}(X)$
and tuple map $K \xrightarrow{\tau} \mathrmbf{List}(Y)$,
has defining condition:
if entity $k{\,\in\,}K$ is of type $r{\,\in\,}R$,
then the description tuple $\tau(k)={\langle{J,t}\rangle}$ 
is the same ``size'' ($J=I$) as
the signature $\sigma(r)={\langle{I,s}\rangle}$ 
and each data value $t_{n}$ is of sort $s_{n}$.}
The formal semantics of such atoms is defined as follows.
\begin{itemize}
\item 
``$v.i = w.i$'' holds 
\underline{iff}
$\mathrmbfit{tup}_{\mathcal{A}}(i)(\mathring{\tau}(v))=
\mathrmbfit{tup}_{\mathcal{A}}(i)(\mathring{\tau}(w))$
in $\mathrmbfit{ext}_{\mathcal{A}}(s_{i})$.
%
\footnote{Recall that,
for any $X$-signature ${\langle{I,s}\rangle}$ and element $i \in I$,
${\langle{1,s_{i}}\rangle}\xrightarrow{i}{\langle{I,s}\rangle}$ is a signature morphism,
and hence defines a tuple function (projection)
\[\mbox{\footnotesize{$
\mathrmbfit{ext}_{\mathcal{A}}(s_{i})=\mathrmbfit{tup}_{\mathcal{A}}(1,s_{i})
\xleftarrow{\mathrmbfit{tup}_{\mathcal{A}}(i)} 
\mathrmbfit{tup}_{\mathcal{A}}(I,s)
$.}\normalsize}\]}
\item 
``$v.i = c$'' holds 
\underline{iff}
$\mathrmbfit{tup}_{\mathcal{A}}(i)(\mathring{\tau}(v)) = \delta_{c}$
in $\mathrmbfit{ext}_{\mathcal{A}}(s_{i})$.
%
\item 
``$r(v)$'' holds 
\underline{iff}
$\mathring{\tau}(v) \in \mathrmbfit{I}(r)$.
\end{itemize}
{\fbox{\bfseries{Finish the definition of the tuple calculus.}}}






}

\section{Conclusion}

The {\ttfamily ERA} data model
describes the conceptual model for a community of discourse,
which can be used as the foundation for designing relational databases.
The first-order logical environment {\ttfamily FOLE}
provides a rigorous and principled approach to distributed inter-operable first-order information systems, 
which integrates ontologies and databases into a unified framework.
In this paper, 
we have discussed in detail the representation of elements of the {\ttfamily ERA} data model 
with components of the {\ttfamily FOLE} logical environment,
described interpretation basics,
and illustrated the connections between {\ttfamily FOLE} and other data models in knowledge representation.
%
We 
complete the presentation of the {\ttfamily FOLE} logical environment 
by discussing
the formalism and semantics of many-sorted logic in the paper Kent~\cite{kent:fole:era:supstruc} 
and
database interpretation in the papers
Kent~\cite{kent:fole:era:tbl}
and
Kent~\cite{kent:fole:era:db}.



\appendix
\section{The Fibered Context of Structures.}\label{sub:sec:struc:fbr:pass}

In order to allow communities of discourse to interoperate,
we define the notion of a morphism between two structures 
that respects the {\ttfamily ERA} data model.
This describes
the mathematical context of structures
$\mathrmbf{Struc}$ as a \emph{fibered} mathematical context 
in two orientations: 
the Grothedieck construction of 
the schema indexed mathematical context 
$\mathrmbf{Sch}^{\mathrm{op}}{\!\xrightarrow{\;\mathrmbfit{struc}^{\curlywedge}}\,}\mathrmbf{Cxt}$ or 
the Grothedieck construction of 
the universe indexed mathematical context 
$\mathrmbf{Univ}^{\mathrm{op}}{\!\xrightarrow{\;\mathrmbfit{struc}^{\curlyvee}}\,}\mathrmbf{Cxt}$.
The schema indexed mathematical context of structures 
is used in 
the paper \cite{kent:fole:era:supstruc} on {\ttfamily FOLE} superstructure
to establish the institutional aspect of the {\ttfamily FOLE}.
Each orientation is developed in three steps:
the structure fiber context for a single indexing object,
the structure fiber passage along an indexing morphism,
the fibered mathematical context $\mathrmbf{Struc}$ of structures and structure morphisms
as the Grothedieck construction for this orientation.
This dual development of
the mathematical context of structures
$\mathrmbf{Struc}$ 
(Prop.~\ref{prop:struc:mor:dual:facts})
is based upon and mirrors 
a dual development of
the mathematical context of classifications
$\mathrmbf{Cls}$ 
(Prop.~\ref{prop:cls:mor:dual:facts}).

\newpage
\subsection{An Exemplar.}\label{sub:sub:sec:cls:fbr:cxt}

The mathematical context of classifications
$\mathrmbf{Cls}$ is a \emph{fibered} mathematical context 
in two orientations: 
the Grothedieck construction of 
the type set indexed mathematical context 
$\mathrmbf{Set}^{\mathrm{op}}{\!\xrightarrow{\;\mathrmbfit{cls}^{\curlywedge}}\,}\mathrmbf{Cxt}$ or 
the Grothedieck construction of 
the instance set indexed mathematical context 
$\mathrmbf{Set}^{\mathrm{op}}{\!\xrightarrow{\;\mathrmbfit{cls}^{\curlyvee}}\,}\mathrmbf{Cxt}$.
\begin{proposition}\label{prop:cls:mor:dual:facts}
Any infomorphism 
$\mathcal{C}_{2}={\langle{X_{2},Y_{2},\models_{2}}\rangle}
\xrightleftharpoons{{\langle{f,g}\rangle}}
{\langle{X_{1},Y_{1},\models_{1}}\rangle}=\mathcal{C}_{1}$
in the mathematical context of classifications $\mathrmbf{Cls}$,
with type function projection $X_{2}\xrightarrow{f}X_{1}$
and instance function projection $Y_{2}\xleftarrow{g}Y_{1}$,
has dual factorizations
\begin{center}
{{\begin{tabular}{c}
\setlength{\unitlength}{0.6pt}
\begin{picture}(120,140)(0,-40)
\put(60,90){\makebox(0,0){\footnotesize{$\underset{\textstyle{\overbrace{\rule{80pt}{0pt}}}}{\mathrmbf{Cls}(X_{2})}$}}}
\put(60,-30){\makebox(0,0){\footnotesize{$\overset{\textstyle{\underbrace{\rule{80pt}{0pt}}}}{\mathrmbf{Cls}(Y_{1})}$}}}
\put(0,60){\makebox(0,0){\footnotesize{$\mathcal{C}_{2}$}}}
\put(92,60){\makebox(0,0)[l]{\footnotesize{$\mathrmbfit{cls}^{\curlywedge}_{f}(\mathcal{C}_{1})={\langle{X_{2},Y_{1},\models_{f}}\rangle}$}}}
\put(120,0){\makebox(0,0){\footnotesize{$\mathcal{C}_{1}$}}}
\put(28,0){\makebox(0,0)[r]{\footnotesize{${\langle{X_{2},Y_{1},\models_{g}}\rangle}=\mathrmbfit{cls}^{\curlyvee}_{g}(\mathcal{C}_{2})$}}}
\put(-7,30){\makebox(0,0)[r]{\scriptsize{$g$}}}
\put(7,30){\makebox(0,0)[l]{\tiny{${\langle{1_{X_{2}},g}\rangle}$}}}
\put(73,10){\makebox(0,0){\scriptsize{$f$}}}
\put(73,-12){\makebox(0,0){\tiny{${\langle{f,1_{Y_{1}}}\rangle}$}}}
\put(48,70){\makebox(0,0){\scriptsize{$g$}}}
\put(48,48){\makebox(0,0){\tiny{${\langle{1_{X_{2}},g}\rangle}$}}}
\put(113,30){\makebox(0,0)[r]{\scriptsize{$f$}}}
\put(127,30){\makebox(0,0)[l]{\tiny{${\langle{f,1_{Y_{1}}}\rangle}$}}}
%
%
\put(48,57){\makebox(0,0){\large{$\xrightleftharpoons{\;\;\;\;\;\;\;\;}$}}}
\put(72,-3){\makebox(0,0){\large{$\xrightleftharpoons{\;\;\;\;\;\;\;\;}$}}}
\put(118,30){\makebox(0,0){\large{$\upharpoonleft$}}}
\put(122,30){\makebox(0,0){\large{$\downharpoonright$}}}
\put(-2,30){\makebox(0,0){\large{$\upharpoonleft$}}}
\put(2,30){\makebox(0,0){\large{$\downharpoonright$}}}
\end{picture}
\end{tabular}}}
\end{center}
where 
$\mathrmbfit{cls}^{\curlywedge}_{f}(\mathcal{C}_{1})
= f^{-1}(\mathcal{C}_{1})
= {\langle{X_{2},Y_{1},\models_{f}}\rangle}
= {\langle{X_{2},Y_{1},\models_{g}}\rangle}
= g^{-1}(\mathcal{C}_{2})
= \mathrmbfit{cls}^{\curlyvee}_{g}(\mathcal{C}_{2})$.
\end{proposition}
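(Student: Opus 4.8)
The plan is to unwind the two inverse-image constructions and observe that the asserted coincidence of classifications \emph{is} the infomorphism biconditional for ${\langle{f,g}\rangle}$; everything else is a routine check of the two component maps.

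First I would recall the two operations on classifications in play. Along the type function $X_{2}\xrightarrow{f}X_{1}$, the inverse-image classification $f^{-1}(\mathcal{C}_{1})=\mathrmbfit{cls}^{\curlywedge}_{f}(\mathcal{C}_{1})$ has type set $X_{2}$ and instance set $Y_{1}$, with classification relation $\models_{f}$ given by $y_{1}\models_{f}x_{2}$ \underline{iff} $y_{1}\models_{1}f(x_{2})$; it carries the canonical infomorphism $f^{-1}(\mathcal{C}_{1})\xrightleftharpoons{{\langle{f,1_{Y_{1}}}\rangle}}\mathcal{C}_{1}$, cartesian for the type-indexed fibration $\mathrmbf{Set}^{\mathrm{op}}\xrightarrow{\mathrmbfit{cls}^{\curlywedge}}\mathrmbf{Cxt}$. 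Dually, along the instance function $Y_{2}\xleftarrow{g}Y_{1}$, the inverse-image classification $g^{-1}(\mathcal{C}_{2})=\mathrmbfit{cls}^{\curlyvee}_{g}(\mathcal{C}_{2})$ has the \emph{same} underlying sets $X_{2}$ and $Y_{1}$, with classification relation $\models_{g}$ given by $y_{1}\models_{g}x_{2}$ \underline{iff} $g(y_{1})\models_{2}x_{2}$; it carries the canonical infomorphism $\mathcal{C}_{2}\xrightleftharpoons{{\langle{1_{X_{2}},g}\rangle}}g^{-1}(\mathcal{C}_{2})$, cartesian for the instance-indexed fibration $\mathrmbf{Set}^{\mathrm{op}}\xrightarrow{\mathrmbfit{cls}^{\curlyvee}}\mathrmbf{Cxt}$.

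The heart of the argument is one line: the defining biconditional of ${\langle{f,g}\rangle}$, namely $g(y_{1})\models_{2}x_{2}$ \underline{iff} $y_{1}\models_{1}f(x_{2})$ for all $x_{2}\in X_{2}$ and $y_{1}\in Y_{1}$, says exactly that $\models_{g}$ and $\models_{f}$ are the same subset of $X_{2}{\times}Y_{1}$, whence $\mathrmbfit{cls}^{\curlyvee}_{g}(\mathcal{C}_{2})=\mathrmbfit{cls}^{\curlywedge}_{f}(\mathcal{C}_{1})$; write $\mathcal{C}_{{\langle{f,g}\rangle}}$ for this common classification. It then remains to check the factorization: that ${\langle{1_{X_{2}},g}\rangle}$ is a bona fide infomorphism $\mathcal{C}_{2}\to\mathcal{C}_{{\langle{f,g}\rangle}}$ (again precisely the infomorphism axiom) and ${\langle{f,1_{Y_{1}}}\rangle}$ a bona fide infomorphism $\mathcal{C}_{{\langle{f,g}\rangle}}\to\mathcal{C}_{1}$, and that their componentwise composite recovers ${\langle{f,g}\rangle}$ by functoriality. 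The two ``dual'' factorizations are then the \emph{same} two arrows read in the two fibrations: in $\mathrmbfit{cls}^{\curlywedge}$ the arrow ${\langle{f,1_{Y_{1}}}\rangle}$ is cartesian while ${\langle{1_{X_{2}},g}\rangle}$ is vertical in $\mathrmbf{Cls}(X_{2})$; in $\mathrmbfit{cls}^{\curlyvee}$ the arrow ${\langle{1_{X_{2}},g}\rangle}$ is cartesian while ${\langle{f,1_{Y_{1}}}\rangle}$ is vertical in $\mathrmbf{Cls}(Y_{1})$.

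There is no substantive obstacle; the only care needed is with the variances --- $f$ acts covariantly on types and $g$ contravariantly on instances --- so that $f^{-1}(\mathcal{C}_{1})$ and $g^{-1}(\mathcal{C}_{2})$ genuinely live over the same pair $(X_{2},Y_{1})$ and the composite lands back at ${\langle{f,g}\rangle}$ rather than in a wrong fiber. Once the definitions are aligned, the proposition simply records that the infomorphism axiom is self-dual: it certifies at once that ${\langle{1_{X_{2}},g}\rangle}\colon\mathcal{C}_{2}\to f^{-1}(\mathcal{C}_{1})$ and ${\langle{f,1_{Y_{1}}}\rangle}\colon g^{-1}(\mathcal{C}_{2})\to\mathcal{C}_{1}$ are infomorphisms, and the two inverse images named there are one and the same classification.
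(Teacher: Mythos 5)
Your argument is correct and is essentially the paper's own proof: the paper likewise establishes the identity $\models_{g}\,=\,\models_{f}$ by the one-line chain of biconditionals $y_{1}\models_{g}x_{2}$ \underline{iff} $g(y_{1})\models_{\mathcal{C}_{2}}x_{2}$ \underline{iff} $y_{1}\models_{\mathcal{C}_{1}}f(x_{2})$ \underline{iff} $y_{1}\models_{f}x_{2}$, i.e.\ by reading the infomorphism axiom as the coincidence of the two inverse-image relations. Your additional remarks on the component infomorphisms and on which arrow is cartesian versus vertical in each fibration match the paper's surrounding discussion of the two factorizations.
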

\begin{proof}
$y_{1}{\,\models_{g}\,}x_{2}$
\underline{iff}
$g(y_{1}){\,\models_{\mathcal{C}_{2}}\,}x_{2}$
\underline{iff}
$y_{1}{\,\models_{\mathcal{C}_{1}}\,}f(x_{2})$
\underline{iff}
$y_{1}{\,\models_{f}\,}x_{2}$.
\end{proof}
The top-right factorization 
consists of 
a $\mathrmbf{Cls}(X_{2})$-morphism 
$\mathcal{C}_{2}\xrightleftharpoons[{\langle{1_{X_{2}},g}\rangle}]{g}\mathrmbfit{cls}^{\curlywedge}_{f}(\mathcal{C}_{1})$
and 
the $\mathcal{C}_{1}^{\text{th}}$ component 
$\mathrmbfit{cls}^{\curlywedge}_{f}(\mathcal{C}_{1})\xrightleftharpoons[{\langle{f,1_{Y_{1}}}\rangle}]{f}\mathcal{C}_{1}$
of a bridge
$\mathrmbfit{cls}^{\curlywedge}_{f}{\;\circ\;}\mathrmbfit{inc}_{X_{2}}
\xRightarrow{\,\grave{\chi}_{f}\;\,}\mathrmbfit{inc}_{X_{1}}$.
This factors through 
the $X_{2}$-classification $\mathrmbfit{cls}^{\curlywedge}_{f}(\mathcal{C}_{1})$,
which is the fiber passage image 
along the type function $X_{2}\xrightarrow{f}X_{1}$
of the the $X_{1}$-classification $\mathcal{C}_{1}$.
The left-bottom factorization consists of 
the $\mathcal{C}_{2}^{\text{th}}$ component 
$\mathcal{C}_{2}\xrightleftharpoons[{\langle{1_{X_{2}},g}\rangle}]{g}\mathrmbfit{cls}^{\curlyvee}_{g}(\mathcal{C}_{2})$
of a bridge
$\mathrmbfit{cls}^{\curlyvee}_{g}{\;\circ\;}\mathrmbfit{inc}_{Y_{1}}
\xLeftarrow{\;\,\acute{\chi}_{g}}\mathrmbfit{inc}_{Y_{2}}$
and
a $\mathrmbf{Cls}(Y_{1})$-morphism
$\mathrmbfit{cls}^{\curlyvee}_{g}(\mathcal{C}_{2})\xrightleftharpoons[{\langle{f,1_{Y_{1}}}\rangle}]{f}\mathcal{C}_{1}$.
This factors through 
the $Y_{1}$-classification $\mathrmbfit{cls}^{\curlyvee}_{g}(\mathcal{C}_{2})$,
which is the fiber passage image 
along the instance function $Y_{2}\xleftarrow{g}Y_{1}$
of the the $Y_{2}$-classification $\mathcal{C}_{2}$.
%

\newpage
\subsection{Schema Orientation.}\label{sub:sub:sec:sch:index}

The fibered mathematical context $\mathrmbf{Struc}$ of structures and structure morphisms
can be developed as
the Grothedieck construction of 
a schema indexed mathematical context.
This approach using schema indexing corresponds to the use of type-set indexing in \S~\ref{sub:sub:sec:cls:fbr:cxt},
and was the approach used in 
the paper 
(Kent \cite{kent:iccs2013}).
\begin{enumerate}
\item 
For a fixed schema $\mathcal{S}$,
we define the structure fiber context $\mathrmbf{Struc}(\mathcal{S})$.
\item 
For a schema morphism
$\mathcal{S}_{2}
\xRightarrow{{\langle{r,f}\rangle}}
\mathcal{S}_{1}$,
we define the structure fiber passage
$\mathrmbf{Struc}(\mathcal{S}_{2})\xleftarrow{\mathrmbfit{struc}^{\curlywedge}_{{\langle{r,f}\rangle}}}\mathrmbf{Struc}(\mathcal{S}_{1})$.
\rule[-8pt]{0pt}{10pt}
\item 
We define 
the fibered mathematical context $\mathrmbf{Struc}$ of structures and structure morphisms
to be the Grothedieck construction of 
the schema indexed mathematical context 
$\mathrmbf{Sch}^{\mathrm{op}}{\xrightarrow{\mathrmbfit{struc}^{\curlywedge}}}\;\mathrmbf{Cxt}$.
\end{enumerate}
%


\paragraph{Fixed Schema.}
%
Let
$\mathcal{S} = {\langle{R,\sigma,X}\rangle}\in\mathrmbf{Sch}$
be a fixed schema.
We define the notion of a morphism between two (fixed schema) $\mathcal{S}$-structures 
that respects the {\ttfamily ERA} data model.
In these morphisms,
the schema remains fixed,
but the attribute 
and entity instances (data values and keys)
are formally linked by maps that respect 
universe tuple, typed domain extent and entity interpretation. 
An $\mathcal{S}$-structure morphism
$\mathcal{M}_{2}\xrightleftharpoons{{\langle{k,g}\rangle}}\mathcal{M}_{1}$
over a fixed schema $\mathcal{S} = {\langle{R,\sigma,X}\rangle}$
is
a universe morphism  
$\mathrmbfit{univ}(\mathcal{M}_{2})
\xLeftarrow{\;{\langle{k,g}\rangle}\;}
\mathrmbfit{univ}(\mathcal{M}_{1})$,
where
$\mathrmbfit{attr}(\mathcal{M}_{2})\xrightleftharpoons{{\langle{1_{X},g}\rangle}}\mathrmbfit{attr}(\mathcal{M}_{1})$
is an infomorphism in $\mathrmbf{Cls}(X)$ over the sort set $X$, and
$\mathrmbfit{ent}(\mathcal{M}_{2})\xrightleftharpoons{{\langle{1_{R},k}\rangle}}\mathrmbfit{ent}(\mathcal{M}_{1})$
is an infomorphism in $\mathrmbf{Cls}(R)$ over the entity type set $R$.
\footnote{Hence,
the target type domain is the inverse image of the source type domain
$\mathrmbfit{attr}(\mathcal{M}_{1}) 
= g^{-1}(\mathrmbfit{attr}(\mathcal{M}_{2}))
= \mathrmbfit{cls}^{\curlyvee}_{g}(\mathrmbfit{attr}(\mathcal{M}_{2}))$
and
the target entity classification is the inverse image of the source entity classification
$\mathrmbfit{ent}(\mathcal{M}_{1}) 
= k^{-1}(\mathrmbfit{ent}(\mathcal{M}_{2}))
= \mathrmbfit{cls}^{\curlyvee}_{k}(\mathrmbfit{ent}(\mathcal{M}_{2}))$.
By combining entity/attribute inverse image classifications,
the target structure $\mathcal{M}_{1}$ is the inverse image 
$\mathrmbfit{struc}^{\curlyvee}_{{\langle{k,g}\rangle}}(\mathcal{M}_{2}) 
={\langle{k^{-1}(\mathrmbfit{ent}(\mathcal{M}_{2})),{\langle{\sigma,\tau_{1}}\rangle},g^{-1}(\mathrmbfit{attr}(\mathcal{M}_{2}))}\rangle}$
of the source structure $\mathcal{M}_{2}$
(\S~\ref{sub:sub:sec:univ:index}).
The two are bridged 
$\mathcal{M}_{2} 
\xrightleftharpoons{{\langle{k,g}\rangle}}
\mathrmbfit{struc}^{\curlyvee}_{{\langle{k,g}\rangle}}(\mathcal{M}_{2})=\mathcal{M}_{1}$
by the structure morphism.}
$\mathcal{S}$-structure morphisms compose component-wise.
Let $\mathrmbf{Struc}(\mathcal{S})$ 
denote the fiber context of structures
over the fixed schema $\mathcal{S}$.

\comment{
\begin{figure}
\begin{center}
{{\begin{tabular}{c}
\\
\setlength{\unitlength}{0.6pt}
\begin{picture}(320,200)(-100,-50)
\put(-30,60){\begin{picture}(0,0)(0,0)
\put(0,80){\makebox(0,0){\scriptsize{$R$}}}
\put(120,80){\makebox(0,0){\scriptsize{$R$}}}
\put(0,0){\makebox(0,0){\scriptsize{$K_{2}$}}}
\put(120,0){\makebox(0,0){\scriptsize{$K_{1}$}}}
\put(60,8){\makebox(0,0){\scriptsize{$k$}}}
\put(-2,40){\makebox(0,0)[r]{\scriptsize{$\scriptscriptstyle\models_{\mathcal{E}_{2}}$}}}
\put(118,45){\makebox(0,0)[r]{\scriptsize{$\scriptscriptstyle\models_{\mathcal{E}_{1}}$}}}
\put(20,80){\line(1,0){80}}
\put(100,0){\vector(-1,0){80}}
\put(0,65){\line(0,-1){50}}
\put(120,65){\line(0,-1){50}}
\end{picture}}
\put(0,0){\begin{picture}(0,0)(0,0)
\put(0,80){\makebox(0,0){\scriptsize{$\mathrmbf{List}(X)$}}}
\put(120,80){\makebox(0,0){\scriptsize{$\mathrmbf{List}(X)$}}}
\put(0,0){\makebox(0,0){\scriptsize{$\mathrmbf{List}(Y_{2})$}}}
\put(120,0){\makebox(0,0){\scriptsize{$\mathrmbf{List}(Y_{1})$}}}
\put(60,8){\makebox(0,0){\scriptsize{${\scriptstyle\sum}_{g}$}}}
\put(5,36){\makebox(0,0)[l]{\scriptsize{$\models_{\mathrmbf{List}(\mathcal{A}_{2})}$}}}
\put(125,36){\makebox(0,0)[l]{\scriptsize{$\models_{\mathrmbf{List}(\mathcal{A}_{1})}$}}}
\put(30,80){\line(1,0){57}}
\put(90,0){\vector(-1,0){60}}
\put(0,65){\line(0,-1){50}}
\put(120,65){\line(0,-1){50}}
\end{picture}}
\put(-5,110){\makebox(0,0)[l]{\scriptsize{$\sigma$}}}
\put(115,110){\makebox(0,0)[l]{\scriptsize{$\sigma$}}}
\put(-20,30){\makebox(0,0)[r]{\scriptsize{$\tau_{2}$}}}
\put(100,30){\makebox(0,0)[r]{\scriptsize{$\tau_{1}$}}}
\put(-23,130){\vector(1,-2){18}}
\put(97,130){\vector(1,-2){18}}
\put(-23,50){\vector(1,-2){18}}
\put(97,50){\vector(1,-2){18}}
\put(60,-45){\makebox(0,0){\footnotesize{$\underset{\textstyle{\stackrel{\textstyle{
\mathrmbfit{univ}(\mathcal{M}_{2})=\mathcal{U}_{2}={\langle{K_{2},\tau_{2},Y_{2}}\rangle}
\xLeftarrow{\;{\langle{k,g}\rangle}\;}
{\langle{K_{1},\tau_{1},Y_{1}}\rangle}=\mathcal{U}_{1}=\mathrmbfit{univ}(\mathcal{M}_{1})
}}{\scriptstyle{\text{{universe morphism}}}}}}{\underbrace{\rule{120pt}{0pt}}}$}}}
\put(-60,70){\makebox(0,0)[r]{\footnotesize{$\mathcal{M}_{2}\left\{\rule{0pt}{50pt}\right.$}}}
\put(180,70){\makebox(0,0)[l]{\footnotesize{$
\left.\rule{0pt}{50pt}\right\}\mathcal{M}_{1}=
\mathrmbfit{struc}^{\curlyvee}_{{\langle{k,g}\rangle}}(\mathcal{M}_{2}) 
$}}}
\put(195,50){\makebox(0,0)[l]{\footnotesize{$
={\langle{k^{-1}(\mathcal{E}_{2}),{\langle{\sigma,\tau_{1}}\rangle},g^{-1}(\mathcal{A}_{2})}\rangle}$}}}
\end{picture}
\\ \\ 
\end{tabular}}}
\end{center}
\caption{Fixed Schema $\mathcal{S}$-Structure Morphism}
\label{fig:fix:sch:struc:mor}
\end{figure}
}

\paragraph{Structure Fiber Passage.}

We define 
the indexed 
context 
$\mathrmbf{Sch}^{\mathrm{op}}{\xrightarrow{\mathrmbfit{struc}^{\curlywedge}}\!}\;\mathrmbf{Cxt}$.
%
%
Given a schema $\mathcal{S}$,
there is a fiber context of structures $\mathrmbf{Struc}(\mathcal{S})$ with that schema.
Given a schema morphism
$\mathcal{S}_{2}\xRightarrow{{\langle{r,f}\rangle}}\mathcal{S}_{1}$,
there is a fiber passage of structures
$\mathrmbf{Struc}(\mathcal{S}_{2})
\xleftarrow{\mathrmbfit{struc}^{\curlywedge}_{{\langle{r,f}\rangle}}}
\mathrmbf{Struc}(\mathcal{S}_{1})$:
a structure 
$\mathcal{M}_{1}={\langle{\mathcal{E}_{1},{\langle{\sigma_{1},\tau_{1}}\rangle},\mathcal{A}_{1}}\rangle}\in\mathrmbf{Struc}(\mathcal{S}_{1})$
is mapped to a structure 
$\mathcal{M}_{2}=\mathrmbfit{struc}^{\curlywedge}_{{\langle{r,f}\rangle}}(\mathcal{M}_{1}) 
={\langle{r^{-1}(\mathcal{E}_{1}),{\langle{\sigma_{2},\tau_{1}}\rangle},f^{-1}(\mathcal{A}_{1})}\rangle}
\in\mathrmbf{Struc}(\mathcal{S}_{2})$.
$\mathcal{M}_{2}$ is called the reduct of $\mathcal{M}_{1}$ and
$\mathcal{M}_{1}$ is called the expansion of $\mathcal{M}_{2}$.
The two are linked 
$\mathrmbfit{struc}^{\curlywedge}_{{\langle{r,f}\rangle}}(\mathcal{M}_{1})
\xrightleftharpoons[{\langle{r,f}\rangle}]{\grave{\chi}_{\mathcal{M}_{1}}}
\mathcal{M}_{1}$
by a structure morphism,
which is the $\mathcal{M}_{1}^{\text{th}}$ component of a bridge
$\mathrmbfit{struc}^{\curlywedge}_{{\langle{r,f}\rangle}}{\;\circ\;}\mathrmbfit{inc}_{\mathcal{S}_{2}}
\xRightarrow{\;\grave{\chi}_{{\langle{r,f}\rangle}}\;\,}\mathrmbfit{inc}_{\mathcal{S}_{1}}$.
\begin{center}
\begin{tabular}{c@{\hspace{110pt}}c}
\begin{tabular}{c}
\setlength{\unitlength}{0.56pt}
\begin{picture}(120,120)(0,20)
\put(-7,120){\makebox(0,0){\footnotesize{$\mathrmbf{Struc}(\mathcal{S}_{2})$}}}
\put(127,120){\makebox(0,0){\footnotesize{$\mathrmbf{Struc}(\mathcal{S}_{1})$}}}
\put(60,30){\makebox(0,0){\footnotesize{$\mathrmbf{Struc}$}}}
\put(60,133){\makebox(0,0){\scriptsize{$\mathrmbfit{struc}^{\curlywedge}_{{\langle{r,f}\rangle}}$}}}
\put(20,73){\makebox(0,0)[r]{\scriptsize{$\mathrmbfit{inc}_{\mathcal{S}_{2}}$}}}
\put(100,73){\makebox(0,0)[l]{\scriptsize{$\mathrmbfit{inc}_{\mathcal{S}_{1}}$}}}
\put(60,90){\makebox(0,0){\large{$\xRightarrow{\;\grave{\chi}_{{\langle{r,f}\rangle}}}$}}}
\put(80,122){\vector(-1,0){40}}
\put(10,105){\vector(2,-3){40}}
\put(110,105){\vector(-2,-3){40}}
\end{picture}
\end{tabular}
&
{{\begin{tabular}{c}
\setlength{\unitlength}{0.58pt}
\begin{picture}(180,180)(-40,-80)
\put(0,80){\makebox(0,0){\footnotesize{$r^{-1}(\mathcal{E}_{1})$}}}
\put(46,0){\makebox(0,0)[r]{\footnotesize{${\scriptstyle\sum}_{f}^{-1}(\mathrmbf{List}(\mathcal{A}_{1}))=
\mathrmbf{List}(f^{-1}(\mathcal{A}_{1}))$}}}
\put(-15,40){\makebox(0,0)[r]{\scriptsize{${\langle{\sigma_{2},\tau_{1}}\rangle}$}}}
\put(0,40){\makebox(0,0){\large{$\downdownarrows$}}}
\put(120,80){\makebox(0,0){\footnotesize{$\mathcal{E}_{1}$}}}
\put(120,0){\makebox(0,0){\footnotesize{$\mathrmbf{List}(\mathcal{A}_{1})$}}}
\put(105,40){\makebox(0,0)[r]{\scriptsize{${\langle{\sigma_{2},\tau_{2}}\rangle}$}}}
\put(120,40){\makebox(0,0){\large{$\downdownarrows$}}}
\put(65,80){\makebox(0,0){\footnotesize{$\xrightleftharpoons{{\langle{r,1_{K_{1}}}\rangle}}$}}}
\put(65,-20){\makebox(0,0){\scriptsize{$\mathrmbf{List}(f,1_{Y_{1}})$}}}
\put(65,-3){\makebox(0,0){\normalsize{$\xrightleftharpoons{\;\;}$}}}
\put(60,-60){\makebox(0,0){\footnotesize{$
\overset{\underbrace{\rule{80pt}{0pt}}}
{\mathrmbfit{struc}^{\curlywedge}_{{\langle{r,f}\rangle}}(\mathcal{M}_{1})
\xrightleftharpoons[{\langle{r,1_{K_{1}},f,1_{Y_{1}}}\rangle}]{{(\grave{\chi}_{{\langle{r,f}\rangle}})}_{\mathcal{M}_{1}}} 
\mathcal{M}_{1}\;\;\;\;\;\;}$}}}
\end{picture}
\end{tabular}}}
\end{tabular}
\end{center}
%

\paragraph{Multiple Universes.}

The Grothedieck construction of 
the schema indexed mathematical context 
$\mathrmbf{Sch}^{\mathrm{op}}\xrightarrow{\mathrmbfit{struc}^{\curlywedge}}\mathrmbf{Cxt}$
is the fibered mathematical context $\mathrmbf{Struc}$ of structures and structure morphisms.
A structure 
$\mathcal{M}$
is as described 
in \S~\ref{sub:sub:sec:data:model:struc}
(Fig.~\ref{fig:fole:struc}).
A structure morphism
$\mathcal{M}_{2}\xrightleftharpoons{{\langle{r,k,f,g}\rangle}}\mathcal{M}_{1}$
from source structure 
$\mathcal{M}_{2}$
to target structure
$\mathcal{M}_{1}$
consists of a schema morphism 
$\mathcal{S}_{2}\xRightarrow{{\langle{r,f}\rangle}}\mathcal{S}_{1}$
from source schema 
$\mathcal{S}_{2}$
to target structure
$\mathcal{S}_{1}$
and a 
morphism 
\[\mbox{\footnotesize{$
\mathcal{M}_{2}
={\langle{\mathcal{E}_{2},{\langle{\sigma_{2},\tau_{2}}\rangle},\mathcal{A}_{2}}\rangle}
\xrightleftharpoons{{\langle{k,g}\rangle}}
{\langle{r^{-1}(\mathcal{E}_{1}),{\langle{\sigma_{2},\tau_{1}}\rangle},f^{-1}(\mathcal{A}_{1})}\rangle}=
\mathrmbfit{struc}^{\curlywedge}_{{\langle{r,f}\rangle}}(\mathcal{M}_{1})
$}\normalsize}\]
in the fiber mathematical context of structures 
$\mathrmbf{Struc}(\mathcal{S}_{2})$.
Hence,
a structure morphism satisfies the following conditions.
\begin{center}
{\footnotesize{\begin{tabular}{r@{\hspace{10pt}}c@{\hspace{10pt}}l}
\multicolumn{3}{l}{\textsf{list preservation}}
\\
\multicolumn{3}{c}{$r{\;\cdot\;}\sigma_{1}\;=\;\sigma_{2}{\;\cdot\;}{\scriptstyle\sum}_{f}$}
\\
\multicolumn{3}{c}{$k{\;\cdot\;}\tau_{2}\;=\;\tau_{1}{\;\cdot\;}{\scriptstyle\sum}_{g}$}
\\
\multicolumn{3}{l}{\textsf{infomorphisms}}
\\
$k_{1}{\;\models_{\mathcal{E}_{1}}\;}r(r_{2})$
&
\underline{iff}
&
$k(k_{1}){\;\models_{\mathcal{E}_{2}}\;}r_{2}$
\\
$y_{1}{\;\models_{\mathcal{A}_{1}}\;}f(x_{2})$
&
\underline{iff}
&
$g(y_{1}){\;\models_{\mathcal{A}_{2}}\;}x_{2}$
\end{tabular}}}
\end{center}
Thus,
a structure morphism
$\mathcal{M}_{2}\xrightleftharpoons{{\langle{r,k,f,g}\rangle}}\mathcal{M}_{1}$
(Fig.~\ref{fig:fole:struc:mor} in \S~\ref{sub:sub:sec:data:model:struc})
from source structure 
$\mathcal{M}_{2}={\langle{\mathcal{E}_{2},{\langle{\sigma_{2},\tau_{2}}\rangle},\mathcal{A}_{2}}\rangle}$
to target structure
$\mathcal{M}_{1}={\langle{\mathcal{E}_{1},{\langle{\sigma_{1},\tau_{1}}\rangle},\mathcal{A}_{1}}\rangle}$
is defined in terms of 
the hypergraph and classification morphisms between the source and target structure components (projections):
\begin{center}
{\footnotesize{$\begin{array}{
r@{\hspace{8pt}}
r@{\hspace{4pt}=\hspace{4pt}}
r@{\hspace{4pt}}
l@{\hspace{4pt}}
l@{\hspace{4pt}=\hspace{4pt}}
l}
\text{\emph{universe morphism}}
& \mathcal{U}_{2}
& \mathrmbfit{univ}(\mathcal{M}_{2})
& \xRightarrow{{\langle{k,g}\rangle}}
& \mathrmbfit{univ}(\mathcal{M}_{1})
& \mathcal{U}_{1}
\\
\text{\emph{schema morphism}}
& \mathcal{S}_{2}
& \mathrmbfit{sch}(\mathcal{M}_{2})
& \xRightarrow{{\langle{r,f}\rangle}}
& \mathrmbfit{sch}(\mathcal{M}_{1})
& \mathcal{S}_{1}
\\
\text{\emph{typed domain morphism}}
& \mathcal{A}_{2}
& \mathrmbfit{attr}(\mathcal{M}_{2})
& \xrightleftharpoons{{\langle{f,g}\rangle}}
& \mathrmbfit{attr}(\mathcal{M}_{1})
& \mathcal{A}_{1}
\\
\text{\emph{entity infomorphism}}
& \mathcal{E}_{2}
& \mathrmbfit{ent}(\mathcal{M}_{2})
& \xrightleftharpoons{{\langle{r,k}\rangle}}
& \mathrmbfit{ent}(\mathcal{M}_{1})
& \mathcal{E}_{1}
\end{array}$}}
\end{center}
Structure morphisms compose component-wise.
Let $\mathrmbf{Struc}$ denote the context of structures and structure morphisms.
(Fig.~\ref{fig:cxt:struc} in \S~\ref{sub:sub:sec:data:model:struc})
%

\newpage
\subsection{Universe Orientation.}\label{sub:sub:sec:univ:index}

As evident from the type-instance duality in 
Fig.~\ref{fig:fole:struc} and Fig.~\ref{fig:fole:struc:mor} of \S~\ref{sub:sec:fole:comps},
the fibered mathematical context $\mathrmbf{Struc}$ of structures and structure morphisms
can be developed from the dual standpoint
--- 
the Grothedieck construction of 
a universe indexed mathematical context.
This approach using universe indexing corresponds to the use of instance-set indexing in \S~\ref{sub:sub:sec:cls:fbr:cxt}.
\begin{enumerate}
\item 
For a fixed universe $\mathcal{U}$,
we define the structure fiber context $\mathrmbf{Struc}(\mathcal{U})$.
\item 
For a universe morphism
$\mathcal{U}_{2}\xLeftarrow{{\langle{k,g}\rangle}}\mathcal{U}_{1}$,
we define the structure fiber passage
$\mathrmbf{Struc}(\mathcal{U}_{2})\xrightarrow{\mathrmbfit{struc}^{\curlyvee}_{{\langle{k,g}\rangle}}}\mathrmbf{Struc}(\mathcal{U}_{1})$.
\rule[-8pt]{0pt}{10pt}
\item 
We define 
the fibered mathematical context $\mathrmbf{Struc}$ of structures and structure morphisms
to be the Grothedieck construction of 
the universe indexed mathematical context 
$\mathrmbf{Univ}^{\mathrm{op}}\xrightarrow{\mathrmbfit{struc}^{\curlyvee}}\mathrmbf{Cxt}$
\end{enumerate}
%

\paragraph{Fixed Universe.}

%
Let $\mathcal{U} = {\langle{K,\tau,Y}\rangle}\in\mathrmbf{Univ}$ be a fixed universe.
We define the notion of a morphism between two (fixed universe) $\mathcal{U}$-structures that respects the {\ttfamily ERA} data model.
In these morphisms,
the universe remains fixed,
but the attribute types (sorts) and entity types
are formally linked by maps that respect 
schema signature, typed domain extent and entity interpretation. 
A $\mathcal{U}$-structure morphism
$\mathcal{M}_{2}
\xrightleftharpoons{{\langle{r,f}\rangle}}
\mathcal{M}_{1}$
over a fixed universe $\mathcal{U} = {\langle{K,\tau,Y}\rangle}$
is a schema morphism  
$\mathrmbfit{sch}(\mathcal{M}_{2}) 
\xRightarrow{\;{\langle{r,f}\rangle}\;}
\mathrmbfit{sch}(\mathcal{M}_{1})$,
where
$\mathrmbfit{attr}(\mathcal{M}_{2})\xrightleftharpoons{{\langle{f,1_{Y}}\rangle}}\mathrmbfit{attr}(\mathcal{M}_{1})$
is an infomorphism in 
$\mathrmbf{Cls}(Y)$
over the value set $Y$, 
and
$\mathrmbfit{ent}(\mathcal{M}_{2})\xrightleftharpoons{{\langle{r,1_{K}}\rangle}}\mathrmbfit{ent}(\mathcal{M}_{1})$
is an infomorphism in 
$\mathrmbf{Cls}(K)$
over the key set $K$.
\footnote{Hence,
the source type domain is the inverse image of the target type domain
$\mathrmbfit{attr}(\mathcal{M}_{2}) 
= f^{-1}(\mathrmbfit{attr}(\mathcal{M}_{1}))
= \mathrmbfit{cls}^{\curlywedge}_{f}(\mathrmbfit{attr}(\mathcal{M}_{1}))$
and
the source entity classification is the inverse image of the target entity classification
$\mathrmbfit{ent}(\mathcal{M}_{2}) 
= r^{-1}(\mathrmbfit{ent}(\mathcal{M}_{1}))
= \mathrmbfit{cls}^{\curlywedge}_{r}(\mathrmbfit{ent}(\mathcal{M}_{1}))$.
By combining entity/attribute inverse image classifications,
the source structure $\mathcal{M}_{2}$ is the inverse image 
$\mathrmbfit{struc}^{\curlywedge}_{{\langle{r,f}\rangle}}(\mathcal{M}_{1}) 
={\langle{r^{-1}(\mathrmbfit{ent}(\mathcal{M}_{1})),{\langle{\sigma_{2},\tau}\rangle},f^{-1}(\mathrmbfit{attr}(\mathcal{M}_{1}))}\rangle}$
of the target structure $\mathcal{M}_{1}$
(\S~\ref{sub:sub:sec:sch:index}).}
Here,
$\mathcal{M}_{2}$ is called the reduct of $\mathcal{M}_{1}$ and
$\mathcal{M}_{1}$ is called the expansion of $\mathcal{M}_{2}$.
The two are bridged 
$\mathcal{M}_{2}=\mathrmbfit{struc}^{\curlywedge}_{{\langle{r,f}\rangle}}(\mathcal{M}_{1}) 
\xrightleftharpoons{{\langle{r,f}\rangle}}
\mathcal{M}_{1}$
by the structure morphism.
\footnote{When this definition is extended to formulas,
one gets the notion of an interpretation of first-order logic (extended to the many-sorted case)
given in (Barwise and Selman~\cite{barwise:seligman:97}).}
$\mathcal{U}$-structure morphisms compose component-wise.
Let $\mathrmbf{Struc}(\mathcal{U})$ 
denote the fiber context of structures
over the fixed universe $\mathcal{U}$.

\comment{
\begin{figure}
\begin{center}
{{\begin{tabular}{c}
\\
\setlength{\unitlength}{0.6pt}
\begin{picture}(320,240)(-100,-20)
%
\put(-30,60){\begin{picture}(0,0)(0,0)
\put(0,80){\makebox(0,0){\scriptsize{$R_{2}$}}}
\put(120,80){\makebox(0,0){\scriptsize{$R_{1}$}}}
\put(0,0){\makebox(0,0){\scriptsize{$K$}}}
\put(120,0){\makebox(0,0){\scriptsize{$K$}}}
\put(60,90){\makebox(0,0){\scriptsize{$r$}}}
\put(-2,40){\makebox(0,0)[r]{\scriptsize{$\scriptscriptstyle\models_{\mathcal{E}_{2}}$}}}
\put(118,45){\makebox(0,0)[r]{\scriptsize{$\scriptscriptstyle\models_{\mathcal{E}_{1}}$}}}
\put(20,80){\vector(1,0){80}}
\put(100,0){\line(-1,0){80}}
\put(0,65){\line(0,-1){50}}
\put(120,65){\line(0,-1){50}}
\end{picture}}
\put(0,0){\begin{picture}(0,0)(0,0)
\put(0,80){\makebox(0,0){\scriptsize{$\mathrmbf{List}(X_{2})$}}}
\put(120,80){\makebox(0,0){\scriptsize{$\mathrmbf{List}(X_{1})$}}}
\put(0,0){\makebox(0,0){\scriptsize{$\mathrmbf{List}(Y)$}}}
\put(120,0){\makebox(0,0){\scriptsize{$\mathrmbf{List}(Y)$}}}
\put(50,92){\makebox(0,0){\scriptsize{${\scriptstyle\sum}_{f}$}}}
\put(5,36){\makebox(0,0)[l]{\scriptsize{$\models_{\mathrmbf{List}(\mathcal{A}_{2})}$}}}
\put(125,36){\makebox(0,0)[l]{\scriptsize{$\models_{\mathrmbf{List}(\mathcal{A}_{1})}$}}}
\put(35,80){\vector(1,0){50}}
\put(85,0){\line(-1,0){50}}
\put(0,65){\line(0,-1){50}}
\put(120,65){\line(0,-1){50}}
\end{picture}}
\put(-5,110){\makebox(0,0)[l]{\scriptsize{$\sigma_{2}$}}}
\put(115,110){\makebox(0,0)[l]{\scriptsize{$\sigma_{1}$}}}
\put(-20,30){\makebox(0,0)[r]{\scriptsize{$\tau$}}}
\put(100,30){\makebox(0,0)[r]{\scriptsize{$\tau$}}}
\put(-23,130){\vector(1,-2){18}}
\put(97,130){\vector(1,-2){18}}
\put(-23,50){\vector(1,-2){18}}
\put(97,50){\vector(1,-2){18}}
%
\put(50,190){\makebox(0,0){\footnotesize{$\overset{\textstyle{
\stackrel{\text{schema morphism}}{
\mathrmbfit{sch}(\mathcal{M}_{2})={\langle{R_{2},\sigma_{2},X_{2}}\rangle}
\xRightarrow{{\langle{r,f}\rangle}}
{\langle{R_{1},\sigma_{1},X_{1}}\rangle}=\mathrmbfit{sch}(\mathcal{M}_{1}) 
}}}{\overbrace{\rule{120pt}{0pt}}}$}}}
\put(-60,70){\makebox(0,0)[r]{\footnotesize{$
\mathcal{M}_{2}={\langle{r,f}\rangle}^{-1}(\mathcal{M}_{1})
\left\{\rule{0pt}{50pt}\right.
$}}}
\put(-75,50){\makebox(0,0)[r]{\footnotesize{$
={\langle{r^{-1}(\mathcal{E}_{1}),{\langle{\sigma_{2},\tau}\rangle},f^{-1}(\mathcal{A}_{1})}\rangle}$}}}
\put(180,70){\makebox(0,0)[l]{\footnotesize{$\left.\rule{0pt}{50pt}\right\}\mathcal{M}_{1}$}}}
%
\end{picture}
\end{tabular}}}
\end{center}
\caption{Structure Morphism}
\label{fig:fbr:struc:mor}
\end{figure}
}

\paragraph{Structure Fiber Passage.}

We define the indexed context 
$\mathrmbf{Univ}^{\mathrm{op}}\xrightarrow{\mathrmbfit{struc}^{\curlyvee}}\mathrmbf{Cxt}$.
Given a universe $\mathcal{U}$,
there is a fiber context of structures $\mathrmbf{Struc}(\mathcal{U})$ with that universe.
Given a universe morphism
$\mathcal{U}_{2}\xLeftarrow{{\langle{k,g}\rangle}}\mathcal{U}_{1}$,
there is a fiber passage of structures
$\mathrmbf{Struc}(\mathcal{U}_{2})\xrightarrow{\mathrmbfit{struc}^{\curlyvee}_{{\langle{k,g}\rangle}}}\mathrmbf{Struc}(\mathcal{U}_{1})$:
a structure 
$\mathcal{M}_{2}={\langle{\mathcal{E}_{2},{\langle{\sigma_{2},\tau_{2}}\rangle},\mathcal{A}_{2}}\rangle}\in\mathrmbf{Struc}(\mathcal{U}_{2})$
is mapped to a structure 
$\mathcal{M}_{1}=\mathrmbfit{struc}^{\curlyvee}_{{\langle{k,g}\rangle}}(\mathcal{M}_{2}) 
={\langle{k^{-1}(\mathcal{E}_{2}),{\langle{\sigma_{2},\tau_{1}}\rangle},g^{-1}(\mathcal{A}_{2})}\rangle}
\in\mathrmbf{Struc}(\mathcal{U}_{1})$.
The two are linked 
$\mathcal{M}_{2}
\xleftrightharpoons[{\langle{k,g}\rangle}]{\acute{\chi}_{\mathcal{M}_{2}}}
\mathrmbfit{struc}^{\curlyvee}_{{\langle{k,g}\rangle}}(\mathcal{M}_{2})$
by a structure morphism,
which is the $\mathcal{M}_{2}^{\text{th}}$ component of a bridge
$\mathrmbfit{inc}_{\mathcal{U}_{2}}
\xRightarrow{\;\,\acute{\chi}_{{\langle{k,g}\rangle}}\,}
\mathrmbfit{struc}^{\curlyvee}_{{\langle{k,g}\rangle}}{\;\circ\;}\mathrmbfit{inc}_{\mathcal{U}_{1}}$.
\begin{center}
\begin{tabular}{c@{\hspace{110pt}}c}
\begin{tabular}{c}
\setlength{\unitlength}{0.56pt}
\begin{picture}(120,120)(0,20)
\put(0,120){\makebox(0,0){\footnotesize{$\mathrmbf{Struc}(\mathcal{U}_{2})$}}}
\put(120,120){\makebox(0,0){\footnotesize{$\mathrmbf{Struc}(\mathcal{U}_{1})$}}}
\put(60,30){\makebox(0,0){\footnotesize{$\mathrmbf{Struc}$}}}
\put(60,138){\makebox(0,0){\scriptsize{$\mathrmbfit{struc}^{\curlyvee}_{{\langle{k,g}\rangle}}$}}}
\put(20,73){\makebox(0,0)[r]{\scriptsize{$\mathrmbfit{inc}_{\mathcal{U}_{2}}$}}}
\put(100,73){\makebox(0,0)[l]{\scriptsize{$\mathrmbfit{inc}_{\mathcal{U}_{1}}$}}}
\put(60,90){\makebox(0,0){\normalsize{$\xRightarrow{\;\acute{\chi}_{{\langle{k,g}\rangle}}}$}}}
\put(45,120){\vector(1,0){30}}
\put(10,105){\vector(2,-3){40}}
\put(110,105){\vector(-2,-3){40}}
\end{picture}
\end{tabular}
&
{{\begin{tabular}{c}
\setlength{\unitlength}{0.58pt}
\begin{picture}(180,180)(-40,-80)
\put(0,80){\makebox(0,0){\footnotesize{$k^{-1}(\mathcal{E}_{2})$}}}
\put(46,0){\makebox(0,0)[r]{\footnotesize{${\scriptstyle\sum}_{g}^{-1}(\mathrmbf{List}(\mathcal{A}_{2}))=
\mathrmbf{List}(g^{-1}(\mathcal{A}_{2}))$}}}
\put(-15,40){\makebox(0,0)[r]{\scriptsize{${\langle{\sigma_{2},\tau_{1}}\rangle}$}}}
\put(0,40){\makebox(0,0){\large{$\downdownarrows$}}}
\put(120,80){\makebox(0,0){\footnotesize{$\mathcal{E}_{2}$}}}
\put(120,0){\makebox(0,0){\footnotesize{$\mathrmbf{List}(\mathcal{A}_{2})$}}}
\put(105,40){\makebox(0,0)[r]{\scriptsize{${\langle{\sigma_{2},\tau_{2}}\rangle}$}}}
\put(120,40){\makebox(0,0){\large{$\downdownarrows$}}}
\put(65,80){\makebox(0,0){\footnotesize{$\xleftrightharpoons{{\langle{1_{R_{2}},k}\rangle}}$}}}
\put(65,-20){\makebox(0,0){\scriptsize{$\mathrmbf{List}(1_{X_{2}},g)$}}}
\put(65,0){\makebox(0,0){\normalsize{$\leftrightharpoons$}}}
\put(60,-60){\makebox(0,0){\footnotesize{$
\overset{\underbrace{\rule{70pt}{0pt}}}
{\mathrmbfit{struc}^{\curlyvee}_{{\langle{k,g}\rangle}}(\mathcal{M}_{2})
\xleftrightharpoons[{\langle{1_{R_{2}},k,1_{X_{2}},g}\rangle}]{{(\acute{\chi}_{{\langle{k,g}\rangle}})}_{\mathcal{M}_{2}}} 
\mathcal{M}_{2}\;\;\;\;\;\;}$}}}
\end{picture}
\end{tabular}}}
\end{tabular}
\end{center}
%
\comment{ 
\begin{proof}
The fiber passage ${\mathrmbfit{struc}^{\curlyvee}_{{\langle{k,g}\rangle}}}$ is defined as the classification inverse image along the tuple morphism.
\begin{itemize}
\item 
The fiber passage $\mathrmbfit{struc}_{{\langle{k,g}\rangle}}$
maps a $\mathcal{U}$-structure
$\mathcal{M}
={\langle{\mathcal{E},{\langle{\sigma,\tau}\rangle},\mathcal{A}}\rangle}$
to the $\widehat{\mathcal{U}}$-structure
$\mathrmbfit{struc}_{{\langle{k,g}\rangle}}(\mathcal{M})
= {\langle{k^{-1}(\mathcal{E}),{\langle{\sigma,\widehat{\tau}}\rangle},g^{-1}(\mathcal{A})}\rangle}$
with 
(the same) schema
${\langle{R,\sigma,X}\rangle}$,
relation inverse image classification
$k^{-1}(\mathcal{E})={\langle{R,\widehat{K},\models_{k}}\rangle} \in \mathrmbf{Cls}$,
entity inverse image classification
$g^{-1}(\mathcal{A})={\langle{X,\widehat{Y},\models_{g}}\rangle} \in \mathrmbf{Cls}$,
and
list designation 
${\langle{\sigma,\widehat{\tau}}\rangle} : k^{-1}(\mathcal{E}) \rightrightarrows \mathrmbf{List}(g^{-1}(\mathcal{A}))$.
There is a bridging structure morphism
$\mathrmbfit{struc}_{{\langle{k,g}\rangle}}(\mathcal{M})\xleftrightharpoons[{\langle{1_{R},k,1_{X},g}\rangle}]{\chi_{\mathcal{M}}}\mathcal{M}$
with entity infomorphism
$k^{-1}(\mathcal{E})\xleftrightharpoons{{\langle{1_{R}.k}\rangle}}\mathcal{E}$ 
and attribute infomorphism
$g^{-1}(\mathcal{A})\xleftrightharpoons{{\langle{1_{X},g}\rangle}}\mathcal{A}$.
\item 
The fiber passage $\mathrmbfit{struc}_{{\langle{k,g}\rangle}}$
maps a $\mathcal{U}$-structure morphism
\[\mbox{\footnotesize{
$\mathcal{M}=
{\langle{\mathcal{E},{\langle{\sigma,\tau}\rangle},\mathcal{A}}\rangle}
\xrightleftharpoons[{\langle{r,1_{K},f,1_{Y}}\rangle}]{\mu}
{\langle{\acute{\mathcal{E}},{\langle{\acute{\sigma},\acute{\tau}}\rangle},\acute{\mathcal{A}}}\rangle}
=\acute{\mathcal{M}}$
}\normalsize}\]
to the $\widehat{\mathcal{U}}$-structure morphism 
\[\mbox{\footnotesize{
$\mathrmbfit{struc}_{{\langle{k,g}\rangle}}(\mathcal{M})=
\underset{{\langle{{\langle{R,\widehat{K},\models_{k}}\rangle},{\langle{\sigma,\widehat{\tau}}\rangle},{\langle{X,\widehat{Y},\models_{g}}\rangle}}\rangle}}
{{\langle{k^{-1}(\mathcal{E}),{\langle{\sigma,\widehat{\tau}}\rangle},g^{-1}(\mathcal{A})}\rangle}}
\xrightleftharpoons
[{\langle{r,1_{\widehat{K}},f,1_{\widehat{Y}}}\rangle}]
{\mathrmbfit{struc}_{{\langle{k,g}\rangle}}(\mu)}
\underset{{\langle{{\langle{\acute{R},\widehat{K},\models'_{k}}\rangle},{\langle{\acute{\sigma},\widehat{\tau}}\rangle},{\langle{\acute{X},\widehat{Y},\models'_{g}}\rangle}}\rangle}}
{{\langle{k^{-1}(\acute{\mathcal{E}}),{\langle{\acute{\sigma},\widehat{\tau}}\rangle},g^{-1}(\acute{\mathcal{A}})}\rangle}}
=\mathrmbfit{struc}_{{\langle{k,g}\rangle}}(\acute{\mathcal{M}})$.
}\normalsize}\]
Naturality of 
$\mathrmbfit{struc}_{{\langle{k,g}\rangle}}{\;\circ\;}\mathrmbfit{inc}_{\widehat{\mathcal{U}}}
\xLeftarrow{\;\chi}\mathrmbfit{inc}_{\mathcal{U}}$ 
is shown by the commutativity diagram
\[\mbox{\footnotesize{
$
\mathcal{M}
\xrightleftharpoons[{\langle{r,1_{K},f,1_{Y}}\rangle}]{\mu}
\acute{\mathcal{M}}
\xrightleftharpoons[{\langle{1_{\acute{R}},k,1_{\acute{X}},g}\rangle}]{\chi_{\acute{\mathcal{M}}}} 
\mathrmbfit{struc}_{{\langle{k,g}\rangle}}(\acute{\mathcal{M}})
\;=\;
\mathcal{M}
\xrightleftharpoons[{\langle{1_{R},k,1_{X},g}\rangle}]{\chi_{\mathcal{M}}} 
\mathrmbfit{struc}_{{\langle{k,g}\rangle}}(\mathcal{M})
\xrightleftharpoons
[{\langle{r,1_{\widehat{K}},f,1_{\widehat{Y}}}\rangle}]
{\mathrmbfit{struc}_{{\langle{k,g}\rangle}}(\mu)}
\mathrmbfit{struc}_{{\langle{k,g}\rangle}}(\acute{\mathcal{M}})$.
}\normalsize}\]
%
\end{itemize}
\end{proof}
} 

\paragraph{Multiple Universes.}

The Grothedieck construction of 
the universe indexed mathematical context
$\mathrmbf{Univ}^{\mathrm{op}}\xrightarrow{\mathrmbfit{struc}^{\curlyvee}}\mathrmbf{Cxt}$
is the fibered mathematical context $\mathrmbf{Struc}$ of structures and structure morphisms.
A structure $\mathcal{M}$ 
is as described in \S~\ref{sub:sub:sec:data:model:struc}
(Fig.~\ref{fig:fole:struc}).
A structure morphism
$\mathcal{M}_{2}\xrightleftharpoons{{\langle{r,k,f,g}\rangle}}\mathcal{M}_{1}$
from source structure $\mathcal{M}_{2}$
to target structure $\mathcal{M}_{1}$
consists of a universe morphism 
$\mathcal{U}_{2}\xLeftarrow{{\langle{k,g}\rangle}}\mathcal{U}_{1}$
to target structure $\mathcal{U}_{2}$
from source structure $\mathcal{U}_{1}$
and a morphism 
\[\mbox{\footnotesize{$
\mathrmbfit{struc}^{\curlyvee}_{{\langle{k,g}\rangle}}(\mathcal{M}_{2})
={\langle{k^{-1}(\mathcal{E}_{2}),{\langle{\sigma_{2},\tau_{1}}\rangle},g^{-1}(\mathcal{A}_{2})}\rangle}
\xrightleftharpoons{{\langle{r,f}\rangle}}
{\langle{\mathcal{E}_{1},{\langle{\sigma_{1},\tau_{1}}\rangle},\mathcal{A}_{1}}\rangle}=\mathcal{M}_{1}
$}\normalsize}\]
in the fiber mathematical context of structures 
$\mathrmbf{Struc}(\mathcal{U}_{1})$.
Hence,
a structure morphism satisfies the following conditions.
\begin{center}
{\footnotesize{\begin{tabular}{r@{\hspace{10pt}}c@{\hspace{10pt}}l}
\multicolumn{3}{l}{\textsf{list preservation}}
\\
\multicolumn{3}{c}{$r{\;\cdot\;}\sigma_{1}\;=\;\sigma_{2}{\;\cdot\;}{\scriptstyle\sum}_{f}$}
\\
\multicolumn{3}{c}{$k{\;\cdot\;}\tau_{2}\;=\;\tau_{1}{\;\cdot\;}{\scriptstyle\sum}_{g}$}
\\
\multicolumn{3}{l}{\textsf{infomorphisms}}
\\
$k_{1}{\;\models_{\mathcal{E}_{1}}\;}r(r_{2})$
&
\underline{iff}
&
$k(k_{1}){\;\models_{\mathcal{E}_{2}}\;}r_{2}$
\\
$y_{1}{\;\models_{\mathcal{A}_{1}}\;}f(x_{2})$
&
\underline{iff}
&
$g(y_{1}){\;\models_{\mathcal{A}_{2}}\;}x_{2}$
\end{tabular}}}
\end{center}
Thus,
a structure morphism
$\mathcal{M}_{2}\xrightleftharpoons{{\langle{r,k,f,g}\rangle}}\mathcal{M}_{1}$
(Fig.~\ref{fig:fole:struc:mor} in \S~\ref{sub:sub:sec:data:model:struc})
from source structure 
$\mathcal{M}_{2}={\langle{\mathcal{E}_{2},{\langle{\sigma_{2},\tau_{2}}\rangle},\mathcal{A}_{2}}\rangle}$
to target structure
$\mathcal{M}_{1}={\langle{\mathcal{E}_{1},{\langle{\sigma_{1},\tau_{1}}\rangle},\mathcal{A}_{1}}\rangle}$
is defined in terms of 
the hypergraph and classification morphisms between the source and target structure components (projections):
\begin{center}
{\footnotesize{$\begin{array}{
r@{\hspace{8pt}}
r@{\hspace{4pt}=\hspace{4pt}}
r@{\hspace{4pt}}
l@{\hspace{4pt}}
l@{\hspace{4pt}=\hspace{4pt}}
l}
\text{\emph{universe morphism}}
& \mathcal{U}_{2}
& \mathrmbfit{univ}(\mathcal{M}_{2})
& \xRightarrow{{\langle{k,g}\rangle}}
& \mathrmbfit{univ}(\mathcal{M}_{1})
& \mathcal{U}_{1}
\\
\text{\emph{schema morphism}}
& \mathcal{S}_{2}
& \mathrmbfit{sch}(\mathcal{M}_{2})
& \xRightarrow{{\langle{r,f}\rangle}}
& \mathrmbfit{sch}(\mathcal{M}_{1})
& \mathcal{S}_{1}
\\
\text{\emph{typed domain morphism}}
& \mathcal{A}_{2}
& \mathrmbfit{attr}(\mathcal{M}_{2})
& \xrightleftharpoons{{\langle{f,g}\rangle}}
& \mathrmbfit{attr}(\mathcal{M}_{1})
& \mathcal{A}_{1}
\\
\text{\emph{entity infomorphism}}
& \mathcal{E}_{2}
& \mathrmbfit{ent}(\mathcal{M}_{2})
& \xrightleftharpoons{{\langle{r,k}\rangle}}
& \mathrmbfit{ent}(\mathcal{M}_{1})
& \mathcal{E}_{1}
\end{array}$}}
\end{center}
Structure morphisms compose component-wise.
Let $\mathrmbf{Struc}$ denote the context of structures and structure morphisms.
(Fig.~\ref{fig:cxt:struc} in \S~\ref{sub:sub:sec:data:model:struc})
%


\begin{proposition}\label{prop:struc:mor:dual:facts}
(compare Prop.~\ref{prop:cls:mor:dual:facts})
Any structure morphism 
$\mathcal{M}_{2}\xrightleftharpoons{{\langle{r,k,f,g}\rangle}}\mathcal{M}_{1}$,
with schema morphism projection 
$\mathcal{S}_{2}\xRightarrow{{\langle{r,f}\rangle}}\mathcal{S}_{1}$
and universe morphism projection 
$\mathcal{U}_{2}\xLeftarrow{{\langle{k,g}\rangle}}\mathcal{U}_{1}$,
has dual factorizations
(see the diagram below).
\begin{center}
{{\begin{tabular}{c}
\setlength{\unitlength}{0.64pt}
\begin{picture}(120,140)(0,-35)
\put(60,90){\makebox(0,0){\footnotesize{$\underset{\textstyle{\overbrace{\rule{80pt}{0pt}}}}{\mathrmbf{Struc}(\mathcal{S}_{2})}$}}}
\put(60,-30){\makebox(0,0){\footnotesize{$\overset{\textstyle{\underbrace{\rule{80pt}{0pt}}}}{\mathrmbf{Struc}(\mathcal{U}_{1})}$}}}
\put(0,60){\makebox(0,0){\footnotesize{$\mathcal{M}_{2}$}}}
\put(120,60){\makebox(0,0){\footnotesize{$\mathrmbfit{struc}^{\curlywedge}_{{\langle{r,f}\rangle}}(\mathcal{M}_{1})$}}}
\put(120,0){\makebox(0,0){\footnotesize{$\mathcal{M}_{1}$}}}
\put(0,0){\makebox(0,0){\footnotesize{$\mathrmbfit{struc}^{\curlyvee}_{{\langle{k,g}\rangle}}(\mathcal{M}_{2})$}}}
\put(-7,29){\makebox(0,0)[r]{\scriptsize{${\langle{k,g}\rangle}$}}}
\put(80,-12){\makebox(0,0){\scriptsize{${\langle{r,f}\rangle}$}}}
\put(40,72){\makebox(0,0){\scriptsize{${\langle{k,g}\rangle}$}}}
\put(127,30){\makebox(0,0)[l]{\scriptsize{${\langle{r,f}\rangle}$}}}
%
%
\put(40,57){\makebox(0,0){\large{$\xrightleftharpoons{\;\;\;\;\;\;\;\;}$}}}
\put(80,-3){\makebox(0,0){\large{$\xrightleftharpoons{\;\;\;\;\;\;\;\;}$}}}
\put(118,30){\makebox(0,0){\large{$\upharpoonleft$}}}
\put(122,30){\makebox(0,0){\large{$\downharpoonright$}}}
\put(-2,30){\makebox(0,0){\large{$\upharpoonleft$}}}
\put(2,30){\makebox(0,0){\large{$\downharpoonright$}}}
\end{picture}
\end{tabular}}}
\end{center}
The top-right factorization
(corresponding to the schema orientation of \S~\ref{sub:sub:sec:sch:index})
consists of 
a $\mathrmbf{Struc}(\mathcal{S}_{2})$-morphism 
$\mathcal{M}_{2}\xrightleftharpoons{{\langle{k,g}\rangle}}\mathrmbfit{struc}^{\curlywedge}_{{\langle{r,f}\rangle}}(\mathcal{M}_{1})$
and 
the $\mathcal{M}_{1}^{\text{th}}$ component 
$\mathrmbfit{struc}^{\curlywedge}_{{\langle{r,f}\rangle}}(\mathcal{M}_{1})\xrightleftharpoons{{\langle{r,f}\rangle}}\mathcal{M}_{1}$
of a bridge
$\mathrmbfit{struc}^{\curlywedge}_{{\langle{r,f}\rangle}}{\;\circ\;}\mathrmbfit{inc}_{\mathcal{S}_{2}}
\xRightarrow{\,\grave{\chi}_{{\langle{r,f}\rangle}}\;\,}\mathrmbfit{inc}_{\mathcal{S}_{1}}$.
The left-bottom factorization 
(corresponding to the universe orientation of \S~\ref{sub:sub:sec:univ:index})
consists of 
the $\mathcal{M}_{2}^{\text{th}}$ component 
$\mathcal{M}_{2}\xrightleftharpoons{{\langle{k,g}\rangle}}\mathrmbfit{struc}^{\curlyvee}_{{\langle{k,g}\rangle}}(\mathcal{M}_{2})$
of a bridge
$\mathrmbfit{inc}_{\mathcal{U}_{2}}
\xRightarrow{\,\acute{\chi}_{{\langle{k,g}\rangle}}\;}
\mathrmbfit{struc}^{\curlyvee}_{{\langle{k,g}\rangle}}{\;\circ\;}\mathrmbfit{inc}_{\mathcal{U}_{1}}$
and
a $\mathrmbf{Struc}(\mathcal{U}_{1})$-morphism
$\mathrmbfit{struc}^{\curlyvee}_{{\langle{k,g}\rangle}}(\mathcal{M}_{2})\xrightleftharpoons{{\langle{r,f}\rangle}}\mathcal{M}_{1}$.
%
\end{proposition}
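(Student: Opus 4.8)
The plan is to derive both factorizations from the classification-level result Prop.~\ref{prop:cls:mor:dual:facts}, applied once to the entity infomorphism $\mathcal{E}_{2}\xrightleftharpoons{{\langle{r,k}\rangle}}\mathcal{E}_{1}$ and once to the typed domain morphism $\mathcal{A}_{2}\xrightleftharpoons{{\langle{f,g}\rangle}}\mathcal{A}_{1}$ underlying the given structure morphism, and then to transport the resulting dual factorizations through the shared list designation. The crux, mirroring the equality $\mathrmbfit{cls}^{\curlywedge}_{f}(\mathcal{C}_{1})=\mathrmbfit{cls}^{\curlyvee}_{g}(\mathcal{C}_{2})$ of Prop.~\ref{prop:cls:mor:dual:facts}, is that the two intermediate structures named in the statement coincide: $\mathrmbfit{struc}^{\curlywedge}_{{\langle{r,f}\rangle}}(\mathcal{M}_{1})=\mathrmbfit{struc}^{\curlyvee}_{{\langle{k,g}\rangle}}(\mathcal{M}_{2})=:\mathcal{N}$. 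Thus there is a single canonical factorization of $\mathcal{M}_{2}\xrightleftharpoons{{\langle{r,k,f,g}\rangle}}\mathcal{M}_{1}$ through $\mathcal{N}$, and the ``two'' factorizations are this one factorization, read off once in the schema fibration and once in the universe fibration.

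First I would unpack the structure morphism into its four pieces --- the entity infomorphism with projections $R_{2}\xrightarrow{r}R_{1}$, $K_{2}\xleftarrow{k}K_{1}$; the typed domain morphism with projections $X_{2}\xrightarrow{f}X_{1}$, $Y_{2}\xleftarrow{g}Y_{1}$; and the list-preservation squares $r{\,\cdot\,}\sigma_{1}=\sigma_{2}{\,\cdot\,}{\scriptstyle\sum}_{f}$ and $k{\,\cdot\,}\tau_{2}=\tau_{1}{\,\cdot\,}{\scriptstyle\sum}_{g}$. Prop.~\ref{prop:cls:mor:dual:facts} applied to the entity infomorphism gives $r^{-1}(\mathcal{E}_{1})=\mathrmbfit{cls}^{\curlywedge}_{r}(\mathcal{E}_{1})=\mathrmbfit{cls}^{\curlyvee}_{k}(\mathcal{E}_{2})=k^{-1}(\mathcal{E}_{2})$ together with the two-leg factorization $\mathcal{E}_{2}\xrightleftharpoons{{\langle{1_{R_{2}},k}\rangle}}r^{-1}(\mathcal{E}_{1})\xrightleftharpoons{{\langle{r,1_{K_{1}}}\rangle}}\mathcal{E}_{1}$; applied to the typed domain morphism it gives $f^{-1}(\mathcal{A}_{1})=\mathrmbfit{cls}^{\curlywedge}_{f}(\mathcal{A}_{1})=\mathrmbfit{cls}^{\curlyvee}_{g}(\mathcal{A}_{2})=g^{-1}(\mathcal{A}_{2})$ with the analogous factorization $\mathcal{A}_{2}\xrightleftharpoons{{\langle{1_{X_{2}},g}\rangle}}f^{-1}(\mathcal{A}_{1})\xrightleftharpoons{{\langle{f,1_{Y_{1}}}\rangle}}\mathcal{A}_{1}$. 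Because the structures $\mathrmbfit{struc}^{\curlywedge}_{{\langle{r,f}\rangle}}(\mathcal{M}_{1})={\langle{r^{-1}(\mathcal{E}_{1}),{\langle{\sigma_{2},\tau_{1}}\rangle},f^{-1}(\mathcal{A}_{1})}\rangle}$ and $\mathrmbfit{struc}^{\curlyvee}_{{\langle{k,g}\rangle}}(\mathcal{M}_{2})={\langle{k^{-1}(\mathcal{E}_{2}),{\langle{\sigma_{2},\tau_{1}}\rangle},g^{-1}(\mathcal{A}_{2})}\rangle}$ then carry the same entity classification, the same typed domain, and the very same list designation $\langle\sigma_{2},\tau_{1}\rangle$, they are the same structure $\mathcal{N}$, with schema $\mathcal{S}_{2}$ and universe $\mathcal{U}_{1}$.

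The step I expect to be the real work is confirming that $\mathcal{N}$ is a legitimate {\ttfamily FOLE} structure, i.e.\ that $\langle\sigma_{2},\tau_{1}\rangle$ satisfies the list designation condition of \S~\ref{sub:sub:sec:data:model:struc} relative to $r^{-1}(\mathcal{E}_{1})$ and $\mathrmbf{List}(f^{-1}(\mathcal{A}_{1}))$. This is a short chase: if $k_{1}$ is classified by $r_{2}$ in $r^{-1}(\mathcal{E}_{1})$ then $k_{1}\models_{\mathcal{E}_{1}}r(r_{2})$, so the designation condition for $\mathcal{M}_{1}$ yields $\tau_{1}(k_{1})\models_{\mathrmbf{List}(\mathcal{A}_{1})}\sigma_{1}(r(r_{2}))={\scriptstyle\sum}_{f}(\sigma_{2}(r_{2}))$ by $r{\,\cdot\,}\sigma_{1}=\sigma_{2}{\,\cdot\,}{\scriptstyle\sum}_{f}$; since ${\scriptstyle\sum}_{f}$ preserves arity and the defining biconditional of $f^{-1}(\mathcal{A}_{1})$ lifts componentwise to lists, this is precisely $\tau_{1}(k_{1})\models_{\mathrmbf{List}(f^{-1}(\mathcal{A}_{1}))}\sigma_{2}(r_{2})$. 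Running the same chase through $\mathcal{M}_{2}$ with $k{\,\cdot\,}\tau_{2}=\tau_{1}{\,\cdot\,}{\scriptstyle\sum}_{g}$ and $g^{-1}(\mathcal{A}_{2})=f^{-1}(\mathcal{A}_{1})$ reaches the same conclusion, consistent with the equality $\mathcal{N}=\mathrmbfit{struc}^{\curlywedge}_{{\langle{r,f}\rangle}}(\mathcal{M}_{1})=\mathrmbfit{struc}^{\curlyvee}_{{\langle{k,g}\rangle}}(\mathcal{M}_{2})$. In fact well-definedness of $\mathcal{N}$ is already part of the fiber passages of \S~\ref{sub:sub:sec:sch:index} and \S~\ref{sub:sub:sec:univ:index}, so I may just cite them.

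Finally I would assemble the canonical factorization and check the two readings. Let $\mu_{1}$ be the structure morphism $\mathcal{M}_{2}\xrightleftharpoons{{\langle{1_{R_{2}},k,1_{X_{2}},g}\rangle}}\mathcal{N}$ whose entity and attribute legs are the ``top-right'' legs of Prop.~\ref{prop:cls:mor:dual:facts} above and whose list-preservation squares are the trivial one over $\mathcal{S}_{2}$ and the given $k{\,\cdot\,}\tau_{2}=\tau_{1}{\,\cdot\,}{\scriptstyle\sum}_{g}$; and let $\mu_{2}$ be the structure morphism $\mathcal{N}\xrightleftharpoons{{\langle{r,1_{K_{1}},f,1_{Y_{1}}}\rangle}}\mathcal{M}_{1}$ whose classification legs are the ``left-bottom'' legs of Prop.~\ref{prop:cls:mor:dual:facts} and whose squares are $r{\,\cdot\,}\sigma_{1}=\sigma_{2}{\,\cdot\,}{\scriptstyle\sum}_{f}$ and the trivial one over $\mathcal{U}_{1}$. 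Both are structure morphisms since their classification legs are (Prop.~\ref{prop:cls:mor:dual:facts}) and the displayed squares hold, and since composition in $\mathrmbf{Struc}$ is component-wise, composing $\mu_{1}$ then $\mu_{2}$ returns ${\langle{r,k,f,g}\rangle}$. Reading $\mathcal{N}=\mathrmbfit{struc}^{\curlywedge}_{{\langle{r,f}\rangle}}(\mathcal{M}_{1})$: $\mu_{1}$ is a $\mathrmbf{Struc}(\mathcal{S}_{2})$-morphism and $\mu_{2}$ is the $\mathcal{M}_{1}^{\text{th}}$ component $\grave{\chi}_{\mathcal{M}_{1}}$ of the bridge $\grave{\chi}_{{\langle{r,f}\rangle}}$ --- this is the top-right factorization. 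Reading the same $\mathcal{N}=\mathrmbfit{struc}^{\curlyvee}_{{\langle{k,g}\rangle}}(\mathcal{M}_{2})$: $\mu_{1}$ is the $\mathcal{M}_{2}^{\text{th}}$ component $\acute{\chi}_{\mathcal{M}_{2}}$ of the bridge $\acute{\chi}_{{\langle{k,g}\rangle}}$ and $\mu_{2}$ is a $\mathrmbf{Struc}(\mathcal{U}_{1})$-morphism --- this is the left-bottom factorization. So the entire argument is a transcription of Prop.~\ref{prop:cls:mor:dual:facts} through the component-wise structure of $\mathrmbf{Struc}$, the only non-formal point being the designation check for $\mathcal{N}$.
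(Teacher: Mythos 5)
Your proposal is correct and follows essentially the route the paper itself takes: the paper gives no separate proof of Prop.~\ref{prop:struc:mor:dual:facts}, leaving it as a summary of the two fibration developments in \S~\ref{sub:sub:sec:sch:index} and \S~\ref{sub:sub:sec:univ:index} together with the classification exemplar Prop.~\ref{prop:cls:mor:dual:facts}, which is exactly what you assemble. Your one addition --- making explicit that $\mathrmbfit{struc}^{\curlywedge}_{{\langle{r,f}\rangle}}(\mathcal{M}_{1})=\mathrmbfit{struc}^{\curlyvee}_{{\langle{k,g}\rangle}}(\mathcal{M}_{2})$, so the two factorizations are the same pair of arrows read in the two fibrations --- is the structure-level analogue of the equality $\mathrmbfit{cls}^{\curlywedge}_{f}(\mathcal{C}_{1})=\mathrmbfit{cls}^{\curlyvee}_{g}(\mathcal{C}_{2})$ that the paper states only at the classification level, and your designation check for the intermediate structure is sound.
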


\comment{
%
\begin{figure}
\begin{center}
{{\begin{tabular}[b]{c@{\hspace{70pt}}c}
{{\begin{tabular}[b]{c}
\\
\setlength{\unitlength}{0.5pt}
\begin{picture}(160,250)(0,-120)
\put(0,80){\makebox(0,0){\footnotesize{$R$}}}
\put(0,0){\makebox(0,0){\footnotesize{$K$}}}
\put(87,80){\makebox(0,0)[l]{\footnotesize{$\mathrmbf{List}(X)$}}}
\put(87,0){\makebox(0,0)[l]{\footnotesize{$\mathrmbf{List}(Y)$}}}
\put(8,40){\makebox(0,0)[l]{\scriptsize{$\models_{\mathcal{E}}$}}}
\put(128,40){\makebox(0,0)[l]{\scriptsize{$\models_{\mathcal{A}}$}}}
\put(50,90){\makebox(0,0){\scriptsize{$\sigma$}}}
\put(50,10){\makebox(0,0){\scriptsize{$\tau$}}}
\put(20,80){\vector(1,0){60}}
\put(20,0){\vector(1,0){60}}
\put(0,65){\line(0,-1){50}}
\put(120,65){\line(0,-1){50}}
\put(66,40){\makebox(0,0){\normalsize{$\mathcal{M}$}}}
\put(75,120){\makebox(0,0){\footnotesize{$\overset{\textstyle{
\stackrel{\text{\textit{schema}}}{\mathcal{S}\rule{0pt}{1pt}}}}{\overbrace{\rule{80pt}{0pt}}}$}}}
\put(-10,40){\makebox(0,0)[r]{\footnotesize{$\mathcal{E}\left\{\rule{0pt}{28pt}\right.$}}}
\put(162,40){\makebox(0,0)[l]{\footnotesize{$\left.\rule{0pt}{28pt}\right\}$}}}
\put(181,31){\makebox(0,0)[l]{\footnotesize{$
\stackrel{\textstyle{\mathrmbf{List}(\mathcal{A})}}{\scriptstyle{\text{\textit{type domain}}}}$}}}
\put(75,-42){\makebox(0,0){\footnotesize{$
\underset{\textstyle{
\stackrel{\rule[7pt]{0pt}{1pt}\textstyle{\mathcal{U}}}{\scriptstyle{\text{\textit{universe}}}}
}}{\underbrace{\rule{80pt}{0pt}}}$}}}
\end{picture}
\\
\textit{Structure}
\end{tabular}}}
&
{{\begin{tabular}[b]{c}
\setlength{\unitlength}{0.6pt}
\begin{picture}(320,250)(-100,-50)
\put(-30,60){\begin{picture}(0,0)(0,0)
\put(0,80){\makebox(0,0){\scriptsize{$R_{2}$}}}
\put(120,80){\makebox(0,0){\scriptsize{$R_{1}$}}}
\put(0,0){\makebox(0,0){\scriptsize{$K_{2}$}}}
\put(120,0){\makebox(0,0){\scriptsize{$K_{1}$}}}
\put(60,88){\makebox(0,0){\scriptsize{$r$}}}
\put(60,8){\makebox(0,0){\scriptsize{$k$}}}
\put(-2,40){\makebox(0,0)[r]{\scriptsize{$\scriptscriptstyle\models_{\mathcal{E}_{2}}$}}}
\put(118,45){\makebox(0,0)[r]{\scriptsize{$\scriptscriptstyle\models_{\mathcal{E}_{1}}$}}}
\put(20,80){\vector(1,0){80}}
\put(100,0){\vector(-1,0){80}}
\put(0,65){\line(0,-1){50}}
\put(120,65){\line(0,-1){50}}
\end{picture}}
\put(0,0){\begin{picture}(0,0)(0,0)
\put(0,80){\makebox(0,0){\scriptsize{$\mathrmbf{List}(X_{2})$}}}
\put(120,80){\makebox(0,0){\scriptsize{$\mathrmbf{List}(X_{1})$}}}
\put(0,0){\makebox(0,0){\scriptsize{$\mathrmbf{List}(Y_{2})$}}}
\put(120,0){\makebox(0,0){\scriptsize{$\mathrmbf{List}(Y_{1})$}}}
\put(60,88){\makebox(0,0){\scriptsize{${\scriptstyle\sum}_{f}$}}}
\put(60,8){\makebox(0,0){\scriptsize{${\scriptstyle\sum}_{g}$}}}
\put(5,36){\makebox(0,0)[l]{\scriptsize{$\models_{\mathrmbf{List}(\mathcal{A}_{2})}$}}}
\put(125,36){\makebox(0,0)[l]{\scriptsize{$\models_{\mathrmbf{List}(\mathcal{A}_{1})}$}}}
\put(30,80){\vector(1,0){57}}
\put(90,0){\vector(-1,0){60}}
\put(0,65){\line(0,-1){50}}
\put(120,65){\line(0,-1){50}}
\end{picture}}
\put(-5,110){\makebox(0,0)[l]{\scriptsize{$\sigma_{2}$}}}
\put(115,110){\makebox(0,0)[l]{\scriptsize{$\sigma_{1}$}}}
\put(-20,30){\makebox(0,0)[r]{\scriptsize{$\tau_{2}$}}}
\put(100,30){\makebox(0,0)[r]{\scriptsize{$\tau_{1}$}}}
\put(-23,130){\vector(1,-2){18}}
\put(97,130){\vector(1,-2){18}}
\put(-23,50){\vector(1,-2){18}}
\put(97,50){\vector(1,-2){18}}
\put(55,190){\makebox(0,0){\footnotesize{$\overset{\textstyle{
\stackrel{\text{\textit{schema morphism}}}{
\mathrmbfit{sch}(\mathcal{M}_{2})
\xRightarrow{{\langle{r,f}\rangle}}
\mathrmbfit{sch}(\mathcal{M}_{1}) 
}}}{\overbrace{\rule{120pt}{0pt}}}$}}}
\put(-60,70){\makebox(0,0)[r]{\footnotesize{$
\mathcal{M}_{2}
\left\{\rule{0pt}{50pt}\right.
$}}}
\put(170,70){\makebox(0,0)[l]{\footnotesize{$\left.\rule{0pt}{50pt}\right\}\mathcal{M}_{1}$}}}
\put(55,-40){\makebox(0,0){\footnotesize{$\underset{\textstyle{
\stackrel{\textstyle{
\mathrmbfit{univ}(\mathcal{M}_{2})\xRightarrow{{\langle{k,g}\rangle}}\mathrmbfit{univ}(\mathcal{M}_{1})
}}{\scriptstyle{\text{\textit{universe morphism}}}}}}{\underbrace{\rule{120pt}{0pt}}}$}}}
\end{picture}
\\ \\ \\
\textit{Structure Morphism}
\end{tabular}}}
\end{tabular}}}
\end{center}
\caption{The Fibered Context $\mathrmbf{Struc}$}
\label{fig:struc:mor}
\end{figure}
}



\comment{
\footnote{Any infomorphism $\mathcal{C}_{2}\xrightleftharpoons{{\langle{f,g}\rangle}}\mathcal{C}_{1}$ 
can be factored 
$\mathcal{C}_{2}\xrightleftharpoons{{\langle{\mathrmit{1}_{X_{2}},g}\rangle}}
\mathcal{C}_{21}\xrightleftharpoons{{\langle{f,\mathrmit{1}_{Y_{1}}}\rangle}}
\mathcal{C}_{1}$
through the intermediate classification
$g^{-1}(\mathcal{C}_{2}) \doteq \mathcal{C}_{21} \doteq f^{-1}(\mathcal{C}_{1})$.
However,
we are more interested in a related property.
Assume $\mathcal{C}_{12}={\langle{X_{1},Y_{2},\models_{12}}\rangle}$ is a classification,
$X_{2}\xrightarrow{f}X_{1}$ is a type function and $Y_{2}\xleftarrow{g}Y_{1}$ is an instance function.
We can define three classifications
\newline
\mbox{}\hfill
$\mathcal{C}_{22} \doteq f^{-1}(\mathcal{C}_{12})$, 
$\mathcal{C}_{11} \doteq g^{-1}(\mathcal{C}_{12})$
and 
$\mathcal{C}_{21} \doteq g^{-1}(\mathcal{C}_{22})$
\hfill\mbox{}
\newline
with associated infomorphisms
\newline
\mbox{}\hfill
$\mathcal{C}_{22}\xrightleftharpoons{{\langle{f,\mathrmit{1}_{Y_{2}}}\rangle}}\mathcal{C}_{12}$,
$\mathcal{C}_{12}\xrightleftharpoons{{\langle{\mathrmit{1}_{X_{1}},g}\rangle}}\mathcal{C}_{11}$ and
$\mathcal{C}_{22}\xrightleftharpoons[\acute{\chi}_{\mathcal{C}_{22}}]{{\langle{\mathrmit{1}_{X_{2}},g}\rangle}}\mathcal{C}_{21}$.
\hfill\mbox{}
\newline
Then
$\mathcal{C}_{21}\xrightleftharpoons{{\langle{f,\mathrmit{1}_{Y_{1}}}\rangle}}\mathcal{C}_{11}$,
is an informorphism, 
and the compositions
\newline
\mbox{}\hfill
$\mathcal{C}_{22}\xrightleftharpoons{{\langle{\mathrmit{1}_{X_{2}},g}\rangle}}\mathcal{C}_{21}\xleftrightharpoons{{\langle{f,\mathrmit{1}_{Y_{1}}}\rangle}}\mathcal{C}_{11}$
and
$\mathcal{C}_{22}\xrightleftharpoons{{\langle{f,\mathrmit{1}_{Y_{2}}}\rangle}}\mathcal{C}_{12}\xleftrightharpoons{{\langle{\mathrmit{1}_{X_{1}},g}\rangle}}\mathcal{C}_{11}$.
\hfill\mbox{}
\newline
are two factorizations of the informorphism
$\mathcal{C}_{22}\xrightleftharpoons{{\langle{f,g}\rangle}}\mathcal{C}_{11}$.
\begin{proof}
$y_{1}{\,\models_{21}\,}x_{2}$
\underline{iff}
\fbox{$g(y_{1}){\,\models_{22}\,}x_{2}$
\underline{iff}
$g(y_{1}){\,\models_{12}\,}f(x_{2})$}
\underline{iff}
$y_{1}{\,\models_{11}\,}f(x_{2})$
\end{proof}
The two collections of informorphisms
$\mathfrak{E} = 
\{ \mathcal{C}\xrightleftharpoons{{\langle{h,g}\rangle}}\mathcal{C}' 
\mid X_{2}\xrightarrow{h}X_{1} \text{ a bijection } \}$
and
$\mathfrak{M} = 
\{ \mathcal{C}\xrightleftharpoons{{\langle{f,h}\rangle}}\mathcal{C}' 
\mid Y_{2}\xleftarrow{h}Y_{1} \text{ a bijection } \}$
form an orthogonal factorization system:
every infomorphism factors as a morphism in $\mathfrak{E}$ followed by a morphism in $\mathfrak{M}$;
the factorization is unique up to unique isomorphism; and
$\mathfrak{E}$ and $\mathfrak{M}$ both contain all isomorphisms and are closed under composition.
\begin{flushleft}
{\fbox{\begin{tabular}{c}
\begin{picture}(0,0)(0,0)
\put(-70,160){\setlength{\unitlength}{0.6pt}\begin{picture}(80,0)(0,0)
\put(80,0){\makebox(0,0){\footnotesize{$\mathcal{C}_{11}$}}}
\put(80,60){\makebox(0,0){\footnotesize{$\mathcal{C}_{12}$}}}
\put(10,60){\makebox(0,0)[r]{\footnotesize{$\mathcal{C}_{22}$}}}
\put(7,0){\makebox(0,0)[r]{\footnotesize{$\mathcal{C}_{21}$}}}
\put(40,60){\makebox(0,0){\footnotesize{$\xrightleftharpoons{{\langle{f,\mathrmit{1}_{Y_{2}}}\rangle}}$}}}
\put(40,0){\makebox(0,0){\footnotesize{$\xrightleftharpoons[{\langle{f,\mathrmit{1}_{Y_{1}}}\rangle}]{}$}}}
\put(87,30){\makebox(0,0)[l]{\scriptsize{${\langle{\mathrmit{1}_{X_{1}},g}\rangle}$}}}
\put(-7,30){\makebox(0,0)[r]{\scriptsize{${\langle{\mathrmit{1}_{X_{2}},g}\rangle}$}}}
\put(-2,30){\makebox(0,0){\large{$\upharpoonleft$}}}
\put(2,30){\makebox(0,0){\large{$\downharpoonright$}}}
\put(78,30){\makebox(0,0){\large{$\upharpoonleft$}}}
\put(82,30){\makebox(0,0){\large{$\downharpoonright$}}}
\put(-10,-30){\makebox(0,0)[l]{\scriptsize{$\mathcal{C}_{21} \doteq g^{-1}(\mathcal{C}_{22})$}}}
\put(-10,-50){\makebox(0,0)[l]{\scriptsize{$\mathcal{C}_{11} \doteq g^{-1}(\mathcal{E}_{12})$}}}
\put(-43,30){\makebox(0,0)[r]{\footnotesize{$\mathrmbf{Cls}(Y_{1})\left\{\rule{0pt}{28pt}\right.$}}}
\put(123,30){\makebox(0,0)[l]{\footnotesize{$\left.\rule{0pt}{24pt}\right\}\mathrmbf{Cls}(Y_{2})$}}}
\end{picture}}
\put(-50,60){\setlength{\unitlength}{0.64pt}\begin{picture}(80,0)(0,0)
\put(-38,60){\makebox(0,0){\footnotesize{$X_{2}$}}}
\put(42,60){\makebox(0,0){\footnotesize{$X_{1}$}}}
\put(2,0){\makebox(0,0){\footnotesize{$Y_{2}$}}}
\put(2,-60){\makebox(0,0){\footnotesize{$Y_{1}$}}}
\put(0,70){\makebox(0,0){\scriptsize{$f$}}}
\put(6,-30){\makebox(0,0)[l]{\scriptsize{$g$}}}
\put(23,33){\makebox(0,0)[l]{\scriptsize{$\models_{12}$}}}
\put(-17,33){\makebox(0,0)[l]{\scriptsize{$\models_{22}$}}}
\put(36,-4){\makebox(0,0)[l]{\scriptsize{$\models_{11}$}}}
\put(-32,-4){\makebox(0,0)[l]{\scriptsize{$\models_{21}$}}}
\put(-26,60){\vector(1,0){56}}
\put(0,-48){\vector(0,1){36}}
\put(-6,9){\line(-2,3){28}}
\put(6,9){\line(2,3){28}}
\qbezier(-9,-52)(-40,-10)(-42,50)
\qbezier(9,-52)(40,-10)(42,50)
\end{picture}}
\end{picture}
\end{tabular}}}
\end{flushleft}
}}



\end{document}